\newtheorem{theorem}{Theorem}
\newtheorem{lemma}{Lemma}
\newtheorem{proposition}{Proposition}
\newtheorem{assumption}{Assumption}
\theoremstyle{definition}
\newtheorem{definition}{Definition}
\newtheorem{remark}{Remark}
\newtheorem{example}{Example}
\newenvironment{continueexample}[1]
  {\begin{example}}
  {\end{example}}
\newenvironment{myassump}[1]
{\begin{assumption}}
  {\end{assumption}}
\DeclareMathOperator*{\argmin}{arg\,min}
\def\blfootnote{\gdef\@thefnmark{}\@footnotetext}
\def\footnoterule{\kern-3\p@
  \hrule \@width 2in \kern 2.6\p@} 
\newcommand{\dtilde}[1]{\tilde{\raisebox{0pt}[0.85\height]{$\tilde{#1}$}}} 
\newcommand{\independent}{\perp\mkern-9.5mu\perp}  
\newcommand{\conv}{\xrightarrow{}}
\newcommand{\convp}{\xrightarrow{{\scriptscriptstyle \mathrm{p}}}}
\newcommand{\eqd}{\overset{d}{=}}
\def\input@path{{Tables/}}
\begin{document}
\author[1]{Filip Obradović\thanks{UCLA, Department of Economics. Email: \href{mailto:obradovicfilip@ucla.edu}{obradovicfilip@ucla.edu}
\\
I am deeply grateful to Charles Manski, Ivan Canay, and Federico Bugni for their guidance and support. I am also thankful to Nemanja Antić, Susan Athey, Eric Auerbach, Denis Chetverikov, Piotr Dworczak, Danil Fedchenko, Joel Horowitz, Matías Martinez, Jana Obradović, Andres Santos, Alexander Torgovitsky, and Gabriel Ziegler for valuable suggestions. I thank ICPSR for providing the data and the participants at the NBER Summer Institute, and seminars and conferences at Bocconi, Booth, Duke, Emory, Mannheim, Notre Dame, Michigan, Northwestern, Penn State, Princeton, UPenn, UCLA, USC, UVic, and Yale for comments. Financial support from the Robert Eisner Memorial Fellowship is gratefully acknowledged. 
}

}

\vspace{-1in}

\title{Identification of Long-Term Treatment Effects via Temporal Links, Observational, and Experimental Data}

\vspace{-1in}

\date{\today}

\maketitle

\begin{spacing}{1.2}
\begin{abstract}

Recent literature proposes combining short-term experimental and long-term observational data to provide alternatives to conventional observational studies for the identification of long-term average treatment effects (LTEs). This paper re-examines the identification problem and uncovers that assumptions restricting \textit{temporal link functions} -- relationships between short-term and mean long-term potential outcomes -- are central in this context. The experimental data serve to \textit{amplify} the identifying power of such assumptions; absent them, the combined data are no more informative than the observational data alone. Plausible inference thus hinges on justifiable restrictions in this class. Motivated by this, I introduce two \textit{treatment response} assumptions that may be defensible based on economic theory or intuition. To utilize them and facilitate future developments, I develop a novel unifying identification framework that computationally produces sharp bounds on the LTE for a general class of temporal link function restrictions and accommodates imperfect experimental compliance -- thereby also extending existing approaches. I illustrate the method by estimating the long-term effects of Head Start participation. The findings indicate that the effects on educational attainment, employment, and criminal involvement are lasting but smaller in magnitude than those established by sibling comparisons.
\end{abstract}
\end{spacing}

\section{Introduction}

Identifying long-term average treatment effects (LTEs) is an important goal in economics and other fields of science. For example, researchers may be interested in the effects of childhood interventions on earnings in adulthood; the impact of early-life conditional cash transfers on employment prospects; or the long-run adverse or protective effects of vaccination. LTEs are also often of interest in private sector research \citep{gupta2019top}.

Nevertheless, identifying LTEs is often challenging in practice. Long-term experimentation is frequently infeasible due to cost or institutional constraints.\footnote{Institutions supporting RCTs in development economics frequently require phase-in designs with staggered rollout of treatment to the whole sample. This limits follow-up for the control group.} Short-term experiments may be more accessible, but alone, they may not reveal outcomes of interest. As a result, researchers often rely on observational data \citep{currie2011human, hoynes2018safety}. However, observational studies critically rely on identifying assumptions that may be challenging to justify. 

This motivates a burgeoning strand of recent literature that seeks credible alternatives to conventional observational studies by combining (i) a long-run observational dataset with nonrandomized treatment assignment and (ii) a short-run experimental dataset in which long-run outcomes are unobserved \citep{ athey2024surrogate,athey2025combining}. However, follow-up work indicates that commonly-used modeling assumptions in this literature may also be challenging to justify in various economic settings, highlighting the need for alternative restrictions \citep{ghassami2022combining, van2023estimating, imbens2024long, park2024bracketing}.

This paper re-examines the identification problem by first characterizing which assumptions can exploit the specific data structure. For identification of the LTE, the experimental data serve only to potentially \textit{amplify} the identifying power of a well-defined class of modeling assumptions, restrictions on \textit{temporal link functions} -- means of long-term potential outcomes conditional on short-term potential outcomes. Absent such restrictions, the combined data are necessarily equally informative about the LTE as the observational data alone. When restrictions on temporal link functions are imposed, however, combining data can provide additional identifying power. Therefore, plausible inference that leverages the experimental data hinges on imposing justifiable restrictions of this type.

Building on this characterization, I introduce novel modeling assumptions within this class that may be justified by economic theory or intuition. Both impose shape restrictions on temporal link functions without constraining the treatment selection in any of the datasets, and therefore constitute treatment response assumptions. The first stipulates that temporal link functions are monotonic. Intuitively, this means that, absent selection into treatment, average long-term outcomes would be monotonic in short-term ones. Monotonicity of conditional means has been widely used in other settings since it may often be defensible on economic or intuitive grounds (see e.g. \citet{manski2000monotone, manski2009identification}, \citet{mogstad2018using}, \citet{torgovitsky2019nonparametric}). The second assumption postulates that the temporal link functions are invariant to treatment. Such invariance is implied by established mediation models, as in \citet{heckman2013understanding, garcia2020quantifying}, or statistical surrogacy (\citet{prentice1989surrogate}, \citet{athey2024surrogate}).

Finally, the paper develops a unifying identification framework that accommodates: (i) a general class of restrictions on temporal link functions and (ii) imperfect compliance in the experiment. The framework delivers sharp bounds on the LTE computationally, by solving optimization problems in which the maintained restrictions enter as constraints. The general sharp characterization is obtained by extending arguments of \citet{beresteanu2012partial} and \citet{chesher2017generalized} to address a new technical challenge -- jointly bounding distribution functions and conditional means of latent variables. 

The optimization-based formulation of the identified set enables direct implementation of the proposed restrictions. Moreover, it facilitates the development of new restrictions on temporal link functions by eliminating the need to derive bounds algebraically or to prove sharpness on a case-by-case basis. The framework also nests existing point-identification results and extends them to allow for imperfect compliance in the experiment, which is of substantial practical relevance. Building on \citet{shi2015simple} and this characterization, I propose a consistent criterion-based estimator for the bounds.

I illustrate the method by estimating the long-term effects of Head Start participation, the largest federally funded early childhood education program in the United States. To do so, I combine data from the Head Start Impact Study, a short-term experiment, and the Child and Young Adult Supplement to the National Longitudinal Survey of Youth 1979 cohort, a longitudinal survey. I find evidence of beneficial program impacts on educational and labor-market outcomes, as well as criminal involvement in adulthood. Head Start is estimated to increase the probability of high school graduation by $1.9$ to $3.2$ percentage points (pp), and decrease the probability of grade repetition by $1.1$ to $5.3$ pp. The program is also estimated to lower the probability of idleness (neither working nor in school) by $1.5$ to $4.6$ pp and criminal involvement by $1.2$ to $4.0$ pp. The results suggest that Head Start has lasting effects, though smaller in magnitude than reported by sibling comparison studies (\citet{deming2009early}). More broadly, they illustrate that the proposed assumptions can yield informative estimated bounds in applied work.

This paper is related to several strands of literature. It contributes to the recent body of work that combines long-term observational and short-term experimental data to identify long-term treatment effects (see also \citet{garcia2020quantifying}, \citet{dynarski2021closing}, \citet{hu2022identification}, \citet{chen2023semiparametric}, \citet{park2024informativeness}, \citet{Aizer2024}), and to the broader literature on data combination (\citet{cross2002regressions}, \citet{molinari2006generalization}, \citet{ridder2007econometrics}, \citet{fan2014identifying}, \citet{d2024partially}). In doing so, it draws on and extends results from the partial identification literature based on random set theory (\citet{galichon2011set}, \citet{beresteanu2012partial}, \citet{molchanov2014applications}, \citet{chesher2017generalized, chesher2020generalized}). The paper also contributes to work on general identification frameworks that operationalize broad classes of assumptions via optimization (\citet{mogstad2018using}, \citet{torgovitsky2019nonparametric}, \citet{russell2021sharp}, \citet{kamat2024identifying}). Finally, it is related to research that combines experimental data with economic theory to identify parameters beyond what the experimental data alone reveal (\citet{todd2006assessing}, \citet{attanasio2012education}, \citet{todd2023best}).

\Cref{sect:setup} introduces the setting. \Cref{sect:roles} details the role of restrictions on temporal link functions and experimental data. \Cref{sect:modeling_assumptions} introduces new restrictions on temporal link functions. \Cref{sect:identification} develops the identification framework. \Cref{sect:app} provides the empirical illustration. \Cref{sect:conclusion} concludes. Appendix \ref{sect:extensions} contains additional discussions; Appendix \ref{sect:proofs} proves the main results; the Supplemental Appendix discusses estimation and proves auxiliary results.

\section{Setting and Basic Assumptions}\label{sect:setup}

I formalize the problem using the standard potential outcomes model. Let $Y(d)\in\mathcal{Y}\subseteq\mathbb{R}$ and $S(d)\in{\mathcal{S}}\subseteq\mathbb{R}^{d_s}$ denote the long-term and short-term potential outcomes under some binary treatment $d\in\{0,1\}$, respectively. Denote the realized treatment by $D\in\{0,1\}$. The observed outcomes are:
\begin{align}
    \begin{split}
        Y &= DY(1)+(1-D)Y(0)\\
        S &= DS(1) +(1-D)S(0).
    \end{split}
\end{align}

Let $X\in\mathcal{X}\subseteq\mathbb{R}^{d_x}$ be a vector of observed covariates. Define the conditional \textit{long-term} average treatment effect (CLTE) $\tau(x)$:
\begin{align}
    \begin{split}
        \tau(x) &= E[Y(1)-Y(0)|X=x].
    \end{split}
\end{align}

The parameter of interest can be the CLTE itself or its weighted averages, such as the average \textit{long-term} treatment effect (LTE) $E[\tau(X)]$. I focus on the former for generality, noting that it is sufficient for identification of the latter when the weights are identified or given. 

\begin{example}{\textit{(Head Start Participation)}}\label{ex:head_start_basic}
In the empirical illustration, $D$ denotes Head Start participation, $S(d)$ is a vector of potential cognitive test scores in childhood, and $Y(d)$ are potential outcomes in adulthood, such as high school degree status or earnings, for treatment $d$.
\end{example}

\subsection{Observed Data}

As in prior work, I maintain that the researcher observes: 1) a short-term experimental dataset; and 2) a long-term observational dataset.\footnote{This setting is increasingly common. See also \citet{garcia2020quantifying}, \citet{ghassami2022combining}, \citet{hu2022identification},  \citet{van2023estimating}, \citet{chen2023semiparametric}, \citet{park2024bracketing, park2024informativeness}, \citet{Aizer2024}, \citet{imbens2024long},  \citet{athey2025combining}.}
The population is partitioned into two subpopulations that are randomly sampled to generate the two datasets. Let $G\in\{O,E\}$ denote the subpopulation indicator, where $G=O$ produces the observational and $G=E$ the experimental data.

Let $Z\in\mathcal{Z}$ denote an exogenous (i.e., randomly assigned) instrument in the experimental dataset that induces individuals into treatment. The identification analysis accommodates bounded $\mathcal{Z}$ with an arbitrary number of support points. Typically, $Z\in\{0,1\}$, representing random assignment to treatment or control groups, which may differ from the realized treatment $D$. More generally, $Z$ may have more than two support points, or even satisfy $Z\in[0,1]$ as in \citet{heckman1999local}. Since the binary case is predominant in practice, I adopt its terminology for expositional convenience. I refer to experiments with 
$P(D=Z|G=E)=1$ as having perfect compliance, and to the remaining cases as exhibiting imperfect compliance.

The short-term experimental dataset reveals $(S,D,X,Z)$, but not the long-term outcome $Y$. The long-term observational dataset reveals $(Y,S,D,X)$, but contains no exogenous instrument $Z$. Note that the data structure does not support the direct use of well-established instrumental variable methods for identification of \textit{long-term} treatment effects, since the outcome of interest $Y$ is never observed together with an instrument $Z$ (e.g. as in \citet{imbens1994identification}, \citet{heckman1999local} and \citet{mogstad2018using}).

\begin{continueexample}{ex:head_start_basic}
The observational dataset is the Child and Young Adult Supplement to the National Longitudinal Survey of Youth 79 Cohort (NLSY79) which reveals $(Y,S,D)$. The experimental dataset is the Head Start Impact Study (HSIS) which reveals $(S,D,Z)$, where $Z=1$ if the individual is assigned to participation in Head Start and $Z=0$ if assigned to non-participation. \citet{puma2010head} explain that some individuals may have $D\neq Z$.
\end{continueexample}

\begin{remark}
Introducing $Z$ in the experiment allows for (but does not require) imperfect compliance, which is of great practical relevance. In contrast, existing methods predominantly assume that $D$ is randomly assigned, which need not hold under imperfect compliance. The identification framework in \Cref{sect:identification} nests these methods and extends them to allow this possibility.
\end{remark}

I maintain the following assumptions throughout the paper.
\begin{myassump}{RA}{(Random Assignment)}
\label{ass:rand_assign}
    $Z\independent (Y(d),S(d))|X,G=E$ for $d\in\{0,1\}$.
\end{myassump}
 
\begin{myassump}{EV}\label{ass:ex_validity}{(Experimental External Validity)} 
$G\independent (Y(d),S(d))|X$ for $d\in\{0,1\}$.    
\end{myassump}

Assumption \ref{ass:rand_assign} is standard in the program evaluation literature. It is satisfied if $Z$ in the experimental data is randomly assigned, conditional on $X$. Assumption \ref{ass:ex_validity} is a key assumption in the data combination literature, linking the two datasets (\citet{ghassami2022combining}, \citet{chen2023semiparametric}, \citet{ park2024bracketing, park2024informativeness}, \citet{athey2025combining}). It states that, conditional on $X$, the subpopulations generating the datasets do not differ in their counterfactual distributions for each $d$. It may be plausible if two datasets are representative of the same population, conditional on $X$. For example, in the empirical illustration, HSIS and NLSY79 are designed to be representative of the U.S. population and pertain to the same treatment. More broadly, a growing body of empirical work relies on related links across datasets in this context; see, for example, \citet{garcia2020quantifying}, \citet{dynarski2021closing}, \citet{hu2022identification}, \citet{Aizer2024}.

It is worth emphasizing what is \textit{not} assumed. Maintained assumptions allow for $D\not\independent (Y(d),S(d))|X,G=g$ for any $g\in\{O,E\}$ and $d\in\{0,1\}$. This is expected in the observational dataset, and in the experimental data when compliance is imperfect. The assumptions also do not require $P(D=1|G=g)\in(0,1)$ for any $g\in\{O,E\}$. Instead, they allow $P(D=1|G=g)\in[0,1]$, which is relevant when a treatment is available only in one dataset, typically in the experiment. This is the case with some ``model'' early childhood intervention programs or novel vaccines.

Under Assumption \ref{ass:ex_validity}, CLTE is invariant to $G$, $E[Y(1)-Y(0)|X=x,G] = E[Y(1)-Y(0)|X=x] = \tau(x)$. Henceforth, I keep conditioning on $X$ implicit. The following analysis should be understood as conditional-on-$X$; I write the parameter of interest $\tau(x)$ as:
\begin{align}
    \tau = E[Y(1)-Y(0)]
\end{align}
and I continue referring to it as the LTE, with the understanding that it represents the CLTE.

\begin{remark}
Quasi-experimental datasets with 
$Z$ satisfying Assumptions \ref{ass:rand_assign} and \ref{ass:ex_validity} may also serve as the experimental dataset. I continue to refer to such data as experimental, following prevailing terminology in the literature.
\end{remark}

\noindent\textbf{Notation:} $\mathcal{H}(\theta)$ denotes the identified set for a parameter $\theta$. I write laws conditional on an event $\mathcal{E}$, $P(\cdot|\mathcal{E},G=g)$, as $P_g(\cdot|\mathcal{E})$ for $g\in\{O,E\}$. Whenever $P_E(\cdot|\mathcal{E}) = P_O(\cdot|\mathcal{E})$, I omit the subscript $g$. This is inherited by their features, e.g. $E_g[\cdot|\mathcal{E}] := E[\cdot|\mathcal{E},G=g]$. If necessary, I specify the random element using subscripts (e.g. $P_{S(d)}$ is the law of $S(d)$). 

\section{Role of Experimental Data}\label{sect:roles}

This section uncovers the role played by the experimental data in identifying $\tau$. Rather than removing the need for additional assumptions, experimental data serve to potentially \textit{amplify} the identifying power of a well-defined class of modeling restrictions. \Cref{sect:modeling_assumptions} leverages this insight to propose assumptions within this class that may be justifiable based on economic theory or intuition. \Cref{sect:identification} develops a general identification framework that enables tractable implementation of various assumptions in the class. Define for $s\in\mathcal{S}$ and $d\in\{0,1\}$:
\begin{align}
    m_d(s) := E[Y(d)|S(d)=s], \quad
    \gamma_d := P_{S(d)}.
\end{align}

\noindent where $P_{S(d)}$ is the marginal law of $S(d)$. I refer to $m_0(s)$ and $m_1(s)$ as \textit{temporal link functions}, since they ``link'' the short-term and long-term potential outcomes. Collect the two functions $m:=(m_0,m_1)\in\mathcal{M}$ and the marginal laws $\gamma := (\gamma_0,\gamma_1) = (P_{S(0)},P_{S(1)})$.\footnote{Let $(\Omega, \mathcal{F},P)$ denote the probability space and $\mathcal{B}$ the Borel $\sigma-$algebra. $\mathcal{M}$ is the set of Borel-measurable functions $\mu:\mathcal{S}\times\mathcal{S}\rightarrow \mathcal{Y}\times\mathcal{Y}$ such that $\mu\circ \varsigma$ is $P$-integrable for some $\mathcal{F}/\mathcal{B}(\mathcal{S}\times\mathcal{S})$-measurable function $\varsigma:\Omega\rightarrow\mathcal{S}\times\mathcal{S}$.} These objects are directly related to the parameter of interest:
\begin{align}\label{eq:identification_identity}
\begin{split}
     \tau &= E[Y(1)-Y(0)]= \int_\mathcal{S} m_1(s) d\gamma_1(s) - \int_\mathcal{S} m_0(s) d\gamma_0(s).
\end{split}
\end{align}

Consider the class of modeling assumptions defined by the following generic restriction.

\begin{myassump}{MA}{(Modeling Assumption)}\label{ass:modeling_generic}
    $m\in\mathcal{M}^A\subseteq\mathcal{M}$ for a known or identified set $\mathcal{M}^A$.
\end{myassump}

Let $\mathcal{H}(\tau)$ be the identified set for $\tau$, that is, the set of all values of $\tau$ compatible with both datasets and Assumptions \ref{ass:rand_assign}, \ref{ass:ex_validity}, and \ref{ass:modeling_generic}. Note that Assumption \ref{ass:modeling_generic} subsumes the trivial case $\mathcal{M}^A = \mathcal{M}$, corresponding to the absence of additional modeling assumptions. Denote by $\mathcal{H}^O(\tau)$ the identified set for $\tau$ under the same assumptions when \textit{only observational data} are used, and by $\subsetneq$ a \textit{strict} subset. The following proposition elucidates the relationship between $\mathcal{H}(\tau)$, $\mathcal{H}^O(\tau)$, and the modeling assumptions represented by $\mathcal{M}^A$.

\begin{theoremEnd}[proof at the end, no link to proof]{proposition}\label{prop:role_experiment}

Let Assumptions \ref{ass:rand_assign}, \ref{ass:ex_validity} hold, and suppose $\mathcal{H}(\tau)\neq\emptyset$. If $\mathcal{H}(\tau) \subsetneq \mathcal{H}^O(\tau)$, then $\mathcal{M}^A\subsetneq\mathcal{M}$. Equivalently, if $\mathcal{M}^A=\mathcal{M}$, then $\mathcal{H}(\tau)= \mathcal{H}^O(\tau)$.
\end{theoremEnd} 

\begin{proofEnd}

 If only Assumptions \ref{ass:rand_assign},  \ref{ass:ex_validity} hold, by \Cref{lem:marginals_y} $i)$, $\mathcal{H}(P_{Y(0)},P_{Y(1)}) = \mathcal{H}^O(P_{Y(0)},P_{Y(1)})$. Then, $\mathcal{H}(\tau) = \mathcal{H}^O(\tau)$, since $\tau$ is a functional of $(P_{Y(0)},P_{Y(1)})$. Therefore, if $\mathcal{H}(\tau) \subsetneq \mathcal{H}^O(\tau)$ and \ref{ass:rand_assign}, \ref{ass:ex_validity} hold, additional assumptions must be imposed. If $\mathcal{H}(\tau)\neq\emptyset$, these assumptions are not refuted. Henceforth, denote by $\mathcal{H}^A(\cdot)$ and $\mathcal{H}(\cdot)$ the identified sets with and without the additional assumptions. Let $\mathcal{H}^{A,O}(\cdot)$ be identified sets with additional assumptions, using only observational data.

Given that $\mathcal{H}^A(\tau) $ and $\mathcal{H}^{O,A}(\tau) $ are images of $T$ over $\mathcal{H}^A(m,\gamma)$ and $\mathcal{H}^{O,A}(m,\gamma) $, respectively, the additional assumptions must directly or indirectly restrict $m$, $\gamma$ or $(m,\gamma)$. Otherwise, $\mathcal{H}^O(m,\gamma) = \mathcal{H}^{A,O}(m,\gamma) $ and $\mathcal{H}(m,\gamma) = \mathcal{H}^{A}(m,\gamma)$, so  $\mathcal{H}^A(\tau) = \mathcal{H}(\tau) = \mathcal{H}^O(\tau) = \mathcal{H}^{A,O}(\tau)$ where the second equality is by \Cref{lem:marginals_y} $i)$. 
 
By way of contradiction, suppose that $\emptyset\neq \mathcal{H}^A(\tau)\subsetneq \mathcal{H}^{A,O}(\tau)$ and that only $\gamma$ is further restricted by the assumptions, or equivalently $\gamma\in\Gamma^A$, for some $\Gamma^A\subsetneq (\mathcal{P}^\mathcal{S})^2$. Then:
 \begin{align}\label{eq:intersection}
 \begin{split}
          \mathcal{H}^A(P_{Y(0)},P_{Y(1)},\gamma_0,\gamma_1) &=\mathcal{H}(P_{Y(0)},P_{Y(1)},\gamma_0,\gamma_1)\cap\left((\mathcal P^\mathcal{Y})^2\times\Gamma^A\right)\\
     &=\mathcal{H}(P_{Y(0)},P_{Y(1)})\times\left(\mathcal{H}(\gamma_0,\gamma_1)\cap\Gamma^A\right)
 \end{split}
 \end{align}

where the first equality is by the fact that only $\gamma$ is restricted by assumption, and the second is by \Cref{lem:marginals_y} $ii)$. It is then immediate that:
\begin{align*}
    \mathcal{H}^A(P_{Y(0)},P_{Y(1)}) =\mathcal{H}(P_{Y(0)},P_{Y(1)}) = \mathcal{H}^O(P_{Y(0)},P_{Y(1)})
\end{align*}
where the first equality is by \eqref{eq:intersection} and the definition of a projection, and the second is by \Cref{lem:marginals_y} $i)$. Thus, $\mathcal{H}^A(\tau)  = \mathcal{H}^O(\tau)$. But since $\mathcal{H}^A(\tau)\subseteq\mathcal{H}^{A,O}(\tau)\subseteq\mathcal{H}^O(\tau)$, then it must also be that $\mathcal{H}^A(\tau)=\mathcal{H}^{A,O}(\tau) $, contradicting $\mathcal{H}^A(\tau)\subsetneq \mathcal{H}^{A,O}(\tau)$. Therefore, the additional assumptions must restrict $m$ or $(m,\gamma)$. In either case, $m\in\mathcal{M}^A\subsetneq\mathcal{M}$.
\end{proofEnd}

\Cref{prop:role_experiment} shows that it is \textit{necessary} to restrict $m$ by assumption in order to gain identifying power from the experimental data. Equivalently, when the temporal link functions $m$ are left unrestricted, incorporating the experimental data provides \textit{no additional identifying power} for $\tau$. Once such restrictions are imposed, the inclusion of experimental data may yield additional identifying power, enabling $\mathcal{H}(\tau) \subsetneq \mathcal{H}^O(\tau)$. The proposition thus isolates the class of assumptions that could be made more informative about $\tau$ by the addition of experimental data. In this sense, adding experimental data \textit{may amplify} the identifying power of restrictions on $m$.

To gain intuition for the result, note that because $S(d)$ is revealed in the observational data whenever $D = d$, the experimental data can \textit{only} provide additional information about the distribution of $S(d)$ for those individuals with $D \neq d$. For these individuals, however, the data impose no restrictions on the relationship between $S(d)$ and the mean $Y(d)$. When this relationship is left unrestricted, additional information about the distribution of $S(d)$ does not translate into additional information about the average $Y(d)$, and thus about $\tau$. In contrast, when the relationship between $S(d)$ and the average $Y(d)$ is restricted by assumption, additional information about the distribution of $S(d)$ can yield further information about the mean of $Y(d)$ and hence about $\tau$.

\begin{remark}\label{rem:sufficient_conditions}
Existing approaches typically do not impose explicit restrictions on $m$. However, their identification arguments ultimately reduce to such restrictions implied by the maintained assumptions. These point-identification results are thus nested within the identification framework here. For example, \citet{athey2025combining} maintain \textit{latent unconfoundedness (LUC)} -- $Y(d)\independent D|S(d),G=O$ for $d\in\{0,1\}$, while the identification result uses its implication $m_d(s) = E_O[Y|S=s,D=d]$ for $d\in\{0,1\}$ and $s\in\mathcal{S}$. For \citet[Theorem 1]{imbens2024long}, let $S_t$ be subvectors of $S$ for $t\in\{1,2,3\}$. Maintained assumptions imply the restriction used for identification: $m_d(s_3,s_2) = h(s_3,s_2,d)$ where $h$ solves $E_O[Y | S_2, S_1, D] = E_O[h(S_3, S_2, D) | S_2, S_1, D]$.
\end{remark}

Two points are worth emphasizing. First, the addition of experimental data need not necessarily amplify the identifying power of restrictions on $m$. In particular, it is possible that $\mathcal{M}^A \subsetneq \mathcal{M}$ while $\mathcal{H}(\tau) = \mathcal{H}^O(\tau)$. Second, restrictions on $m$ may have identifying power for $\tau$ even in the absence of experimental data. Whether either case arises depends on the underlying data distribution and on the specific restriction imposed. These observations highlight that restrictions on $m$ are \textit{central} for identifying $\tau$ in this setting, while the experimental data play an auxiliary role by potentially amplifying their identifying power.

\Cref{prop:selection_assumptions_without_experiment} in Appendix \ref{sect:appendix_lemmas} illustrates these points using the latent unconfoundedness assumption defined in \Cref{rem:sufficient_conditions}, and frequently invoked in this context (see, for example, \citet{hu2022identification}, \citet{park2024informativeness}, and \citet{Aizer2024}). Specifically, the proposition shows that if the data distribution is such that $m_d(s) = E_O[Y \mid S = s, D = d]$ are constant in $s$, both the combined and observational data alone point identify $\tau$ under LUC. In this case, $\mathcal{M}^A \subsetneq \mathcal{M}$ but $\mathcal{H}(\tau) = \mathcal{H}^O(\tau)$, so the experimental data do not amplify the identifying power of LUC. The proposition further shows that for a broad class of empirically relevant data distributions, the observational-data bound $\mathcal{H}^O(\tau)$ under LUC is strictly more informative than the worst-case bounds of \citet{manski1990nonparametric} that would be sharp absent LUC.

\section{Modeling Assumptions}\label{sect:modeling_assumptions}

For the purpose of identifying $\tau$, the experimental data serve only to potentially amplify the identifying power of restrictions imposed on $m$. Consequently, plausible inference hinges on the plausibility of the restrictions within this class, despite the use of experimental data. However, commonly used assumptions implying restrictions on $m$ may be challenging to interpret or justify in economically relevant settings (see e.g. \citet{ghassami2022combining}, \citet{van2023estimating}, \citet{imbens2024long} and \citet{park2024bracketing}). Hence, I explore alternative restrictions within this class that may be defensible based on economic theory or intuition.

\begin{myassump}{LIV}{(Latent Monotone Instrumental Variables)}\label{ass:LMIV}
For every $m\in\mathcal M^A$ and each $d\in\{0,1\}$, $m_d(s)$ is nondecreasing in $s$ with respect to the product order. That is, for all $s,s'\in\mathcal S$ such that $s\leq s'$ (componentwise): $ m_d(s)\leq m_d(s')$.
\end{myassump}

Assumption \ref{ass:LMIV} states that the conditional means of $Y(d)$ are nondecreasing in any individual short-term potential outcome $S(d)$.\footnote{One can also assume that $m_d(s)$ is nonincreasing for any sub-collection of elements $\{s_j\}_{j\in\mathcal{J}}$ of $s$. Results follow directly by defining $\Tilde{S}(d)$ with components $\Tilde{S}_j(d) = \mathbbm{1}[j\in\mathcal J]\left(- S_j(d)\right)+\mathbbm{1}[j\not\in\mathcal J] S_j(d)$ and observing that $E[Y(d)|\Tilde{S}(d)=s]$ satisfies LIV.} Intuitively, this would be plausible if researchers are willing to maintain that long-term outcomes would be monotonic in short-term ones, \textit{absent selection into treatment}.

\setcounter{example}{1}
\begin{example}\label{ex:liv}{\textit{(LIV and Head Start)}}
Suppose first that $S(d)$ consists of a single childhood cognitive test score. Assumption \ref{ass:LMIV} then states that people with a higher \textit{potential} test score $S(d)$ have a weakly higher average of \textit{potential} high school degree status in adulthood $Y(d)$ than people with a lower \textit{potential} test score. In other words, under an exogenously fixed treatment, high school graduation \textit{rates} are nondecreasing in the childhood test score. When $S(d)$ comprises multiple test scores, the assumption asserts that the high school graduation rates are nondecreasing in each individual test score, holding the remaining ones constant.
\end{example}

LIV is related to the monotone instrumental variable (MIV) assumption of \citet{manski2000monotone, manski2009more}. MIV posits the existence of a variable $V\in\mathcal{V}$ such that $E[Y(d)|V=v]$ is nondecreasing in $v\in\mathcal{V}$, where $V$ is observed for \textit{all} individuals. The critical distinction is that the conditioning variables in Assumption \ref{ass:LMIV} are latent counterfactuals. This feature introduces additional complexity, which is addressed by the identification framework.

\begin{myassump}{TI}{(Treatment Invariance)}\label{ass:TI}
    For all $m\in\mathcal{M}^A$ and  $s\in\mathcal{S}$, $m_1(s) = m_0(s)$.
\end{myassump}

The assumption states that the relationship between the \textit{potential} short-term outcomes $S(d)$ and the mean of the long-term \textit{potential} outcomes $Y(d)$ does not vary with the underlying treatment $d$. Intuitively, it means that the treatment affects \textit{average} long-term outcomes only through short-term ones. Importantly, TI follows from theoretical models previously used in empirical research, as outlined by the following example. It is also implied by the statistical surrogacy assumption of \citet{prentice1989surrogate} $Y\independent D|S,G=E$ in the special case of perfect compliance.\footnote{See \Cref{lem:rel_to_surrogacy} \textit{i) in Appendix \ref{sect:appendix_lemmas}}. An important distinction relative to work relying on the surrogacy assumption in this literature is that the comparability assumption $G\independent Y|S$ is \textit{not} imposed here (e.g. see \citet{athey2024surrogate} and \citet{chen2023semiparametric}). Together, comparability and surrogacy imply a selection assumption on $m$ under Assumption \ref{ass:ex_validity}; see \Cref{lem:rel_to_surrogacy} \textit{vi)}.}

\begin{example}\label{ex:TI_head_start}{\textit{(TI and Head Start)}}
Consider the following separable model:
\begin{align}\label{eq:separable_model}
    Y(d) = \phi_d(S(d))+\varepsilon_d = \phi(S(d))+\varepsilon_d, \enskip \varepsilon_d\sim\varepsilon, \enskip\varepsilon_{d'}\independent S(d), \forall d,d'\in\{0,1\}
\end{align}
where $S(d)$ is a vector of short-term potential outcomes, including test scores and measures of non-cognitive skills. $S(d)$ are inputs in the production function $\phi_d$ for $Y(d)$. The production function $\phi_d$ and the distributions of unobservables $\varepsilon_d$ do not depend on Head Start participation $d$. Therefore, $E[Y(d)|S(d)=s] = \phi(s)+E[\varepsilon]$ which is invariant to $d$, so TI is implied by the model. Researchers may thus use TI whenever they find the model plausible. \citet{garcia2020quantifying} argue the plausibility of such a model in the context of identifying the long-term effects of an early childhood program.
\end{example}

Note that neither of the two assumptions imposes restrictions on selection, i.e., how individuals choose their treatment. As such, they represent \textit{treatment response assumptions}, whereas related preceding work predominantly relies on selection assumptions.

\section{Identification Framework}\label{sect:identification}

The preceding sections argue that restrictions on temporal link functions $m$ are central, and that experimental data can make them more informative. Here, I introduce a novel framework that enables the identification of $\tau$ under a general class of restrictions on $m$.

Define the functional $T:\mathcal{M}\times\mathcal{P}^\mathcal{S}\times\mathcal{P}^\mathcal{S}\rightarrow\Bar{\mathbb{R}}$, where $\mathcal{P}^\mathcal{S}$ is the set of distribution functions supported on $\mathcal{S}$:
\begin{align}\label{eq:identification_functional}
    T(m,\gamma) = \int_\mathcal{S} m_1(s) d\gamma_{1}(s) - \int_\mathcal{S} m_0(s) d\gamma_{0}(s).
\end{align}

Recalling that $E[Y(d)] =  \int_\mathcal{S} m_d(s) d\gamma_{d}(s)$ by \eqref{eq:identification_identity}, $T$ produces the corresponding value of $\tau$ for a given $(m,\gamma)$. One can then incorporate assumptions on $m$ to identify $\tau$ in two steps. First, find all $(m,\gamma)$ consistent with the data, Assumptions \ref{ass:rand_assign}, \ref{ass:ex_validity}, and any restriction \ref{ass:modeling_generic} on $m$. This yields their identified set $\mathcal{H}(m,\gamma)$. Then, the identified set $\mathcal{H}(\tau)$ is the set of all values of $\tau$ consistent with the feasible $(m,\gamma)$, or: 
\begin{align}\label{eq:identified_set_tau_definition}
    \begin{split}
        \mathcal{H}(\tau) &:= \Big\{T(m,\gamma): \text{$(m,\gamma)\in\mathcal{H}(m,\gamma)$} \Big\}.
    \end{split}
\end{align}

\subsection[Identifying (m,γ)]{Identifying $(m,\gamma)$}\label{sect:m_gamma}

This section shows how to represent all information about $(m,\gamma)$. $(m,\gamma)$ are features of \textit{latent} random variables $Y(d)$ and $S(d)$. I exploit this fact to construct $\mathcal{H}(m,\gamma)$ for a large class of restrictions on $m$. 

Because at least some potential outcomes are unobserved for each unit, the data and maintained assumptions are consistent with a set of \textit{random vectors} $(S(0),S(1),Y(0),Y(1))$. Intuitively, let $\mathcal{Q}$ be the set of all such random vectors that are compatible with the data, and Assumptions \ref{ass:rand_assign} and \ref{ass:ex_validity}. To obtain $\mathcal{H}(m,\gamma) $, it suffices to collect the corresponding pairs $(m,\gamma)$ that additionally satisfy the modeling restriction $m\in\mathcal{M}^A$. By definition:

\begin{align}\label{eq:id_set_intuitive}
    \mathcal{H}(m,\gamma) = \left\{(m,\gamma):\begin{array}{cc}
         \overbrace{\text{$m\in\mathcal{M}^A$, }}^{\text{\textit{Modeling assumption}}}\overbrace{\text{$\exists (S(0),S(1),Y(0),Y(1))\in\mathcal{Q}$}}^{\text{\textit{Data + Assumptions RA/EV}}}  \\
         \underbrace{\text{$\forall d\in\{0,1\}:$ $\gamma_d\eqd S(d)$, }\text{$m_d(S(d)) = E[Y(d)|S(d)]$ a.s.}}_{\text{\textit{$(m,\gamma)$ correspond to $S(d)$ and $Y(d)$}}}
    \end{array}\right\}.
\end{align}

Random set theory provides a convenient way to deliver a \textit{sharp} characterization of $(m,\gamma)$ based on this intuition. It first allows one to characterize $\mathcal{Q}$ and, consequently, $\mathcal{H}(m,\gamma)$. It then yields an \textit{equivalent} representation of \eqref{eq:id_set_intuitive} via moment inequalities that exhaust \textit{all} information contained in the data and the maintained assumptions. To formalize the argument, I introduce the necessary basic definitions specialized to finite-dimensional Euclidean spaces. I henceforth maintain that all random elements are defined on a non-atomic probability space $(\Omega, \mathcal{F},P)$.\footnote{That is, for any $A\in\mathcal{F}$ with positive measure there exists a measurable $B\subset A$ such that $0<P(B)<P(A)$.}

\noindent\textbf{Notation}: $B$ and $K$ denote sets. $\mathcal{K}(B)$, and $\mathcal{C}(B)$ denote the families of all compact, and closed subsets of the set $B$, respectively. $co(B)$ is the closed convex hull of the set $B$.

\begin{definition}
    A measurable map $\mathbf{R}:\Omega\rightarrow \mathcal{C}(\mathbb{R}^{d_R})$ is called a \textit{random (closed) set}.\footnote{$\mathbf{R}$ is measurable if for every compact set $K\in\mathcal{K}(\mathbb{R}^{d_R})$: $\{\omega\in\Omega: \mathbf{R}(\omega)\cap K\neq \emptyset\}\in\mathcal{F}$. The codomain $\mathcal{C}(\mathbb{R}^{d_R})$ is equipped with the $\sigma-$algebra generated by the families of sets $\{B\in \mathcal{C}(\mathbb{R}^{d_R}):B\cap K\neq \emptyset\}$ over $K\in\mathcal{K}(\mathbb{R}^{d_R})$.} 
\end{definition}

\begin{definition}
    A random vector $R:\Omega\rightarrow \mathbb{R}^{d_R}$ such that $R\in\mathbf{R}$ a.s. is called a \textit{(measurable) selection} of $\mathbf{R}$. $Sel(\mathbf{R})$ and $Sel^1(\mathbf{R})$ are the sets of all selections, and all integrable selections of $\mathbf{R}$, respectively. 
\end{definition}

Define the following closed random sets for $d\in\{0,1\}$:
\begin{align}
           \mathbf{Y}_d := \begin{cases}
            \{Y\}, \text{ if $(D,G)=(d,O)$}\\
            \mathcal{Y}, \text{ otherwise}
        \end{cases}, \enskip
        \mathbf{S}_d := \begin{cases}
            \{S\}, \text{ if $(D,G)\in\{(d,E),(d,O)\}$}\\
            \mathcal{S}, \text{ otherwise}
        \end{cases}.
\end{align}
By construction, random sets $\mathbf{Y}_d $ and $\mathbf{S}_d $ summarize all information about $Y(d)$ and $S(d)$ contained in \textit{the data}, respectively. As \citet{beresteanu2012partial} explain, \textit{all} information in the data about $(S(0),S(1),Y(0),Y(1))$ can be represented by stating that they can be any selection of the corresponding random sets. Additional assumptions, such as Assumptions \ref{ass:rand_assign} and \ref{ass:ex_validity}, may be imposed by further restricting the admissible selections, which are represented by the set $\mathcal{Q}$ in \eqref{eq:id_set_intuitive}. The following lemma formalizes $\mathcal{Q}$ and uses it to equivalently characterize $\mathcal{H}(m,\gamma)$.

\begin{theoremEnd}[proof at the end, no link to proof]{lemma}\label{thm:sharp_set}
    Let Assumptions \ref{ass:rand_assign},  \ref{ass:ex_validity}, and \ref{ass:modeling_generic} hold. The identified set for $(m,\gamma)$ is:
    \begin{align}\label{eq:joint_conditions}
    \begin{split}
    \mathcal{H}(m,\gamma) = \left\{\begin{array}{ll} (m,\gamma)\in\mathcal{M}^A\times (\mathcal{P}^{\mathcal{S}})^2:
          \forall d\in\{0,1\},  \enskip  \exists S(d)\in Sel(\mathbf{S}_d)\cap I,\\
     \exists Y(d)\in Sel(\mathbf{Y}_d), \enskip \gamma_d \eqd S(d),\enskip m_d(S(d)) = E_O[Y(d)|S(d)] \text{ a.s.}
\end{array}\right\}.
    \end{split}
    \end{align}
    where $I$ is the set of random elements $E_1\in\mathcal{S}$ such that $E_1\independent G$ and $E_1\independent Z|G=E$.
\end{theoremEnd}

\begin{proofEnd}
Recall by \eqref{eq:tildez} that $ \Tilde{Z} = \mathbbm{1}[G=E] Z+\mathbbm{1}[G=O](\sup\mathcal{Z}+1)\in \Tilde{\mathcal{Z}}$. Note that $\Tilde{Z} =Z$ when $G=E$ and $\Tilde{Z}$ equals a distinct constant when $G=O$. Therefore, Assumptions \ref{ass:rand_assign} and \ref{ass:ex_validity} hold if and only if $\Tilde{Z}\independent (Y(d),S(d))$ for all $d\in\{0,1\}$. Let $\Tilde{I}$ be the set of random elements $(E_1,E_2,E_3)$ such that $(E_1,E_2,E_3)\in \mathcal{Y}\times\mathcal{S}\times\Tilde{\mathcal{Z}}$ and $(E_1,E_2)\independent E_3$, i.e. that satisfy the two assumptions. Define the random set:
    \begin{align}
        (\mathbf{Y}_0,\mathbf{S}_0,\mathbf{Y}_1,\mathbf{S}_1) :=\begin{cases}
            \{(Y,S)\}\times\mathcal{Y}\times\mathcal{S}, \text{ if $(D,G)=(0,O)$}\\
            \mathcal{Y}\times\mathcal{S}\times\{(Y,S)\}, \text{ if $(D,G)=(1,O)$}\\
            \mathcal{Y}\times\{S\}\times\mathcal{Y}\times\mathcal{S}, \text{ if $(D,G)=(0,E)$}\\
            \mathcal{Y}\times\mathcal{S}\times\mathcal{Y}\times\{S\}, \text{ if $(D,G)=(1,E)$}\\
        \mathcal{Y}\times\mathcal{S}\times\mathcal{Y}\times\mathcal{S}, \text{ otherwise}
        \end{cases}.
    \end{align}

As in the proof of \citet[Proposition 2.3]{beresteanu2012partial}, by definition of $(\mathbf{Y}_0,\mathbf{S}_0,\mathbf{Y}_1,\mathbf{S}_1)$, all information on $(Y(0),S(0),Y(1),S(1))$ in the observed data can be summarized by $(Y(0),S(0),Y(1),S(1))\in Sel((\mathbf{Y}_0,\mathbf{S}_0,\mathbf{Y}_1,\mathbf{S}_1))$. Also, by definition of the random set, $(\mathbf{Y}_0,\mathbf{S}_0,\mathbf{Y}_1,\mathbf{S}_1) =(\mathbf{Y}_0,\mathbf{S}_0)\times(\mathbf{Y}_1,\mathbf{S}_1)$, where for $d\in\{0,1\}$:
    \begin{align}
        (\mathbf{Y}_d,\mathbf{S}_d) =\begin{cases}
            \{(Y,S)\}, \text{ if $(D,G)=(d,O)$}\\
                \mathcal{Y}\times\{S\}, \text{ if $(D,G)=(d,E)$}\\
            \mathcal{Y}\times\mathcal{S}, \text{ otherwise}
        \end{cases}.
    \end{align}

By applying \Cref{lem:rectangular_rset} twice, $Sel((\mathbf{Y}_0,\mathbf{S}_0,\mathbf{Y}_1,\mathbf{S}_1)) =Sel((\mathbf{Y}_0,\mathbf{S}_0))\times Sel((\mathbf{Y}_1,\mathbf{S}_1))=Sel(\mathbf{Y}_0)\times Sel(\mathbf{S}_0)\times Sel(\mathbf{Y}_1)\times Sel(\mathbf{S}_1)$. All information  about $(Y(0),S(0),Y(1),S(1))$ in the data, Assumptions \ref{ass:rand_assign} \ref{ass:ex_validity} and $E[|Y(d)|]<\infty$ for $d\in\{0,1\}$ can thus equivalently be expressed as $(Y(d),S(d),\Tilde{Z})\in Sel(\mathbf{Y}_d)\times Sel (\mathbf{S}_d)\times\{\Tilde{Z}\}\cap \Tilde{I}$ for $d\in\{0,1\}$. If Assumptions \ref{ass:rand_assign} and \ref{ass:ex_validity} hold, the intersection is non-empty. Let $\Bar{I}$ be the set of random elements $(E_1,E_2)$ such that $(E_1,E_2)\in \mathcal{S}\times\Tilde{\mathcal{Z}}$ and $E_1\independent E_2$. The set of $(m,\gamma)$ consistent with the data and the two assumptions follows by definition as:
\begin{align}
\begin{split}
    &\mathcal{H}^{EV/RA}(m,\gamma) \\
    &:= \left\{\begin{array}{ll} (m,\gamma)\in\mathcal{M}\times \left(\mathcal{P}^{\mathcal{S}}\right)^2: \forall d\in\{0,1\}, \enskip \exists(\upsilon_d,\varsigma_d,\Tilde{Z})\in Sel(\mathbf{Y}_d)\times Sel (\mathbf{S}_d)\times\{\Tilde{Z}\}\cap \Tilde{I},\\
    \gamma_d \eqd \varsigma_d, \enskip m_d(\varsigma_d) = E[\upsilon_d|\varsigma_d] \text{ a.s.}
    \end{array}
\right\}\\
&= \left\{\begin{array}{ll} (m,\gamma)\in\mathcal{M}\times \left(\mathcal{P}^{\mathcal{S}}\right)^2: \forall d\in\{0,1\}, \enskip \exists(\varsigma_d,\Tilde{Z})\in Sel((\mathbf{S}_d,              \Tilde{Z}))\cap \Bar{I},\\
    \exists\upsilon_d\in Sel(\mathbf{Y}_d),\enskip (\upsilon_d,\varsigma_d)\independent \Tilde{Z}, \enskip \gamma_d \eqd \varsigma_d, \enskip m_d(\varsigma_d) = E[\upsilon_d|\varsigma_d] \text{ a.s.}
\end{array}\right\}\\
&= \left\{\begin{array}{ll} (m,\gamma)\in\mathcal{M}\times \left(\mathcal{P}^{\mathcal{S}}\right)^2: \forall d\in\{0,1\}, \enskip \exists(\varsigma_d,\Tilde{Z})\in Sel((\mathbf{S}_d,              \Tilde{Z}))\cap \Bar{I},\\
    \exists\upsilon_d\in Sel(\mathbf{Y}_d),\enskip (\upsilon_d,\varsigma_d)\independent \Tilde{Z}, \enskip \gamma_d \eqd \varsigma_d, \enskip m_d(\varsigma_d) = E_O[\upsilon_d|\varsigma_d] \text{ a.s.}
\end{array}\right\}\\
&= \left\{\begin{array}{ll} (m,\gamma)\in\mathcal{M}\times \left(\mathcal{P}^{\mathcal{S}}\right)^2: \forall d\in\{0,1\}, \enskip \exists(\varsigma_d,\Tilde{Z})\in Sel((\mathbf{S}_d,              \Tilde{Z}))\cap  \Bar{I},\\
    \exists\upsilon_d\in Sel(\mathbf{Y}_d), \enskip \gamma_d \eqd \varsigma_d, \enskip m_d(\varsigma_d) = E_O[\upsilon_d|\varsigma_d] \text{ a.s.}
\end{array}\right\}\\
\end{split}
\end{align}
where the second equality is by \Cref{lem:rectangular_rset}, and the third is by $(\upsilon_d,\varsigma_d)\independent \Tilde{Z}$ and the fourth is by \Cref{lem:removing_joint_independence}. It only remains to impose Assumption \ref{ass:modeling_generic}. Then, the identified set is:
\begin{align}
    \begin{split}
    \mathcal{H}(m,\gamma) &=  \mathcal{H}^{EV/RA}(m,\gamma)\cap (\mathcal{M}^A\times (\mathcal{P}^{\mathcal{S}})^2)\\
       &= \left\{\begin{array}{ll} (m,\gamma)\in\mathcal{M}^A\times (\mathcal{P}^{\mathcal{S}})^2:\enskip \forall d\in\{0,1\}, \enskip \exists(\varsigma_d,\Tilde{Z})\in Sel((\mathbf{S}_d,              \Tilde{Z}))\cap \Bar{I},\\
   \exists\upsilon_d\in Sel(\mathbf{Y}_d),\enskip\gamma_d \eqd \varsigma_d, \enskip m_d(\varsigma_d) = E_O[\upsilon_d|\varsigma_d] \text{ a.s.}
\end{array}\right\}\\
&= \left\{\begin{array}{ll} (m,\gamma)\in\mathcal{M}^A\times (\mathcal{P}^{\mathcal{S}})^2:\enskip \forall d\in\{0,1\}, \enskip \exists\varsigma_d\in Sel(\mathbf{S}_d)\cap I,\\
   \exists\upsilon_d\in Sel(\mathbf{Y}_d),\enskip\gamma_d \eqd \varsigma_d, \enskip m_d(\varsigma_d) = E_O[\upsilon_d|\varsigma_d] \text{ a.s.}
\end{array}\right\}.
\end{split}
\end{align}

where the first equality is by observation and the second is since $\varsigma_d\in Sel(\mathbf{S}_d)\cap I$ can be equivalently stated as $(\varsigma_d,\Tilde{Z})\in Sel(\mathbf{S}_d,\Tilde{Z})\cap \Bar{I}$. 

Next note that for every $(m,\gamma)\in \mathcal{H}(m,\gamma)$, there exist $(\upsilon_0,\varsigma_0,\upsilon_1,\varsigma_1)$ that generate them and that are consistent with the data, modeling assumption, Assumptions \ref{ass:rand_assign} and \ref{ass:ex_validity}. Therefore, $\mathcal{H}(m,\gamma)$ is sharp.
\end{proofEnd}

The lemma exploits the fact that $Y$ is observed only in the observational sample to eliminate restrictions on $(m,\gamma)$ imposed by Assumptions \ref{ass:rand_assign} and \ref{ass:ex_validity} that are redundant in this setting. In particular, the independence restrictions that remain relevant are $S(d)\independent G$ and $S(d)\independent Z|G=E$. Moreover, only observational data impose restrictions on $m$ directly, as reflected by $m_d(S(d)) = E_O[Y(d)|S(d)]$. These observations enable the following sharp characterization of $\mathcal{H}(m,\gamma)$ via moment inequalities identified by the data.

\begin{theoremEnd}[proof at the end, no link to proof]{theorem}\label{thm:tractable_set}
    Let Assumptions \ref{ass:rand_assign},  \ref{ass:ex_validity} and \ref{ass:modeling_generic} hold. Suppose that $E[|Y(d)|]<\infty$ for $d\in\{0,1\}$. The identified set for $(m,\gamma)$ is:
\begin{align}\label{eq:joint_characterization}
\begin{split}
        \mathcal{H}(m,\gamma)   = \left\{\begin{array}{l}
        (m,\gamma)\in\mathcal{M}^A\times(\mathcal{P}^{\mathcal{S}})^2: \text{$\forall d\in\{0,1\}$, $\forall B\in\mathcal{C}(\mathcal{S})$, } \\
         \text{$\gamma_d(B)\geq \max\left(ess\sup_{Z}P_E(S\in B,D=d|Z),P_O(S\in B,D=d) \right)$,}\\
          \text{$\forall u\in\{-1,1\}$: $um_d(s)\leq u\mu_d(s)\pi_{\gamma_d}(s)+ h_{co(\mathcal{Y})}(u)(1-\pi_{\gamma_d}(s))$ $ \gamma_d-$a.e.}
    \end{array}\right\}
\end{split}
\end{align}
where $h_{co(\mathcal{Y})}(u) := \sup_{y\in co(\mathcal{Y})} u y$, $\mu_d(s) := E_O[Y|S=s,D=d]$, and $\pi_{\gamma_d} := d P_O(S,D=d)/d\gamma_d$. If a collection of sets $\mathfrak{C}$ is a core determining class for the containment functional of $\mathbf{S}_d$, then the condition $\forall B\in\mathcal{C}(\mathcal{S})$ can be replaced with $\forall B\in\mathfrak{C}$.
\end{theoremEnd}

\begin{proofEnd}

Recall that $\Bar{I}$ is the set of random elements $(E_1,E_2)$ such that $(E_1,E_2)\in \mathcal{S}\times\Tilde{\mathcal{Z}}$ and $E_1\independent E_2$. Likewise $I$ is defined by \Cref{thm:sharp_set} and $\Tilde{Z}$ by \eqref{eq:tildez}. The proof then proceeds through a series of steps.  I first show that $\mathcal{H}(m,\gamma) = \Tilde{\mathcal{H}}(m,\gamma)$ where:  
    \begin{align}\label{eq:intermediate_id_set_1}
 \begin{split}
    &\Tilde{\mathcal{H}}(m,\gamma) \\
&:=\left\{\begin{array}{l}
(m,\gamma)\in\mathcal{M}^A\times (\mathcal{P}^{\mathcal{S}})^2: \forall d\in\{0,1\}, \text{ $\exists (\varsigma_d,\Tilde{Z})\in Sel((\mathbf{S}_d,\Tilde{Z}))\cap \Bar{I}$, $ \gamma_d\eqd \varsigma_d$,} \\
 \text{ $\forall u\in\{-1,1\}:\enskip u m_d(s)\leq u\mu_d(s)\pi_{\gamma_d}(s)+ h_{co(\mathcal{Y})}(u)(1-\pi_{\gamma_d}(s)),\text{  $\gamma_d-$a.e.}$} \end{array}\right\}
\end{split}
\end{align}

I then show that $\Tilde{\mathcal{H}}(m,\gamma)$ is equivalent to \eqref{eq:joint_characterization}. Observing that $\varsigma_d\in Sel(\mathbf{S}_d)\cap I$ can be equivalently stated as $(\varsigma_d,\Tilde{Z})\in Sel(\mathbf{S}_d,\Tilde{Z})\cap \Bar{I}$, by \Cref{thm:sharp_set}:
\begin{align}
  \mathcal{H}(m,\gamma) &= \left\{\begin{array}{ll} (m,\gamma)\in\mathcal{M}^A\times (\mathcal{P}^{\mathcal{S}})^2:
          \forall d\in\{0,1\},  \enskip  \exists(\varsigma_d,\Tilde{Z})\in Sel((\mathbf{S}_d,\Tilde{Z}))\cap \Bar{I},\\
     \exists\upsilon_d\in Sel(\mathbf{Y}_d), \enskip \gamma_d \eqd \varsigma_d,\enskip m_d(\varsigma_d) = E_O[\upsilon_d|\varsigma_d] \text{ a.s.}
\end{array}\right\}
\end{align}

noting that $\mathcal M^A$ is such that for any $(m,\gamma)$ under consideration, each $m_d$ must be $\gamma_d$-integrable since $E[|Y(d)|]<\infty$ implies
$\int |m_d|d\gamma_d = E[|E[Y(d)|S(d)]|]\leq E[|Y(d)|]<\infty$. It is immediate that then also:
\begin{align}\label{eq:intermediate_id_set}
  \mathcal{H}(m,\gamma) &= \left\{\begin{array}{ll} (m,\gamma)\in\mathcal{M}^A\times (\mathcal{P}^{\mathcal{S}})^2:
          \forall d\in\{0,1\},  \enskip  \exists(\varsigma_d,\Tilde{Z})\in Sel((\mathbf{S}_d,\Tilde{Z}))\cap \Bar{I},\\
     \exists\upsilon_d\in Sel^1(\mathbf{Y}_d), \enskip \gamma_d \eqd \varsigma_d,\enskip m_d(\varsigma_d) = E_O[\upsilon_d|\varsigma_d] \text{ a.s.}
\end{array}\right\}.
\end{align}
\underline{$\mathcal{H}(m,\gamma) = \Tilde{\mathcal{H}}(m,\gamma)$}

To show $\mathcal{H}(m,\gamma)\subseteq \Tilde{\mathcal{H}}(m,\gamma)$, fix $(m,\gamma)\in\mathcal{H}(m,\gamma)$ and $d\in\{0,1\}$. Then there exist $(\varsigma_d,\Tilde Z)\in Sel((\mathbf S_d,\Tilde Z))\cap \Bar I$ and $\upsilon_d\in Sel^1(\mathbf Y_d)$ such that $\gamma_d\eqd \varsigma_d$ and $m_d(\varsigma_d)=E_O[\upsilon_d\mid \varsigma_d]$ a.s.
 Let $\sigma(\varsigma_d|G=O)$ be the sub-$\sigma$-algebra generated by $\varsigma_d$ given $\{\omega\in\Omega:G=O\}$. Let $\mathbb{E}_O[\mathbf{Y}_d|\varsigma_d]:= cl\{E_O[\upsilon_d|\varsigma_d]: \upsilon_d\in Sel^1(\mathbf{Y}_d)\}$, where the closure is taken in $L^1$ space of all $\sigma(\varsigma_d|G=O)$-measurable functions. $\mathbb{E}_O[\mathbf{Y}_d|\varsigma_d]$ exists, is a unique random set, and has at least one integrable selection (\citet[Theorem 2.1.71]{molchanov2017theory}).
By definition of $\mathbb{E}_O[\mathbf{Y}_d|\varsigma_d]$, it is immediate that:
\begin{align}\label{eq:aumann_equivalence_proof}
\begin{split}
     \exists \upsilon_d\in Sel^1(\mathbf{Y}_d): \enskip m_d(\varsigma_d)=E_O[\upsilon_d|\varsigma_d] \text{ a.s.} &\Rightarrow m_d(\varsigma_d)\in \mathbb{E}_O[\mathbf{Y}_d|\varsigma_d] \text{ a.s.}
\end{split}
\end{align}

By assumption, the probability space is non-atomic. By \Cref{lem:prob_space}, $P$ has no atoms over $\sigma(\varsigma_d|G=O)$ for any measurable selection $\varsigma_d$. Since $E[|Y(d)|]<\infty$ for all $d\in\{0,1\}$, $\mathbf{Y}_d$ is integrable. Thus, $\mathbb{E}_O[\mathbf{Y}_d|\varsigma_d]$ is almost surely convex and equal to $\mathbb{E}_O[co(\mathbf{Y}_d)|\varsigma_d]$ (\citet[Theorem 2.1.77]{molchanov2017theory}). Therefore, $h_{\mathbb{E}_O[\mathbf{Y}_d|\varsigma_d]}(u) = h_{\mathbb{E}_O[co(\mathbf{Y}_d)|\varsigma_d]}(u)$ a.s. for all $u\in\mathbb{R}$ by definition of the support function $h$. By $\mathbb{E}_O[co(\mathbf{Y}_d)|\varsigma_d]=\mathbb{E}_O[\mathbf{Y}_d|\varsigma_d]$ and integrability of the latter, the former set is also integrable. It then follows that $h_{\mathbb{E}_O[co(\mathbf{Y}_d)|\varsigma_d]}(u) = E_O[h_{co(\mathbf{Y}_d)}(u)|\varsigma_d]$ a.s. for all $u\in\mathbb{R}$ (\citet[Theorem 2.1.72]{molchanov2017theory}). Hence, recalling that $\mathbb{E}_O[co(\mathbf{Y}_d)|\varsigma_d]=\mathbb{E}_O[\mathbf{Y}_d|\varsigma_d]$, also $h_{\mathbb{E}_O[\mathbf{Y}_d|\varsigma_d]}(u) =h_{\mathbb{E}_O[co(\mathbf{Y}_d)|\varsigma_d]}(u)= E_O[h_{co(\mathbf{Y}_d)}(u)|\varsigma_d]$ a.s. for all $u\in\mathbb{R}$. Then:
\begin{align}\label{eq:md_equivalent}
\begin{split}
   m_d(\varsigma_d)\in\mathbb{E}_O[\mathbf{Y}_d|\varsigma_d] \text{ a.s.}
     \Leftrightarrow&\forall u\in\{-1,1\}:\enskip u m_d(\varsigma_d)\leq h_{\mathbb{E}_O[\mathbf{Y}_d|\varsigma_d]}(u) \text{ a.s.}\\
  \Leftrightarrow&\forall u\in\{-1,1\}:\enskip u m_d(\varsigma_d)\leq E_O[h_{co(\mathbf{Y}_d)}(u)|\varsigma_d] \text{ a.s.}
\end{split}
\end{align}
where the first line is by \citet[Theorem 13.1]{rockafellar1970convex} and almost sure convexity of $\mathbb{E}_O[\mathbf{Y}_d|\varsigma_d]$, and the second is by $h_{\mathbb{E}_O[\mathbf{Y}_d|\varsigma_d]}(u) = E_O[h_{co(\mathbf{Y}_d)}(u)|\varsigma_d]$ a.s. for all $u\in\mathbb{R}$. Moreover:
    \begin{align}\label{eq:support_lie}
        \begin{split}
            E_O[h_{co(\mathbf{Y}_d)}(u)|\varsigma_d] &= E_O[h_{co(\mathbf{Y}_d)}(u)|\varsigma_d,D=d]P_O(D=d|\varsigma_d)\\
            &+ E_O[h_{co(\mathbf{Y}_d)}(u)|\varsigma_d,D\neq d]P_O(D\neq d|\varsigma_d)\\
            &= uE_O[Y|\varsigma_d,D=d]P_O(D=d|\varsigma_d)+ h_{co(\mathcal{Y})}(u)P_O(D\neq d|\varsigma_d)
                 \\
             &= uE_O[Y|S,D=d]P_O(D=d|\varsigma_d)+ h_{co(\mathcal{Y})}(u)P_O(D\neq d|\varsigma_d)
                 \\
              &= u\mu_d(\varsigma_d)P_O(D=d|\varsigma_d)+ h_{co(\mathcal{Y})}(u)P_O(D\neq d|\varsigma_d)\\
              &= u\mu_d(\varsigma_d)\pi_{\gamma_d}(\varsigma_d)+ h_{co(\mathcal{Y})}(u)(1-\pi_{\gamma_d}(\varsigma_d))
        \end{split}
    \end{align}
where the first equality is by LIE. The second follows because $co(\mathbf{Y}_d) = \{Y\}$ whenever $D=d$, $h_{\{Y\}}(u) = u Y$, and $co(\mathbf{Y}_d) = co(\mathcal{Y})$ when $D\neq d$. The third is by observing that $P_O(\varsigma_d = S|D=d)=1$ since $\varsigma_d \in Sel(\mathbf{S}_d)$ and $\mathbf{S}_d = \{S\}$ when $D=d$. The fourth is by definition of $\mu_d$ and $P_O(\varsigma_d = S|D=d)=1$. The final equality is by \Cref{lem:prop_score}. Then observe that:
    \begin{align}\label{eq:support_lie_2}
        \begin{split}
            &\forall u\in\{-1,1\}:\enskip u m_d(\varsigma_d)\leq E_O[h_{co(\mathbf{Y}_d)}(u)|\varsigma_d] \text{ a.s.}\\
             \Leftrightarrow&\forall u\in\{-1,1\}:\enskip u m_d(\varsigma_d)\leq u\mu_d(\varsigma_d)\pi_{\gamma_d}(\varsigma_d)+ h_{co(\mathcal{Y})}(u)(1-\pi_{\gamma_d}(\varsigma_d))\text{ a.s.}\\
           \Leftrightarrow&\forall u\in\{-1,1\}:\enskip u m_d(s)\leq u\mu_d(s)\pi_{\gamma_d}(s)+ h_{co(\mathcal{Y})}(u)(1-\pi_{\gamma_d}(s))\text{  $\gamma_d-$a.e.}
        \end{split}
    \end{align}
    where the second line follows by \eqref{eq:support_lie} and the third by $\varsigma_d\eqd \gamma_d$. Since $\varsigma_d$ was an arbitrary selection such that $(\varsigma_d,\Tilde{Z})\in Sel((\mathbf{S}_d,\Tilde{Z}))\cap \Bar{I}$ and $\varsigma_d\eqd \gamma_d$ for any $d\in\{0,1\}$, by \eqref{eq:aumann_equivalence_proof}, \eqref{eq:md_equivalent} and \eqref{eq:support_lie_2} then $\mathcal{H}(m,\gamma)\subseteq \Tilde{\mathcal{H}}(m,\gamma)$.

    For the converse, for any $d\in\{0,1\}$ fix any $\varsigma_d$ such that $(\varsigma_d,\Tilde{Z})\in Sel(\mathbf{S}_d)\times\{\Tilde{Z}\}\cap \Bar{I}$, letting $\varsigma_d\eqd \gamma_d$. Let $m\in\mathcal{M}^A$ be a pair of arbitrary $m_d$ such that for $d\in\{0,1\}$:
    \begin{align}
          \forall u\in\{-1,1\}:\enskip u m_d(s)\leq u\mu_d(s)\pi_{\gamma_d}(s)+ h_{co(\mathcal{Y})}(u)(1-\pi_{\gamma_d}(s))\text{  $\gamma_d-$a.e.}
    \end{align}
    Since $E[|Y(d)|]<\infty$ and $P(G=O)>0$, $E_O[|Y(d)|]<\infty$. Given that $\mathcal{M}^A$ must be such that $m_d$ is $\gamma_d$ integrable for any $d\in\{0,1\}$, by \Cref{lem:reverse_aumann} then for $d\in\{0,1\}$ there exist $\upsilon_d\in Sel^1(\mathbf{Y}_d)$ such that $m_d(\varsigma_d) = E_O[\upsilon_d|\varsigma_d]$ a.s. It is then immediate that $ \Tilde{\mathcal{H}}(m,\gamma)\subseteq \mathcal{H}(m,\gamma)$.

\underline{$\Tilde{\mathcal{H}}(m,\gamma)$ is equivalent to \eqref{eq:joint_characterization}}

By \citet[Theorem 2.1]{artstein1983distributions}, a distribution function characterizes a selection in $Sel((\mathbf{S}_d,\Tilde{Z}))$ if and only if:
\begin{align}
   &\forall B\in\mathcal{C}(\mathcal{S}\times \Tilde{\mathcal{Z}}):\enskip P((S(d),\Tilde{Z})\in B) \geq P((\mathbf{S}_d,\Tilde{Z})\subseteq B) \\
   \Leftrightarrow  &\forall B\in\mathcal{C}(\mathcal{S}):\enskip P(S(d)\in B|\Tilde{Z}) \geq P(\mathbf{S}_d\subseteq B|\Tilde{Z}) \text{ a.s.} 
\end{align}
where the second line follows by \citet[Theorem 2.33]{molchanov2018random}. Now consider the containment functional $P(\mathbf{S}_d\subseteq B|\Tilde{Z})$. If $B = \mathcal{S}$, $P(\mathbf{S}_d\subseteq B|\Tilde{Z})=1$. If $B \subsetneq \mathcal{S}$, then $P(\mathbf{S}_d\subseteq B|\Tilde{Z})=P(S\in B,D=d|\Tilde{Z})$. Hence, $\exists (\varsigma_d,\Tilde{Z})\in Sel((\mathbf{S}_d,\Tilde{Z}))$ such that $\gamma_d\in\mathcal{P}^\mathcal{S}$ and $\gamma_d\eqd \varsigma_d$ if and only if:
\begin{align}\label{eq:gamma_intermediate}
  \forall B\in\mathcal{C}(\mathcal{S}):\enskip P(\varsigma_d\in B|\Tilde{Z}) \geq P(S\in B,D=d|\Tilde{Z})  \text{ a.s.}
\end{align}

Since $(\varsigma_d,\Tilde{Z})\in \Bar{I}$, \eqref{eq:gamma_intermediate} is equivalent to:
\begin{align}
  \forall B\in\mathcal{C}(\mathcal{S}):\enskip P(\varsigma_d\in B)&\geq ess\sup_{\Tilde{Z}}P(S\in B,D=d|\Tilde{Z})\\
   &=\max\left(ess\sup_{Z}P_E(S\in B,D=d|Z),P_O(S\in B,D=d) \right)
\end{align}
where the first line follows since, by definition of $\Bar{I}$, $\varsigma_d\independent \Tilde{Z}$. The second is by definition of $\Tilde{Z}$ and $P(G=O)>0$ given that two datasets are observed. Therefore:
\begin{align}\label{eq:gamma_equivalent}
\begin{split}
&\text{$\exists (\varsigma_d,\Tilde{Z})\in Sel((\mathbf{S}_d,\Tilde{Z}))\cap \Bar{I}$ s.t. $\gamma_d\eqd \varsigma_d$}\\
  \Leftrightarrow&\forall B\in\mathcal{C}(\mathcal{S}):\enskip  P(\varsigma_d\in B)\geq\max\left(ess\sup_{Z}P_E(S\in B,D=d|Z),P_O(S\in B,D=d) \right)
\end{split}
\end{align}

By definition of $\Tilde{\mathcal{H}}(m,\gamma)$ and \eqref{eq:gamma_equivalent}:
\begin{align}\label{eq:intermediate_id_set_final}
 \begin{split}
  \Tilde{\mathcal{H}}(m,\gamma)&= \left\{\begin{array}{l}
(m,\gamma)\in\mathcal{M}^A\times (\mathcal{P}^{\mathcal{S}})^2: \forall d\in\{0,1\}, \text{$\forall B\in\mathcal{C}(\mathcal{S}),$} \\
  \text{$\gamma_d(B)\geq\max\left(ess\sup_{Z}P_E(S\in B,D=d|Z),P_O(S\in B,D=d) \right)$,} \\
 \text{ $\forall u\in\{-1,1\}:\enskip u m_d(s)\leq u\mu_d(s)\pi_{\gamma_d}(s)+ h_{co(\mathcal{Y})}(u)(1-\pi_{\gamma_d}(s)),\text{  $\gamma_d-$a.e.}$}\end{array}\right\}
\end{split}
\end{align}

By its definition, if $\mathfrak{C}$ is a core determining class, \eqref{eq:gamma_equivalent} is equivalent to:
\begin{align}\label{eq:gamma_equivalent_core}
\begin{split}
&\text{$\exists (\varsigma_d,\Tilde{Z})\in Sel((\mathbf{S}_d,\Tilde{Z}))\cap \Bar{I}$ s.t. $\gamma_d\eqd \varsigma_d$}\\
  \Leftrightarrow&\forall B\in\mathfrak{C}:\enskip  P(\varsigma_d\in B)\geq\max\left(ess\sup_{Z}P_E(S\in B,D=d|Z),P_O(S\in B,D=d) \right)
\end{split}
\end{align}
It is then immediate that in \eqref{eq:intermediate_id_set_final} the condition $\forall B\in\mathcal{C}(\mathcal{S})$ can be replaced by $\forall B\in\mathfrak{C}$. 

The result then follows by $\mathcal{H}(m,\gamma) = \Tilde{\mathcal{H}}(m,\gamma)$.
\end{proofEnd}

The argument combines Artstein’s theorem (\citet[Theorem 2.1]{artstein1983distributions}) with the conditional Aumann expectation, tools that are typically applied separately, to address a novel technical challenge: jointly bounding (i) the distribution function of a selection of a random set and (ii) a conditional mean whose conditioning $\sigma$-algebra is generated by that selection.

To build intuition for the bounds, consider conditions that are necessary for $(m,\gamma)$ to be compatible with the data and maintained assumptions. The proof shows that these are also sufficient, and hence deliver sharp bounds. First, recalling $\gamma_d:=P_{S(d)}$, it is immediate that for any closed subset $B\in\mathcal{C}(\mathcal S)$: $\gamma_d(B)\geq P(S(d)\in B,D=d)$. By Assumptions \ref{ass:rand_assign} and \ref{ass:ex_validity}, $S(d)\independent G$ and $S(d)\independent Z|G=E$. It is therefore also necessary that each $\gamma_d$ satisfies:
\begin{align}\label{eq:gamma_dominance}
    \forall B\in\mathcal{C}(\mathcal{S}):\enskip \gamma_d(B)\geq P_O(S(d)\in B,D=d) \quad \text{and} \quad  \gamma_d(B)\geq P_E(S(d)\in B,D=d|Z) \text{ a.s.}
\end{align}
which coincides with the restrictions on $\gamma_d$ in the theorem. This represents an intersection bound on $\gamma_d$ reflecting that $S(d)$ is observed in both datasets when $D=d$.

Second, recalling that under Assumption \ref{ass:ex_validity} $m_d(S(d)) = E_O[Y(d)|S(d)]$, by the law of iterated expectations:
\begin{align*}
    m_d(S(d))= E_O[Y|D=d,S(d)]P_O(D=d|S(d))+E_O[Y(d)|D\neq d,S(d)] P_O(D\neq d|S(d)) \text{ a.s.}
\end{align*}
Let $\pi_{\gamma_d}(s):=P_O(D=d|S(d)=s)$ denote the latent propensity score for a given $\gamma_d\eqd S(d)$. Since $S(d)=S$ whenever $D=d$ and $S(d)\independent G$, $\pi_{\gamma_d}$ is the Radon-Nikodym derivative such that $\pi_{\gamma_d}(S(d))=\frac{dP_O(S,D=d)}{d\gamma_d}$. Because $E_O[Y(d)|D\neq d,S(d)]\in\mathcal{Y}$ is never observed, the usual worst-case bounds arguments imply:
\begin{align}\label{eq:m_wc}
\begin{split}
        m_d(s)&\geq E_O[Y|D=d,S=s]\pi_{\gamma_d}(s)+\inf\mathcal{Y} (1-\pi_{\gamma_d}(s)) \text{ $\gamma_d-$a.e.}\\
        m_d(s)&\leq E_O[Y|D=d,S=s]\pi_{\gamma_d}(s)+\sup\mathcal{Y} (1-\pi_{\gamma_d}(s)) \text{ $\gamma_d-$a.e.}
\end{split}
\end{align}
which is identical to the restrictions on $m_d$ in the theorem. Accordingly, these conditions represent worst-case bounds on $m_d$ for a \textit{given} $\gamma_d$. Notably, the bounds on $m_d$ may change with the particular $\gamma_d$ under consideration.

Finally, \Cref{thm:tractable_set} also enables the use of a \textit{core-determining class} (CDC) to remove redundant constraints on $\gamma$ without loss of information. Intuitively, a CDC does so by eliminating constraints that are implied by the remaining ones.

\subsection[Characterizing H(τ)]{Characterizing $\mathcal{H}(\tau)$}

The sharp identified set $\mathcal{H}(\tau)$ follows as the image of the functional $T$ over the set of possible $(m,\gamma)$. The next result shows that under easily verifiable high-level conditions on $\mathcal{M}^A$, $\mathcal{H}(\tau)$ is an interval which can be characterized by solving two optimization problems.

\begin{theoremEnd}[proof at the end, no link to proof]{theorem}\label{thm:interval_set_general}
Let Assumptions \ref{ass:rand_assign}, \ref{ass:ex_validity}, and \ref{ass:modeling_generic} hold. Suppose that $E[|Y(d)|]<\infty$ for $d\in\{0,1\}$ and that $\mathcal{M}^A$ is closed and convex. Then the closure of $\mathcal{H}(\tau)$ is:
\begin{align}\label{eq:general_characterization}
   \left[\inf_{(\tilde{m},\tilde{\gamma})\in\mathcal{H}(m,\gamma)}T(\tilde{m},\tilde{\gamma}),\sup_{(\tilde{m},\tilde{\gamma})\in\mathcal{H}(m,\gamma)}T(\tilde{m},\tilde{\gamma})\right].
\end{align}
\end{theoremEnd}

\begin{proofEnd}

Recall $\mathcal{H}(\tau)=\{T(m,\gamma):(m,\gamma)\in\mathcal{H}(m,\gamma)\}$. The proof proceeds in three steps.

\underline{\textbf{Step 1:} $\mathcal{H}(m,\gamma)$ is convex.}

Take any $(m,\gamma),(m',\gamma')\in\mathcal{H}(m,\gamma)$ and any $a\in(0,1)$. Define:
\begin{align}
m^a:=a m+(1-a)m',\qquad \gamma^a:=a\gamma+(1-a)\gamma'.
\end{align}
Since $\mathcal{M}^A$ is convex by assumption, $m^a\in\mathcal{M}^A$. $\mathcal{P}^\mathcal{S}$ is the space of probability measures and is therefore convex, so $\gamma_d^a\in\mathcal{P}^\mathcal{S}$ for each $d$. It remains to verify that set constraints on $m_d$ and $\gamma_d$ from \Cref{thm:tractable_set} are satisfied for $d\in\{0,1\}$ .

For $\gamma_d$ constraints, note that for any $d\in\{0,1\}$ and $B\in\mathcal{C}(\mathcal{S})$:
\begin{align}
\gamma_d^a(B)&=a\gamma_d(B)+(1-a)\gamma'_d(B)\\
&\geq a\max\!\left(ess\sup_{Z}P_E(S\in B,D=d\mid Z),P_O(S\in B,D=d)\right)\notag\\
&\quad+(1-a)\max\!\left(ess\sup_{Z}P_E(S\in B,D=d\mid Z),P_O(S\in B,D=d)\right)\notag\\
&=\max\!\left(ess\sup_{Z}P_E(S\in B,D=d\mid Z),P_O(S\in B,D=d)\right),\notag
\end{align}
where the first line is by definition of $\gamma^a$, the second is by $(m,\gamma), (m',\gamma')\in \mathcal{H}(m,\gamma)$ and \Cref{thm:tractable_set}, and the third is by observation.

For the $m_d$ constraint, fix any $d\in\{0,1\}$. For any $B$ with $\gamma_d(B)=0$, $P_O(S\in B,D=d)\leq \gamma_d(B)=0$. Hence, $P_O(S\in\cdot,D=d)\ll \gamma_d $, and $P_O(S\in\cdot,D=d)\ll \gamma'_d$. Since $\gamma_d^a=a\gamma_d+(1-a)\gamma'_d$, also $P_O(S\in\cdot,D=d)\ll \gamma_d^a$. Then by the Radon-Nikodym theorem there exist measurable functions:
\begin{align}
\pi_{\gamma_d}:=\frac{dP_O(S\in\cdot,D=d)}{d\gamma_d},\qquad
\pi_{\gamma'_d}:=\frac{dP_O(S\in\cdot,D=d)}{d\gamma'_d},\qquad
\pi^a:=\frac{dP_O(S\in\cdot,D=d)}{d\gamma_d^a}.
\end{align}
Since $\gamma_d^a=a\gamma_d+(1-a)\gamma'_d$ with $a\in(0,1)$, $\gamma_d \ll \gamma_d^a$ and $\gamma'_d \ll \gamma_d^a$. Therefore, similarly, the following exist:
\begin{align}
\rho:=\frac{d\gamma_d}{d\gamma_d^a},\qquad \rho':=\frac{d\gamma'_d}{d\gamma_d^a}.
\end{align}
Then $\rho,\rho'\geq 0$ by nonnegativity of measures, and:
\begin{align}\label{eq:rho_identity_general} a\rho+(1-a)\rho'=1\qquad \gamma_d^a-\text{a.e.}
\end{align}
by $\gamma_d^a=a\gamma_d+(1-a)\gamma'_d$ and linearity of the Radon--Nikodym derivative. By the chain rule for Radon--Nikodym derivatives, set:
\begin{align}\label{eq:pi_chain_general}
\pi^a=\pi_{\gamma_d}\rho=\pi_{\gamma'_d}\rho'\qquad \gamma_d^a-\text{a.e.}
\end{align}
with the convention $\pi_{\gamma_d}=0$ on the event $\{\rho=0\}$ and similarly $\pi_{\gamma'_d}=0$ on $\{\rho'=0\}$.

Since $(m,\gamma)\in\mathcal{H}(m,\gamma)$, for each $u\in\{-1,1\}$:
\begin{align}\label{eq:moment_gamma_general}
u m_d(s)\leq u\mu_d(s)\pi_{\gamma_d}(s)+h_{co(\mathcal{Y})}(u)\bigl(1-\pi_{\gamma_d}(s)\bigr)\qquad \gamma_d-\text{a.e.}
\end{align}
by \Cref{thm:tractable_set}, and similarly with $(m',\gamma')$. Since \eqref{eq:moment_gamma_general} holds $\gamma_d$-a.e. and $\gamma_d(B)=\int_B \rho(s)d\gamma_d^a(s)$ for every measurable $B$, it follows that \eqref{eq:moment_gamma_general}
also holds $\gamma_d^a$-a.e. on $\{\rho>0\}$. Analogously, the corresponding inequality for $(m_d',\gamma_d')$ holds $\gamma_d^a$-a.e. on $\{\rho'>0\}$.

Fix $u\in\{-1,1\}$. On $\{\rho>0\}\cap\{\rho'>0\}$, by \eqref{eq:pi_chain_general} rewrite \eqref{eq:moment_gamma_general} as:
\begin{align}
u m_d(s)\leq u\mu_d(s)\frac{\pi^a(s)}{\rho(s)}+h_{co(\mathcal{Y})}(u)\Big(1-\frac{\pi^a(s)}{\rho(s)}\Big)\qquad \gamma_d^a-\text{a.e.\ on }\{\rho>0\},
\end{align}
and similarly:
\begin{align}
u m'_d(s)\leq u\mu_d(s)\frac{\pi^a(s)}{\rho'(s)}+h_{co(\mathcal{Y})}(u)\Big(1-\frac{\pi^a(s)}{\rho'(s)}\Big)\qquad \gamma_d^a-\text{a.e.\ on }\{\rho'>0\}.
\end{align}
Then define $A(s):=\frac{a}{\rho(s)}+\frac{1-a}{\rho'(s)}$ on the event $\{\rho>0\}\cap\{\rho'>0\}$ and note:
\begin{align}\label{eq:convex_combination_general}
\begin{split}
u m_d^a(s)
&=au m_d(s)+(1-a)u m'_d(s)\\
&\leq a\Big(u\mu_d(s)\frac{\pi^a(s)}{\rho(s)}+h_{co(\mathcal{Y})}(u)\Big(1-\frac{\pi^a(s)}{\rho(s)}\Big)\Big)\\
&\quad +(1-a)\Big(u\mu_d(s)\frac{\pi^a(s)}{\rho'(s)}+h_{co(\mathcal{Y})}(u)\Big(1-\frac{\pi^a(s)}{\rho'(s)}\Big)\Big)\\
&=u\mu_d(s)\pi^a(s) \Big(\frac{a}{\rho(s)}+\frac{1-a}{\rho'(s)}\Big) + h_{co(\mathcal{Y})}(u)\Big(1-\pi^a(s)\Big(\frac{a}{\rho(s)}+\frac{1-a}{\rho'(s)}\Big)\Big)\\
&=u\mu_d(s)\pi^a(s) A(s) + h_{co(\mathcal{Y})}(u)\bigl(1-\pi^a(s) A(s)\bigr), \text{ on }\{\rho>0\}\cap\{\rho'>0\}
\end{split}
\end{align}
where the first line is by definition of $m^a_d$, the second is by the two inequalities above, the third is by rearrangement and factoring out $\pi^a(s)$, and the fourth is by definition of $A(s)$.  Observe that:
\begin{align}\label{eq:A_ge_1_general}
\begin{split}
A(s)-1
&=\frac{a}{\rho(s)}+\frac{1-a}{\rho'(s)}-1\\
&=\frac{a\rho'(s)+(1-a)\rho(s)}{\rho(s)\rho'(s)}-\frac{1}{a\rho(s)+(1-a)\rho'(s)}\\
&=\frac{(a\rho'(s)+(1-a)\rho(s))(a\rho(s)+(1-a)\rho'(s))-\rho(s)\rho'(s)}{\rho(s)\rho'(s)(a\rho(s)+(1-a)\rho'(s))}\\
&=\frac{a(1-a)(\rho(s)-\rho'(s))^2}{\rho(s)\rho'(s)(a\rho(s)+(1-a)\rho'(s))}\\
&=\frac{a(1-a)(\rho(s)-\rho'(s))^2}{\rho(s)\rho'(s)}\geq 0
\qquad\text{on }\{\rho>0\}\cap\{\rho'>0\},
\end{split}
\end{align}
where the first line is by definition of $A(s)$, the second is by \eqref{eq:rho_identity_general}, the third and fourth are by observation, and the fifth is by \eqref{eq:rho_identity_general}. Note that $\mu_d(s)=E_O[Y|S=s,D=d]$ is well-defined and finite 
$P_O(S\in\cdot|D=d)-$a.e. because $E_O[|Y|\mathbbm{1}[D=d]]<\infty$ 
by assumption. Moreover, since $Y\in\mathcal{Y}$ a.s., 
$\mu_d(s)\in\mathrm{co}(\mathcal{Y})$ wherever it is defined and hence $u\mu_d(s)\leq h_{\mathrm{co}(\mathcal{Y})}(u),\text{ for }u\in\{-1,1\}$. Since $A(s)\geq 1$ and $u\mu_d(s)-h_{\mathrm{co}(\mathcal{Y})}(u)\leq 0$, by \eqref{eq:convex_combination_general}:
\begin{align}
um_d^a(s)\leq h_{\mathrm{co}(\mathcal{Y})}(u)+(u\mu_d(s)-
h_{\mathrm{co}(\mathcal{Y})}(u))\pi^a(s)A(s)\leq 
u\mu_d(s)\pi^a(s)+h_{\mathrm{co}(\mathcal{Y})}(u)(1-\pi^a(s))
\end{align}
on $\{\rho>0\}\cap\{\rho'>0\}$.

On $\{\rho=0\}\cup\{\rho'=0\}$, note that $\gamma_d^a(\{\rho=0\}
\cap\{\rho'=0\})=0$ by \eqref{eq:rho_identity_general}, so the constraint holds trivially there. On the remaining events 
$\{\rho>0\}\cap\{\rho'=0\}$ and $\{\rho=0\}\cap\{\rho'>0\}$, 
\eqref{eq:pi_chain_general} implies $\pi^a=0$ $\gamma_d^a$-a.e., 
so the constraint reduces to $um_d^a(s)\leq h_{\mathrm{co}
(\mathcal{Y})}(u)$. Since $(m,\gamma),(m',\gamma')\in\mathcal{H}
(m,\gamma)$ implies $m\in\mathcal{M}^A\subseteq\mathcal{M}$ and 
$m'\in\mathcal{M}^A\subseteq\mathcal{M}$, we have $m_d(s),m'_d(s)
\in\mathcal{Y}\subseteq\mathrm{co}(\mathcal{Y})$ everywhere. 
Therefore $m_d^a(s)=am_d(s)+(1-a)m'_d(s)\in\mathrm{co}(\mathcal{Y})$ 
everywhere by convexity, and hence $um_d^a(s)\leq h_{\mathrm{co}
(\mathcal{Y})}(u)$ holds everywhere.

Combining the cases $\{\rho>0\}\cap\{\rho'>0\}$ and $\{\rho=0\}\cup\{\rho'=0\}$, for every $u\in\{-1,1\}$:
\begin{align}
u m_d^a(s)\leq u\mu_d(s)\pi^a(s)+h_{co(\mathcal{Y})}(u)\bigl(1-\pi^a(s)\bigr)
\qquad \gamma_d^a-\text{a.e.}
\end{align}
Since $\pi^a=dP_O(S\in\cdot,D=d)/d\gamma_d^a=\pi_{\gamma_d^a}$, the moment constraints hold for $(m^a,\gamma^a)$. Thus $(m^a,\gamma^a)\in\mathcal{H}(m,\gamma)$ and $\mathcal{H}(m,\gamma)$ is convex.

\underline{\textbf{Step 2:} $T$ is continuous on line segments over $\mathcal{H}(m,\gamma)$.}

Fix arbitrary $(m^1,\gamma^1),(m^0,\gamma^0)\in\mathcal{H}(m,\gamma)$ and define for $t\in[0,1]$:
\begin{align}
m^t:=t m^1+(1-t)m^0,\qquad \gamma^t:=t\gamma^1+(1-t)\gamma^0.
\end{align}
By Step 1, $(m^t,\gamma^t)\in\mathcal{H}(m,\gamma)$ for every $t\in[0,1]$, so $T(m^t,\gamma^t)$ is well-defined and finite for every $t$ by $E[|Y(d)|]<\infty$. Fix any $t\in(0,1)$ and $d\in\{0,1\}$. Since $(m^t,\gamma^t)\in\mathcal{H}(m,\gamma)$:
\begin{align}\label{eq:mt_integrable_gt_general}
\int |m^t_d|d\gamma^t_d<\infty.
\end{align}
Then:
\begin{align}\label{eq:mt_integrable_cross_general}
\int |m^t_d|d\gamma^0_d\leq \frac{1}{1-t}\int |m^t_d|d\gamma^t_d<\infty,
\qquad
\int |m^t_d|d\gamma^1_d\leq \frac{1}{t}\int |m^t_d|d\gamma^t_d<\infty,
\end{align}
where the inequalities are by $\gamma^t_d=t\gamma^1_d+(1-t)\gamma^0_d\geq (1-t)\gamma^0_d$ and
$\gamma^t_d\geq t\gamma^1_d$, and the finiteness is by \eqref{eq:mt_integrable_gt_general}. Moreover, by $|t x+(1-t)y|\geq t|x|-(1-t)|y|$:
\begin{align}\label{eq:revtri_pointwise}
|m^t_d|
=|t m^1_d+(1-t)m^0_d|
\geq t|m^1_d|-(1-t)|m^0_d|.
\end{align}
Rearranging yields $t|m^1_d|\leq |m^t_d|+(1-t)|m^0_d|$. Integrating with respect to $\gamma^0_d$ gives:
\begin{align}\label{eq:cross_int_m1_g0}
t\int |m^1_d|d\gamma^0_d
\leq \int |m^t_d|d\gamma^0_d+(1-t)\int |m^0_d|d\gamma^0_d<\infty,
\end{align}
where the finiteness is by \eqref{eq:mt_integrable_cross_general} and $\int |m^0_d|d\gamma^0_d<\infty$, since
$(m^0,\gamma^0)\in\mathcal{H}(m,\gamma)$. Since $t>0$, $\int |m^1_d|d\gamma^0_d<\infty$. The argument for $\int |m^0_d|d\gamma^1_d<\infty$ is analogous. Hence for each $d$, the integrals $\int m^1_dd\gamma^1_d$, $\int m^1_dd\gamma^0_d$, $
\int m^0_dd\gamma^1_d$, $ \int m^0_dd\gamma^0_d$ are finite. Therefore:
\begin{align}
\begin{split}
\int m^t_dd\gamma^t_d
&=\int (t m^1_d+(1-t)m^0_d)d(t\gamma^1_d+(1-t)\gamma^0_d)\\
&=t^2\int m^1_dd\gamma^1_d
+t(1-t)\int m^1_dd\gamma^0_d
+t(1-t)\int m^0_dd\gamma^1_d
+(1-t)^2\int m^0_dd\gamma^0_d,
\end{split}
\end{align}
where the first line is by definition of $m^t_d$ and $\gamma^t_d$, and the second by bilinearity of the integral and the fact that individual integrals are finite. Then $f(t): = \int m^t_1d\gamma^t_1-\int m^t_0d\gamma^t_0$ is a quadratic polynomial in $t$, hence continuous on $[0,1]$. Because $f(t) = T(m^t,\gamma^t)$, $T$ is continuous on line segments.

\underline{\textbf{Step 3:} $cl(\mathcal{H}(\tau))$ is a closed interval.}

Step 1 implies $\mathcal{H}(m,\gamma)$ is convex, hence path-connected by line segments: for any two points $(m,\gamma),(m',\gamma')\in\mathcal{H}(m,\gamma)$, the line segment $\{(m^t,\gamma^t):t\in[0,1]\}$ lies entirely in $\mathcal{H}(m,\gamma)$. By Step 2, $T$ is continuous along every such line segment. Therefore, for any two values $\tau_0=T(m,\gamma),\tau_1=T(m',\gamma')\in\mathcal{H}(\tau)$, the set $\mathcal{H}(\tau)$ contains the continuous image $\{T(m^t,\gamma^t):t\in[0,1]\}$ of $[0,1]$ connecting $\tau_0$ and $\tau_1$. By the intermediate value theorem, this image contains the whole interval $[\min(\tau_0,\tau_1),\max(\tau_0,\tau_1)]$. Hence $\mathcal{H}(\tau)\subseteq\mathbb{R}$ is an interval. The closure of an interval in $\mathbb{R}$ is the closed interval with endpoints given by its infimum and supremum. Thus:
\begin{align}
cl(\mathcal{H}(\tau))
=
\left[\inf_{(\tilde{m},\tilde{\gamma})\in\mathcal{H}(m,\gamma)}T(\tilde{m},\tilde{\gamma}),\;
\sup_{(\tilde{m},\tilde{\gamma})\in\mathcal{H}(m,\gamma)}T(\tilde{m},\tilde{\gamma})\right].
\end{align}

\end{proofEnd}

Using optimization problems to characterize identified sets has become common in partial identification. Such representations typically follow from linear objectives and polyhedral, hence convex, constraint sets. \Cref{thm:interval_set_general} requires a different argument since $T$ is bilinear -- and thus separately continuous -- and the constraint set is bilinearly constrained and therefore need not be convex. The proof shows that $\mathcal{H}(m,\gamma)$ is nevertheless convex under the assumptions of the theorem and that $T$ is continuous along line segments in $\mathcal{H}(m,\gamma)$. Then $\mathcal{H}(\tau)$ is a line-continuous image of a convex set, hence a connected set in $\mathbb{R}$, i.e., an interval.

The restrictions $\mathcal{M}^A$ enter as constraints in the optimization problems. This offers three advantages. First, it enables the implementation of LIV and TI. Second, it nests existing point-identification results and generalizes them by allowing for imperfect compliance (for examples see \Cref{rem:sufficient_conditions}). Third, it provides a tool that facilitates development of new restrictions on $m$ tailored for specific empirical settings by computationally producing the corresponding sharp bounds for $\tau$. This removes the need to derive closed-form expressions and to prove sharpness for each set of assumptions.

\begin{remark}\label{rem:regularity}
Assumptions \ref{ass:LMIV} and \ref{ass:TI} are representable via linear equality and inequality restrictions on $m$. Therefore, each resulting $\mathcal{M}^A$ is an intersection of (possibly infinitely many) affine subspaces and halfspaces in a linear space of functions on $\mathcal{S}$, and is thus convex. Moreover, because the restrictions are imposed pointwise (or $\gamma_d-$a.e.) and are preserved under limits, the corresponding $\mathcal{M}^A$ are closed. Furthermore, whenever $m$ is identified by the data, such as under existing assumptions in \Cref{rem:sufficient_conditions}, $\mathcal{M}^A$ is a singleton ($\gamma_d-$a.e.) and hence closed and convex.
\end{remark}

\subsection{Implementation and Estimation}\label{sect:implementation}

This section discusses how the optimization problems can be used to tractably characterize and estimate $\mathcal{H}(\tau)$. The problems become finite-dimensional when $\mathcal{S}$ is a finite set, as shown by the following corollary. 

\begin{theoremEnd}[proof at the end, no link to proof]{corollary}\label{cor:interval_set}
Suppose that assumptions of \Cref{thm:interval_set_general} hold and that $k:=|\mathcal{S}|<\infty$. Let $\Delta(k)$ denote the $k-$dimensional simplex. Then the closure of $\mathcal{H}(\tau) $ is:
    \begin{align}\label{eq:joint_characterization_discrete_optimization}
        \left[\inf_{(\Tilde{m},\Tilde{\gamma})\in\mathcal{H}(m,\gamma)}T(\Tilde{m},\Tilde{\gamma}),\sup_{(\Tilde{m},\Tilde{\gamma})\in\mathcal{H}(m,\gamma)}T(\Tilde{m},\Tilde{\gamma})\right].
    \end{align}
where:
\begin{align}\label{eq:joint_characterization_discrete}
\begin{split}
         \mathcal{H}(m,\gamma)   =&\left\{\begin{array}{l}
        (m,\gamma)\in\mathcal{M}^A\times (\Delta(k))^2: \text{$\forall d\in\{0,1\}$, $ \forall s\in\mathcal{S}$,}\\
         \text{$\gamma_d(s)\geq \max\left(ess\sup_{Z}P_E(S = s,D=d|Z),P_O(S=s,D=d) \right),$}\\
    \text{$\left(m_d(s)-\inf\mathcal{Y}\right)\gamma_d(s)\geq \left(E_{O}[Y|S=s,D=d]-\inf\mathcal{Y}\right)P_{O}(S=s,D=d)$ },\\
   \text{$\left(\sup\mathcal{Y}-m_d(s)\right)\gamma_d(s)\geq \left(\sup\mathcal{Y}-E_{O}[Y|S=s,D=d]\right)P_{O}(S=s,D=d)$ }
    \end{array}\right\}
\end{split}
\end{align}
\end{theoremEnd}
\begin{proofEnd}

Since $\mathcal{S} = \{1,2,\hdots,k\}$, represent $\gamma_d$ as an element of the $k-$dimensional simplex $\Delta(k)$ and $m_d\in\mathcal{Y}^k$. Let $\gamma_d(s)$ and $m_d(s)$ denote the $s-$th element of the corresponding vectors. Then:
\begin{align}\label{eq:discrete_intermediate}
\begin{split}
        \mathcal{H}(m,\gamma)   &= \left\{\begin{array}{l}
     (m,\gamma)\in\mathcal{M}^A\times(\Delta(k))^2: \text{$\forall d\in\{0,1\}$, $\forall B\in\mathcal{C}(\mathcal{S})$, } \\
         \text{$\gamma_d(B)\geq \max\left(ess\sup_{Z}P_E(S\in B,D=d|Z),P_O(S\in B,D=d) \right)$,}\\
          \text{$\forall u\in\{-1,1\}$: $um_d(s)\leq u\mu_d(s)\pi_{\gamma_d}(s)+ h_{co(\mathcal{Y})}(u)(1-\pi_{\gamma_d}(s))$ $\gamma_d-$a.e.}
    \end{array}\right\}\\
      &= \left\{\begin{array}{l}
     (m,\gamma)\in\mathcal{M}^A\times(\Delta(k))^2: \text{$\forall d\in\{0,1\}$, $\forall B\in\mathcal{C}(\mathcal{S})$, } \\
         \text{$\gamma_d(B)\geq \max\left(ess\sup_{Z}P_E(S\in B,D=d|Z),P_O(S\in B,D=d) \right)$, $\forall u\in\{-1,1\}$:}\\
          \text{ $um_d(s)\leq u\mu_d(s)\frac{P_O(S=s,D=d)}{\gamma_d(s)}+ h_{co(\mathcal{Y})}(u)\left(1-\frac{P_O(S=s,D=d)}{\gamma_d(s)}\right)$ $\gamma_d-$a.e.}
    \end{array}\right\}
\end{split}
\end{align}
 where the first line is by \Cref{thm:tractable_set}. The second is by definition of $\pi_{\gamma_d}(s)$ and $\gamma_d$ being supported on $\mathcal{S}$ with $|\mathcal{S}|<\infty$. $\mathcal{S}$ is closed by definition. Since it is finite, it is bounded. Hence, $\mathbf{S}_d$ is almost surely compact, by definition. Then, by \citet[Lemma B.1]{beresteanu2012partial} $\{\{s\}: s\in\mathcal{S}\}$ is a core-determining class for the containment functional of $\mathbf{S}_d$. Then:
\begin{align}\label{eq:identified_set_worked}
\begin{split}
         \mathcal{H}(m,\gamma)   =&\left\{\begin{array}{l}
        (m,\gamma)\in\mathcal{M}^A\times (\Delta(k))^2: \text{$\forall d\in\{0,1\}$, $ \forall s\in\mathcal{S}$, $\forall u\in\{-1,1\},$}\\
         \text{$\gamma_d(s)\geq \max\left(ess\sup_{Z}P_E(S = s,D=d|Z),P_O(S=s,D=d) \right),$}\\
      \text{$um_d(s)\leq u\mu_d(s)\frac{P_O(S=s,D=d)}{\gamma_d(s)}+ h_{co(\mathcal{Y})}(u)\left(1-\frac{P_O(S=s,D=d)}{\gamma_d(s)}\right) $ $\gamma_d-$a.e.}
    \end{array}\right\}\\
          =&\left\{\begin{array}{l}
        (m,\gamma)\in\mathcal{M}^A\times (\Delta(k))^2: \text{$\forall d\in\{0,1\}$, $ \forall s\in\mathcal{S}$, $\forall u\in\{-1,1\},$}\\
         \text{$\gamma_d(s)\geq \max\left(ess\sup_{Z}P_E(S = s,D=d|Z),P_O(S=s,D=d) \right),$}\\
      \text{$um_d(s)\leq uE[Y|S=s,D=d]\frac{P_O(S=s,D=d)}{\gamma_d(s)}+ h_{co(\mathcal{Y})}(u)\left(1-\frac{P_O(S=s,D=d)}{\gamma_d(s)}\right) $  $\gamma_d-$a.e.}
    \end{array}\right\}\\
    =&\left\{\begin{array}{l}
        (m,\gamma)\in\mathcal{M}^A\times (\Delta(k))^2: \text{$\forall d\in\{0,1\}$, $ \forall s\in\mathcal{S}$,}\\
         \text{$\gamma_d(s)\geq \max\left(ess\sup_{Z}P_E(S = s,D=d|Z),P_O(S=s,D=d) \right),$}\\
    \text{$\left(m_d(s)-\inf\mathcal{Y}\right)\gamma_d(s)\geq \left(E_{O}[Y|S=s,D=d]-\inf\mathcal{Y}\right)P_{O}(S=s,D=d)$ $\gamma_d-$a.e.},\\
     \text{$\left(\sup\mathcal{Y}-m_d(s)\right)\gamma_d(s)\geq \left(\sup\mathcal{Y}-E_{O}[Y|S=s,D=d]\right)P_{O}(S=s,D=d)$ $\gamma_d-$a.e.}
    \end{array}\right\}\\
    =&\left\{\begin{array}{l}
        (m,\gamma)\in\mathcal{M}^A\times (\Delta(k))^2: \text{$\forall d\in\{0,1\}$, $ \forall s\in\mathcal{S}$,}\\
         \text{$\gamma_d(s)\geq \max\left(ess\sup_{Z}P_E(S = s,D=d|Z),P_O(S=s,D=d) \right),$}\\
    \text{$\left(m_d(s)-\inf\mathcal{Y}\right)\gamma_d(s)\geq \left(E_{O}[Y|S=s,D=d]-\inf\mathcal{Y}\right)P_{O}(S=s,D=d)$},\\
     \text{$\left(\sup\mathcal{Y}-m_d(s)\right)\gamma_d(s)\geq \left(\sup\mathcal{Y}-E_{O}[Y|S=s,D=d]\right)P_{O}(S=s,D=d)$}
    \end{array}\right\}.
\end{split}
\end{align}
where the first line is by \Cref{thm:tractable_set} and \eqref{eq:discrete_intermediate}, the second line is by definition of $\mu_d(s)$, the third is by definition of $h_{co(\mathcal{Y})}(u)$ and rearrangement, and the fourth is by observation. The result then follows from \Cref{thm:tractable_set} and \Cref{thm:interval_set_general}.
\end{proofEnd}

When short-term outcomes have finite support, the infinite-dimensional problems in \Cref{thm:interval_set_general} reduce to finite-dimensional \textit{generalized bilinear} optimization programs (\citet{al1992generalized}, for other examples of bilinear programs see \citet{dutz2021selection} and \citet{shea2022testing}). Although such problems are generally nonconvex, modern general-purpose solvers can compute globally optimal solutions via spatial branch-and-bound methods (e.g. \citet{gurobi}). \Cref{sect:estimation} provides a consistent criterion-based estimator based on the corollary.

Focusing on cases where relevant variables are finitely supported or discretized has been the predominant approach in work that relies on Artstein's theorem (\citet{galichon2011set}, \citet{russell2021sharp}, \citet{luo2024selecting}). Discretization is also common in related settings (\citet{rambachan2024program}, \citet{park2024informativeness}). Appendix \ref{sect:discretization} discusses the interpretation of results when $S$ is discretized. General-purpose methods for solving bilinear infinite-dimensional programs remain an interesting open problem. In special cases, ongoing work shows that additional structure permits tractable implementation via optimal transport theory with approximation guarantees even when $\mathcal{S}$ is infinite; see \Cref{rem:simple_optimization}. 

A few details are worth highlighting. First, $\mathcal{Y}$ is unrestricted. When it is unbounded, then the $m_d$ constraints involving $\mathcal{Y}$ are trivially satisfied. Second, to reduce computational complexity, the corollary uses $\{\{s\}:s\in\mathcal{S}\}$ as a CDC. \Cref{tab:complexity} illustrates the resulting reduction as a function of $|\mathcal{S}|$.\footnote{The number of constraints on $m$ imposed by the data given $\gamma$ is $4|\mathcal{S}|$; the total number depends on the modeling assumptions. The number of constraints may potentially be reduced further by adapting methods such as \citet{luo2024selecting}, as $\{\{s\}:s\in\mathcal{S}\}$ is not necessarily the smallest CDC.} If $S(d)$ represents percentiles, then not using the CDC already results in a prohibitively complex constraint set.

\begin{table}[h!]
\centering
\caption{Number of constraints on $\gamma$ in $\mathcal{H}(m,\gamma)$.}
\begin{tabular}{c c c c c c}
\toprule
& \multicolumn{5}{c}{$|\mathcal{S}|$} \\
\cline{2-6}
Constraint $\#$ for $\gamma$ & $2$ & $5$ & $10$ & $20$ & $100$ \\
\hline
Without CDC & $8$ & $64$ & $2048$ & $2{,}097{,}152$ & $>10^{30}$ \\
With CDC $\{\{s\}:s\in\mathcal{S}\}$ & $2$ & $8$ & $18$ & $38$ & $198$ \\
\bottomrule
\end{tabular}
\label{tab:complexity}
\end{table}

\begin{remark}\label{rem:simple_optimization}
Problems $\sup/\inf_{(\Tilde{m},\Tilde{\gamma})\in\mathcal{H}(m,\gamma)} T(\Tilde{m},\Tilde{\gamma})$ become linear in certain cases. This substantially simplifies computation and occurs under either of the following two conditions. First, if $\mathcal{H}(m,\gamma)=\{m\}\times\mathcal{H}(\gamma)$, the problems reduce to linear programs over $\gamma$. For example, assumptions that point-identify $m$ independently of $\gamma$, such as LUC in \Cref{rem:sufficient_conditions}, yield $\mathcal{H}(m,\gamma)=\{m\}\times\mathcal{H}(\gamma)$. In this case, one may use optimal transport theory to characterize the identified set even when $\mathcal{S}$ is infinite \citep{voronin2025generalized}. Second, if $\mathcal{H}(m,\gamma)=\mathcal{H}(m)\times\{\gamma\}$ and $\mathcal{M}^A$ is representable by linear constraints, the problems reduce to linear programs over $m$. This case arises under Assumptions \ref{ass:LMIV} and \ref{ass:TI} when the right-hand sides of the constraints for each $\gamma_d$ sum to one, as under perfect compliance or when the datasets jointly point-identify $\gamma$.
\end{remark}

\subsubsection{Identifying Power}

Closed-form bounds, when available, can provide intuition about the sources of identifying power. Optimization-based bounds typically trade off this transparency for the ability to computationally deliver sharp bounds under a wide range of assumptions (see, e.g., \citet{mogstad2018using}, \citet{torgovitsky2019nonparametric}). Under Assumptions \ref{ass:LMIV} and \ref{ass:TI}, however, one can still shed some light on the sources of identifying power by characterizing $m^{UB}$ and $m^{LB}$ that attain the upper and lower bounds on $\tau$, given a feasible $\gamma$. To simplify exposition, focus on the bounds for $E[Y(1)]$ for a fixed feasible $\gamma$, as the intuition for $E[Y(0)]$ (and hence $\tau$) is analogous.\footnote{If compliance is perfect, there is a unique feasible $\gamma$. Otherwise, there may be a set of feasible $\gamma$, and the bounds are obtained by taking the union over all feasible $\gamma$.} Based on \Cref{lem:min_max_selectors} in Appendix \ref{sect:appendix_lemmas} and letting $s'\leq s$ be the product order, under Assumption \ref{ass:LMIV}:
    \begin{align}
    \begin{split}
        m_1^{LB}(s) &= \sup_{s'\leq s}\left[E_O[Y|S=s',D=1]\pi_{\gamma_1}(s')+\inf\mathcal{Y}\left(1-\pi_{\gamma_1}(s')\right)\right],\\
        m_1^{UB}(s) &= \inf_{s'\geq s}\left[E_O[Y|S=s',D=1]\pi_{\gamma_1}(s')+\sup\mathcal{Y}\left(1-\pi_{\gamma_1}(s')\right)\right].
    \end{split}
    \end{align}
These represent the extremal monotonic $m_1$ that are compatible with worst-case (WC) bounds on $m_1$ given $\gamma$ in \eqref{eq:m_wc}. Similarly, under Assumption \ref{ass:TI}:
\begin{align}
    \begin{split}
        m_1^{LB}(s) &= \sup_{d\in\{0,1\}}\left[E_O[Y|S=s,D=d]\pi_{\gamma_d}(s)+\inf\mathcal{Y}\left(1-\pi_{\gamma_d}(s)\right)\right],\\
        m_1^{UB}(s) &= \inf_{d\in\{0,1\}}\left[E_O[Y|S=s,D=d]\pi_{\gamma_d}(s)+\sup\mathcal{Y}\left(1-\pi_{\gamma_d}(s)\right)\right].
    \end{split}
    \end{align}
Under \ref{ass:TI}, $m_1$ must satisfy the WC bounds for both $m_1$ and $m_0$. Hence the bound on it is given by the intersection of the two WC bounds at each support point $s$.

Based on these results, identifying power under Assumption \ref{ass:LMIV} comes from the extent to which the WC bounds on $m_1$ violate monotonicity. Imposing monotonicity replaces the WC lower (upper) endpoint by its left-envelope $\sup_{s'\le s}(\cdot)$ (right-envelope $\inf_{s'\ge s}(\cdot)$), which raises the lower bound (lowers the upper bound) wherever the WC endpoints are decreasing. Under Assumption \ref{ass:TI}, identifying power comes from how small the intersection of the WC bounds for $m_1$ and $m_0$ is at each $s$. The bounds tighten when those intervals overlap little, e.g., when some $\pi_{\gamma_d}(s)$ are large, so the corresponding WC interval is narrow. Both mechanisms are illustrated for a non-pathological DGP in \Cref{fig:liv_ti_illustration}. Finally, bounded $\mathcal{Y}$ is necessary for either assumption to yield identifying power, since otherwise WC bounds are trivial for any feasible $\gamma$.

 \begin{figure}[H]                                                                                                 
      \centering                                                      
      \begin{subfigure}[b]{0.4\textwidth}                                                                                
          \centering                                                                                                      
          \includegraphics[width=\textwidth]{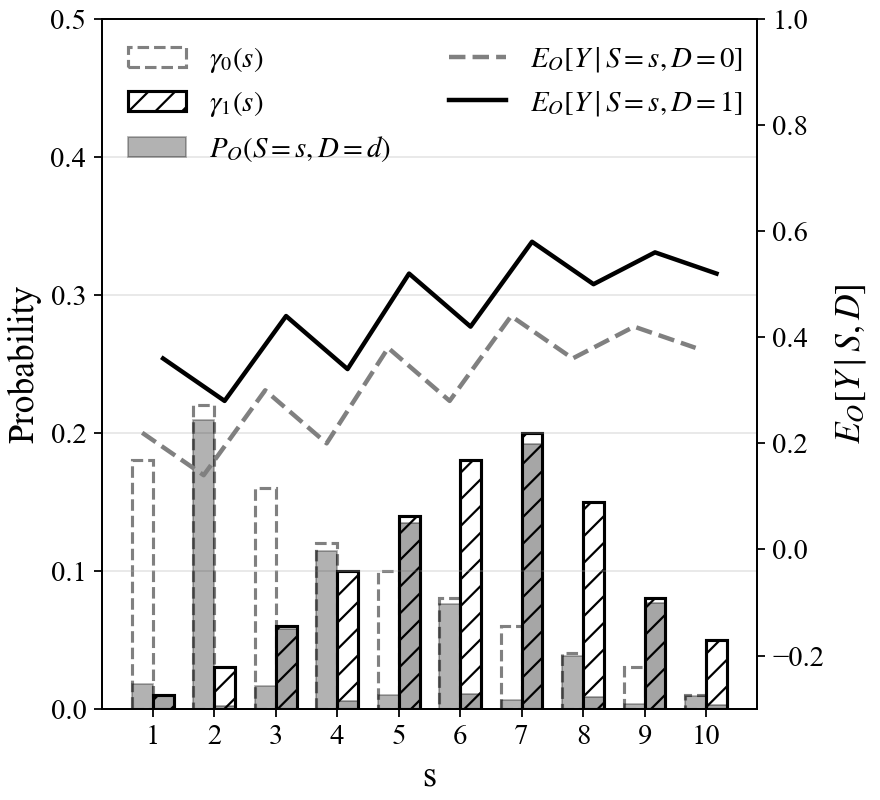}
          \caption{Data / DGP}                                                                                            
          \label{fig:liv_ti_dgp}                            
      \end{subfigure}
      \hfill
      \begin{subfigure}[b]{0.4\textwidth}
          \centering
          \includegraphics[width=\textwidth]{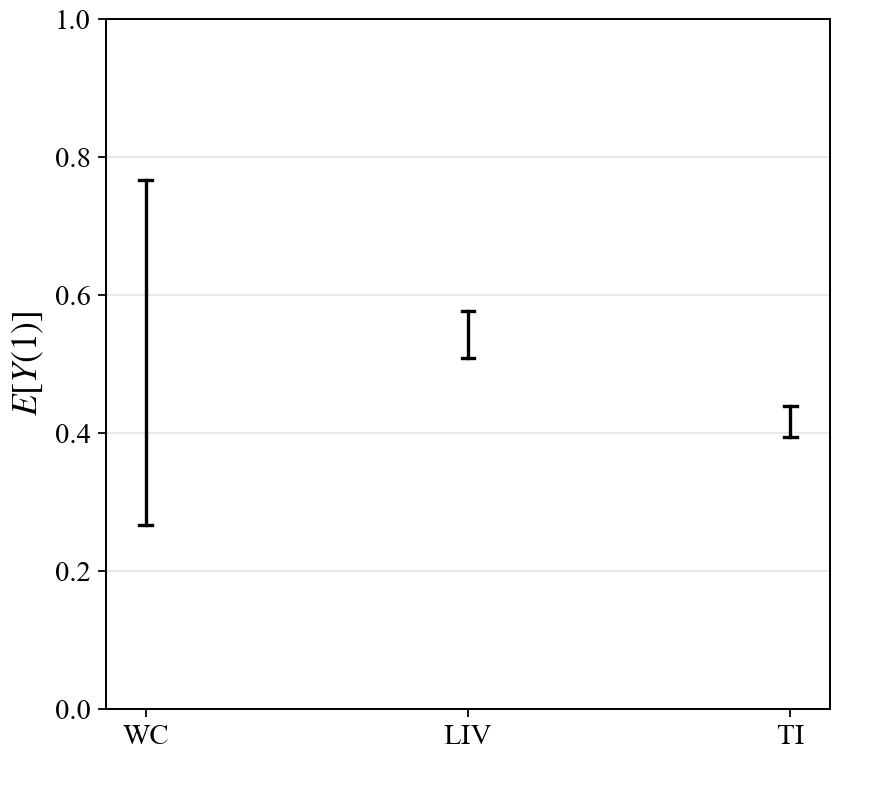}
            \caption{Bounds on $E[Y(1)]$}
          \label{fig:liv_ti_ey}
      \end{subfigure}

      \begin{subfigure}[b]{0.4\textwidth}
          \centering
          \includegraphics[width=\textwidth]{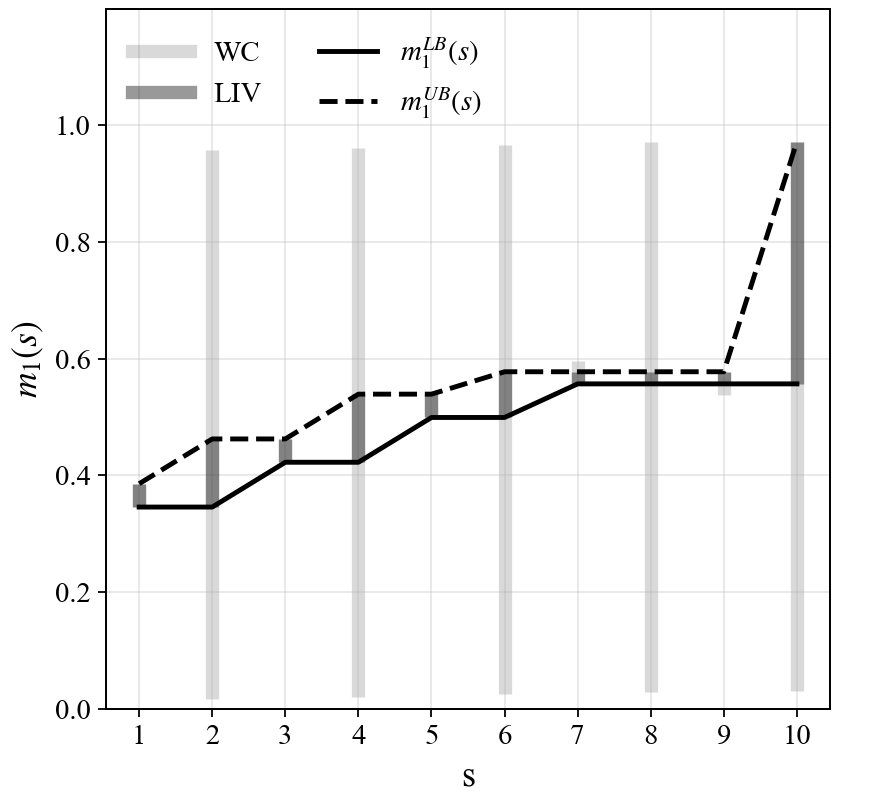}
        \caption{LIV bounds on $m_1(s)$}
          \label{fig:liv_ti_liv}
      \end{subfigure}
      \hfill
      \begin{subfigure}[b]{0.4\textwidth}
          \centering
          \includegraphics[width=\textwidth]{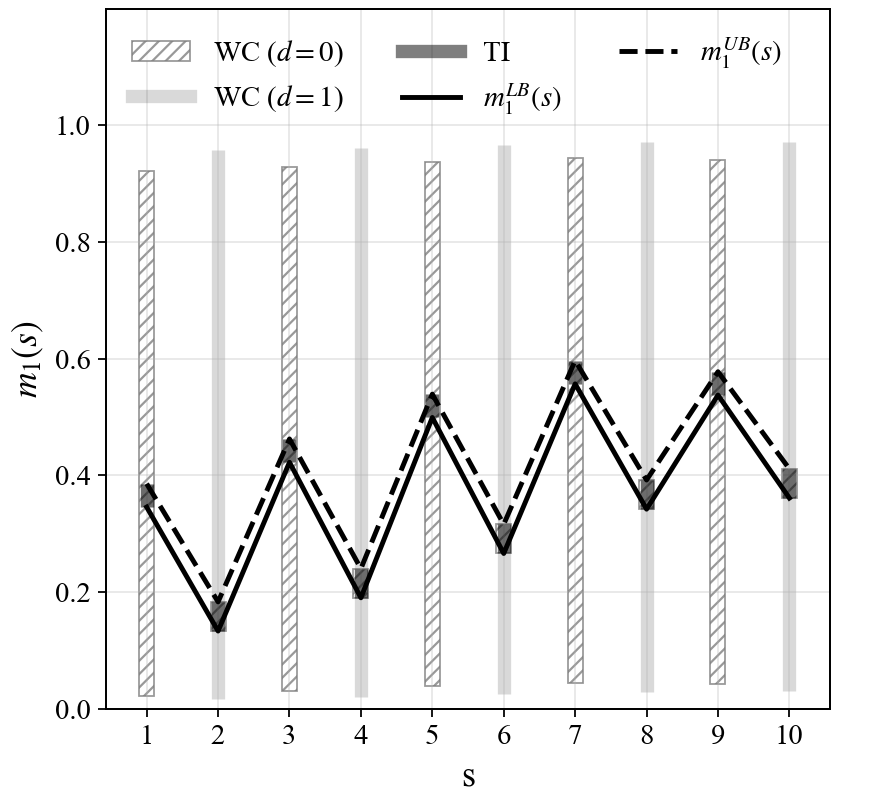}
          \caption{TI bounds on $m_1(s)$}
          \label{fig:liv_ti_ti}
      \end{subfigure}
      \caption{Illustrative example: LIV and TI bounds}
\footnotesize \emph{Notes:} Panel \ref{fig:liv_ti_dgp} specifies the DGP and $\gamma$. Panels \ref{fig:liv_ti_liv} and \ref{fig:liv_ti_ti} depict the worst-case (WC) bounds on $m_d$ given by \eqref{eq:m_wc}, $m_1^{LB}$ and $m_1^{UB}$ which deliver the lower and upper bounds on $E[Y(1)]$.
      \label{fig:liv_ti_illustration}
  \end{figure}

\begin{remark}
Both Assumptions \ref{ass:LMIV} and \ref{ass:TI} can yield point identification of $m$, given $\gamma$. For Assumption \ref{ass:TI}, an important case is when $D$ is constant in the observational data ($G=O$), for instance, when one treatment is not available in that dataset. For Assumption \ref{ass:LMIV}, point identification arises when the WC endpoints are sufficiently non-monotonic that imposing nondecreasing $m_d$ makes it constant (since the lower and upper monotone envelopes coincide). This may happen, for example, when the WC endpoints are strictly decreasing in $s$, or when the endpoints exhibit more pronounced oscillations than in \Cref{fig:liv_ti_illustration}.
\end{remark}

\section{Empirical Illustration: Long-term Effects of Head Start Participation}\label{sect:app}

Head Start is the largest early childhood education program in the United States, serving approximately 730{,}000 low-income preschool-age children in 2023.\footnote{Link: \href{https://headstart.gov/program-data/article/head-start-program-facts-fiscal-year-2023}{https://headstart.gov/program-data/article/head-start-program-facts-fiscal-year-2023} (Last accessed 01/14/2025).} Established in 1965 as part of the ``War on Poverty,'' it was intended to help narrow gaps between disadvantaged and more advantaged children on a national scale. Its long-term effects have been studied extensively, primarily using observational designs. A common approach compares siblings within the same family who did and did not participate in Head Start (\citet{currie1995does}, \citet{garces2002longer}, \citet{deming2009early}, \citet{bauer2016long}), while others exploit variation in program funding, income-based eligibility, or rollout timing to identify local average treatment effects (\citet{ludwig2007does}, \citet{carneiro2014long}, \citet{bailey2021prep}). \citet{kline2016evaluating} instead monetize the experimental LATE of Head Start on test scores using estimates of the test-score--earnings relationship from administrative follow-up of Tennessee Project STAR participants (\citet{chetty2011does}). Despite this long history, the literature has yet to reach consensus on the program’s long-term impacts (\citet{gibbs2011does}, \citet{pages2020elusive}), as the assumptions and generalizability of parameters identified by existing approaches remain debated (\citet{ludwig2008long}, \citet{elango2015early}, \citet{gonzalez2020within}, \citet{garcia2020quantifying}, \citet{miller2023selection}). I illustrate an alternative approach by applying the proposed method to estimate long-term average treatment effects of Head Start for eligible individuals under assumptions that do not restrict selection into treatment.

\subsection{Data}

I combine individual-level data from the Head Start Impact Study (HSIS) and the Child and Young Adult Supplement to the National Longitudinal Survey of Youth 1979 cohort (CNLSY).

HSIS was an experimental trial of Head Start mandated by the 105th US Congress as part of the program’s 1998 reauthorization. In fall of 2002, $4{,}667$ children from nationally representative cohorts aged 3 and 4 were enrolled in the experiment. Across 383 randomly selected Head Start centers, participants were randomized either to a treatment group assigned to enroll in Head Start or to a control group barred from enrolling, yielding $Z\in\{0,1\}$. Since participants were followed only through third grade, HSIS does not contain adolescence or adulthood outcomes that may be of interest. However, it does contain childhood math and reading test scores. I follow \citet{kline2016evaluating} and \citet{kamat2024identifying} by pooling all children into a single cohort, then retaining only those with available scores, resulting in an experimental sample size of $n_E = 3{,}540$.

CNLSY is a biennial longitudinal survey introduced in 1986 that has tracked 11,545 children born to participants in the National Longitudinal Survey of Youth 1979 cohort (NLSY79), which, like HSIS, was designed to be nationally representative. It reveals long-term outcomes previously studied in the literature and non-randomized Head Start participation. As in \citet{deming2009early}, I consider eight long-term outcomes: grade repetition, diagnosis of a learning disability, high school graduation, ``idleness'', criminal involvement, teenage parenthood, self-reported health status, and average earnings. CNLSY also contains comparable childhood math and reading test scores that can be used as $S$, in conjunction with HSIS.\footnote{Fadeout, i.e. disappearance of \textit{average} effects on test-scores over time,  need not invalidate their use as $S$. Here, distributional effects are relevant, not only mean effects, as captured by $\gamma$ (see also \citet{bitler2014experimental}). More generally, even if there are no distributional effects on $S$, $m_d$ may vary across $d$ for the scores.} Following \citet[Section IIA]{carneiro2014long}, I construct the observational sample by selecting individuals who were either eligible for Head Start or reported having participated in the program. This yields an observational sample of size $n_O=2{,}535$. \Cref{app:data_cleaning} provides additional details on data construction.

\Cref{tab:summary_stats} presents descriptive statistics for the two samples. They have comparable gender compositions and similar shares of white and non-white individuals. Moreover, both NLSY79 and HSIS were designed to be nationally representative, and both pertain to the same treatment, lending support to Assumption \ref{ass:ex_validity}. While lower than in the experiment, a significant proportion of CNLSY individuals participate in the program. The rate of compliance with the assigned treatment is $83.8\%$ in the experiment. Imperfect compliance and the availability of the treatment in the observational population preclude direct application of existing data-combination methods that assume either perfect compliance or that the treatment is available only in the experiment. Nevertheless, the method developed here can still be used to estimate the effects under their corresponding identifying assumptions. I illustrate this in the following section by reporting bounds under relevant restrictions following from \citet{athey2025combining} and \citet{garcia2020quantifying}.\footnote{Both assume perfect compliance. In \citet{garcia2020quantifying}, the intervention is available only in the experiment, which is typical of “model” early childhood intervention programs such as the Carolina Abecedarian and Perry Preschool Projects, but not of large-scale programs such as Head Start.}
Despite imperfect compliance, the estimated lower bounds on 
$\gamma_d$ are close to summing to one, suggesting that the combined data are highly informative about the short-term potential outcome distributions.\footnote{In the empirical implementation, I fix 
$\gamma_d$ at these bounds. This linearizes the optimization programs and substantially simplifies computation without resorting to coarse discretization; see Appendix \ref{sect:computational}.}

\begin{table}[htbp]\centering
\caption{Summary Statistics}\label{tab:summary_stats}
\begin{threeparttable}
\begin{tabular}{lcccc}
\toprule
& \multicolumn{2}{c}{HSIS} & \multicolumn{2}{c}{CNLSY} \\
\cmidrule(lr){2-3} \cmidrule(lr){4-5}
Variable & Mean & SD & Mean & SD \\
\toprule
\multicolumn{5}{l}{\textbf{A} \textit{Individual Characteristics}} \\
\hspace{1em}Male                &       0.504&       0.500&       0.512&       0.500\\
\hspace{1em}White               &       0.319&       0.466&       0.278&       0.448\\
\hspace{1em}Math Score          &      50.921&      24.649&      40.963&      26.066\\
\hspace{1em}Reading Score       &      55.225&      24.413&      52.500&      25.853\\
\hspace{1em}Repeat Grade        &       -   &        -   &       0.320&       0.467\\
\hspace{1em}Learning Disability \hspace{3em}&       -   &        -   &       0.057&       0.231\\
\hspace{1em}HS Graduate         &       -   &        -   &       0.847&       0.360\\
\hspace{1em}Idle                &       -   &        -   &       0.173&       0.379\\
\hspace{1em}Crime               &       -   &        -   &       0.389&       0.488\\
\hspace{1em}Teen Pregnancy      &       -      &        -    &      0.244&       0.430\\
\hspace{1em}Poor Health         &       -   &        -   &       0.166&       0.373\\
\hspace{1em}Average Earnings (in 000)&    -        &      -      &   22.166&   17.237\\
\midrule
\multicolumn{5}{l}{\textbf{B} \textit{Program Characteristics}} \\
\hspace{1em}$D$                     &       0.531&       0.499&       0.443&       0.497\\
\hspace{1em}$Z$                     &       0.596&       0.491&     -       &     -       \\
\hspace{1em}$\mathbbm{1}[D=Z]$                     &       0.838&       0.369&     -       &      -      \\
\midrule
Observations & \multicolumn{2}{c}{3,540} & \multicolumn{2}{c}{2,535} \\
\bottomrule
\end{tabular}
\begin{tablenotes}[flushleft]
\footnotesize
\item\textit{Notes}: Summary statistics from the Head Start Impact Study (HSIS) and the Child and Young Adult Supplement of the National Longitudinal Survey of Youth 1979 cohort (CNLSY). $D$ denotes Head Start participation and $Z$ is experimental treatment assignment. $\mathbbm{1}$ denotes the indicator variable, and SD the sample standard deviation.
\end{tablenotes}
 \end{threeparttable}
\end{table}

\subsection{Results}

\Cref{tab:bounds_estimates} reports bound estimates using the consistent estimator proposed in \Cref{sect:estimation} under previously discussed assumptions. Worst-case bounds impose no restrictions on the temporal link functions $m$, coincide with the bounds of \citet{manski1990nonparametric}, as shown in \Cref{sect:roles}, and therefore necessarily include zero. Thus, identifying even the sign of the effect requires additional assumptions. For all outcomes, each of the previously mentioned assumptions substantially reduces the width of the estimated bounds.

\begin{table}[htbp]
\centering
\caption{Bounds Estimates}
\label{tab:bounds_estimates}
\begin{threeparttable}
\begin{tabular}{lccccc}
\toprule
{} &  & \multicolumn{4}{c}{Modeling Assumption} \\
\cmidrule(lr){3-6}
Outcome & $n_O$ & \phantom{-}Worst-case & \phantom{-}LIV & \phantom{-}TI & \phantom{-}LUC \\
\midrule
\multirow{2}{*}{Repeat Grade}
& \multirow{2}{*}{${2{,}455}$} & $-0.476$ & $-0.053$ & $-0.075$ & $-0.008$ \\
& & $\phantom{-}0.524$ & $-0.011$ & $\phantom{-}0.050$ &$-0.008$ \\[4pt]
\multirow{2}{*}{Learning Disability}
& \multirow{2}{*}{${2{,}527}$} & $-0.448$ & $-0.008$ & $-0.076$ & $-0.001$ \\
& & $\phantom{-}0.552$ & $\phantom{-}0.008$ & $\phantom{-}0.050$ & $-0.001$ \\[4pt]
\multirow{2}{*}{HS Graduate}
& \multirow{2}{*}{${2{,}031}$} & $-0.525$ & $\phantom{-}0.019$ & $-0.079$ & $-0.002$ \\
& & $\phantom{-}0.475$ & $\phantom{-}0.032$ & $\phantom{-}0.050$ &$-0.002$ \\[4pt]
\multirow{2}{*}{Idle}
& \multirow{2}{*}{${2{,}517}$} & $-0.474$ & $-0.046$ & $-0.072$ & $-0.030$ \\
& & $\phantom{-}0.526$ & $-0.015$ & $\phantom{-}0.051$ & $-0.030$\\[4pt]
\multirow{2}{*}{Crime}
& \multirow{2}{*}{${2{,}517}$} & $-0.499$ & $-0.040$ & $-0.074$ & $-0.032$ \\
& & $\phantom{-}0.501$ & $-0.012$ & $\phantom{-}0.048$ & $-0.032$\\[4pt]
\multirow{2}{*}{Teen Pregnancy}
& \multirow{2}{*}{${2{,}517}$} & $-0.460$ & $\phantom{-}0.003$ & $-0.072$ & $\phantom{-}0.013$ \\
& & $\phantom{-}0.540$ & $\phantom{-}0.022$ & $\phantom{-}0.050$ &$\phantom{-}0.013$ \\[4pt]
\multirow{2}{*}{Poor Health}
& \multirow{2}{*}{${2{,}517}$} & $-0.466$ & $-0.050$ & $-0.078$ & $-0.010$ \\
& & $\phantom{-}0.534$ & $-0.029$ & $\phantom{-}0.044$ &$-0.010$ \\[4pt]
\multirow{2}{*}{Average earnings (in 000)}
& \multirow{2}{*}{${2{,}395}$} & $-110.895$ & $-3.841$ & $-19.413$ & $-2.449$ \\
& & $\phantom{-}131.208$ & $\phantom{-}1.621$ & $\phantom{-}11.583$ &$-2.449$ \\
\bottomrule
\end{tabular}
\begin{tablenotes}[flushleft]
\footnotesize
\item\textit{Notes}: Estimated bounds, represented as $\genfrac{}{}{0pt}{}{\text{Lower Bound}}{\text{Upper Bound}}$, for different long-term outcomes. Worst-case bounds impose no restrictions on $m$. The remaining bounds impose only the noted modeling assumption.
\end{tablenotes}
\end{threeparttable}
\end{table}

Assumption \ref{ass:LMIV} maintains that temporal link functions are monotonic in each of the two test-score components. For high school graduation and earnings, I impose weak monotonicity increasing in each potential test score. For grade repetition, learning disability diagnosis, ``idleness'', crime, teen pregnancy, and poor health, I impose weak monotonicity decreasing in each potential test score. This assumption may be particularly appealing for outcomes related to education, employment, crime, and earnings. Taking high school graduation as an example, it would hold if, fixing any Head Start participation, individuals with higher math or reading scores are equally or more likely to graduate from high school than individuals with lower scores.

Estimates under Assumption \ref{ass:LMIV} reveal the sign of the effect for all but two outcomes. They further indicate that Head Start participation reduces the probability of grade repetition by at least $1.1$ percentage points (pp), idleness by at least $1.5$ pp, criminal involvement by at least $1.2$ pp, and reporting poor health by at least $2.9$ pp. The corresponding upper bounds are also quite informative. Overall, the estimates suggest that beneficial effects on grade repetition, learning disability, high school graduation, idleness, and poor health may be more modest than reported by the sibling study in \citet{deming2009early}. Compared to the same study, the effect on criminal involvement found here has the expected sign, while the impact on teen pregnancy does not. Moreover, because the present data allow for longer follow-up than in \citet{deming2009early}, they reveal earnings for a larger fraction of individuals. However, the sign of the effect on earnings remains unidentified.

Next, I turn to illustrative results relying on temporal link function restrictions that follow from previously proposed methods. These methods cannot be directly utilized here because of imperfect compliance and the availability of the treatment in CNLSY. However, as argued above, the proposed framework can be combined with the relevant assumptions, thereby extending their applicability. Estimates under Assumption \ref{ass:TI} also lead to a substantial reduction in the width of the bounds, though none exclude zero. Assumption \ref{ass:TI} would hold if the model proposed by \citet{garcia2020quantifying} is credible. That said, HSIS does not contain all of the short-term outcomes included by the authors in their model of the Carolina Abecedarian and CARE programs, nor the medium-term experimental outcomes they use to validate their choice of short-term outcomes. The corresponding results should therefore be interpreted with caution. The final column reports estimates under Assumption LUC from \citet{athey2025combining}. Since this assumption point-identifies $m$ and the $\gamma_d$ are fixed by empirical bounds, the estimated bounds for the LTE collapse to points. However, the estimated signs are not aligned with previous findings for some outcomes. As emphasized by \citet{imbens2024long}, one reason for this may be that the estimates are biased by so-called long-term confounders--unobservables that relate Head Start participation, the short-term test scores, and the long-term outcome.

\section{Conclusion}\label{sect:conclusion}

Recent literature proposes augmenting long-term observational studies with short-term experiments to provide alternatives to conventional long-term observational studies. This paper shows that such data combination is not a substitute for credible modeling assumptions. Nevertheless, it remains appealing for this purpose. Assumptions relating short-term to long-term potential outcomes may be defensible based on economic theory or intuition, and thus conducive to plausible inference. Data combination may be used to amplify the identifying power of such restrictions and thereby may yield more informative plausible inference than observational data alone.

This paper introduces two assumptions that exploit this feature of data combination. It also provides a general identification approach that enables computational derivation of bounds under new modeling assumptions, facilitating further developments. Tailor-made assumptions that are plausible in specific empirical settings are an interesting topic for future research, which may benefit from these results.

\printbibliography

\begin{appendices}

\section{Additional Discussions}\label{sect:extensions}
\titleformat{\subsection}[block]{\normalfont\large\bfseries}{A.\arabic{subsection}}{1em}{}

\subsection{Discretization of Short-term Outcomes}\label{sect:discretization}

In this section, I clarify the implications of discretizing short-term outcomes. To this end, let a researcher pose a surjective discretization function $\lambda: \mathcal{S}\rightarrow \mathcal{S}^D:= \{1,2,\hdots, k\}$ for some $k<\infty$, and define $S^D(d) = \lambda(S(d))$. Note that this subsumes the case in which $S(d)$ is finitely supported, since then $\lambda(s) = s$ for all $s\in\mathcal{S}$. I introduce $\lambda$ to clarify the subtle differences in applications of results of \Cref{sect:implementation} when $S(d)$ is finitely supported and discretized. Similarly define discretized temporal link functions $m^D_d:\mathcal{S}^D\rightarrow\mathcal{Y}$, given by $m^D_d = E[Y(d)|S^D(d)] = E[Y(d)|\lambda(S(d))]$, and let $m^D = (m_0^D,m_1^D)$. Pose the following analog of Assumption \ref{ass:modeling_generic} under the discretization.

\begin{myassump}{MA:D}\label{ass:selection_discrete}
 Suppose $\mathcal{M}^A$ and $\mathcal{M}^D$ are known or identified sets, and that $m\in\mathcal{M}^A\subseteq\mathcal{M}$.
    Then $\lambda$ is such that $m^D\in\mathcal{M}^D$.
\end{myassump}

 Assumptions \ref{ass:modeling_generic} and \ref{ass:selection_discrete} are closely related. The former maintains that the researcher imposes some modeling assumption that will restrict feasible $m$, as in \Cref{sect:modeling_assumptions}. The latter strengthens this notion and assumes that additionally $m^D$ satisfies known restrictions after discretization. Of course, if Assumption \ref{ass:modeling_generic} holds for a finitely supported $S(d)$, then Assumption \ref{ass:selection_discrete} trivially follows by taking $\lambda$ to be an identity function up to necessary relabeling of $S(d)$ values, if any. The remark below explains that for some modeling assumptions and discretization functions, \ref{ass:selection_discrete} follows immediately from \ref{ass:modeling_generic}, but that it may be restrictive for others.

\begin{remark}\label{rem:discretization}
Consider Assumption \ref{ass:LMIV} when $S(d)$ is a scalar stating that $E[Y(d)|S(d)=s]$ are nondecreasing functions. Then $E[Y(d)|S^D=s]$ must also be nondecreasing for any order-preserving $\lambda$, so Assumption \ref{ass:selection_discrete} holds for an appropriately chosen $\lambda$. However, LUC states that $m_d(s) = E_O[Y|S=s,D=d]$, which does not directly imply that $m_d^D(s)=E[Y|S^D=s,D=d]$. A similar remark can be made for treatment invariance.
\end{remark}

If $S(d)$ is finitely supported, \ref{ass:modeling_generic} and  \ref{ass:selection_discrete} are equivalent and \Cref{sect:implementation} characterizes the identified set. If $S(d)$ is discretized and Assumption \ref{ass:selection_discrete} holds as a direct consequence of Assumption \ref{ass:modeling_generic}, such as under LIV, then results characterize the identified set $\mathcal{H}(\tau)$ that is sharp \textit{under finitely-supported short-term outcomes}.\footnote{Note that this set may be larger than the intractable identified set that would have been obtained using non-discretized data.} This is also the case if the researcher believes the modeling assumption holds under discretized data, i.e., is willing to maintain \ref{ass:selection_discrete} directly. Otherwise, the results in \Cref{sect:implementation} should be viewed as providing an approximation of the identified set.

\subsection{Data Creation}\label{app:data_cleaning}

The construction of the two samples follows previous related work. The definitions of the long-term outcomes in CNLSY follow \citet{deming2009early}. Grade retention and diagnosis of a learning disability are defined as having reported being retained in any grade in school and being diagnosed with a learning disability, respectively. High school graduation is defined as reporting having graduated from high school or completing General Educational Development certification. Individuals are classified as idle if, in their most recent interview year, they report neither wages nor school attendance. Criminal involvement is defined as ever reporting having been convicted of a crime, placed on probation, sentenced by a judge, or incarcerated. Health status is measured by averaging responses to a Likert scale item on self-reported health status and generating an indicator equal to one if it is below three on a five-point scale. Finally, earnings are averaged across all reported values and converted to 2020 dollars using the Bureau of Labor Statistics Consumer Price Index.

Program participation in CNLSY is determined by the question of whether the child has ever attended Head Start, while HSIS contains indicators for true participation.\footnote{In this paper, I abstract away from substitution bias (\citet{heckman2000substitution}), as in the main analysis of \citet{garcia2020quantifying}. I consider the effects of Head Start participation compared to non-participation, irrespective of the take-up of alternatives.} Eligibility of non-participants in CNLSY must be inferred. For this, I rely on eligibility variables constructed by \citet[Section IIA]{carneiro2014long}. Eligibility is inferred by determining whether the child met the contemporaneous program requirements based on survey responses. Children ages three to five have been eligible if their family income is below the federal poverty line, or if their family has been eligible for any of the following public assistance programs: Aid to Families with Dependent Children (AFDC) or Temporary Assistance for Needy Families (TANF) after 1996, or Supplemental Security Income (SSI). Poverty status is verified by comparing the reported family income at ages three to five with the relevant federal poverty line, which is dependent on the family size and year. Eligibility for AFDC/TANF is determined based on two family income tests: the gross income test and the countable income test, as well as other pertinent categorical requirements. The income tests have state-specific thresholds that may vary by year and family size. Additionally, AFDC requires a specific family structure: either it must be female-headed, or with an unemployed main earner. The observational sample consists of individuals who were either determined to have been eligible for Head Start, or have participated in the program based on the relevant responses. 

HSIS contains Woodcock-Johnson III (WJ-III) cognitive assessment scores, which are used as short-term outcomes. For compatibility with the observational data, following \citet{griffen2017assessing}, I create composite scores for math and reading ability by averaging the national-level percentile scores on the corresponding components of the cognitive ability test and using them as a two-dimensional $S$ binned to a total of $400$ support points. As corresponding measures of math and reading ability in CNLSY, I take the percentile scores on the Peabody Individual Achievement Math and Reading Recognition subtests, also binned to a total of $400$ support points. 

\section{Proofs}\label{sect:proofs}
\titleformat{\subsection}[block]{\normalfont\large\bfseries}{B.\arabic{subsection}}{1em}{}

This section contains the proofs of the main results. It begins by summarizing notation. Supplemental Appendix \ref{sect:appendix_lemmas} contains auxiliary lemmas and propositions, with their proofs.

\

{\noindent\large \textbf{Preliminaries and Notation}}

\
Equality of distribution of two random elements or a random element and a law is denoted by $\eqd$ (e.g. $R\eqd P_R$ and $R\eqd R'$). I denote random sets with boldface letters (e.g. $\mathbf{R}$) and use $\mathbf{R},Z$ to denote the random set $\mathbf{R}\times \{Z\}$. $\mathbb{E}(\mathbf{R}|X)$ is used for the conditional Aumann expectation of a random set $\mathbf{R}$ given a sigma-algebra generated by a random vector $X$. The set of all selections of $\mathbf{R}$ is denoted by $Sel(\mathbf{R})$. The set of all random vectors $R\in Sel(\mathbf{R})$ such that $E[||R||]<\infty$ is denoted by $Sel^1(\mathbf{R})$. I use $\eqd$ to denote that a random element has a law, or an equivalent distribution-determining functional. $A$, $B$ and $K$ represent sets.  $\mathcal{C}(A)$ and $\mathcal{B}(A)$ are the families of all closed and Borel subsets of the set $A$, respectively. $co(A)$ is the closed convex hull of the set $A$. The identified set for a generic parameter $\theta$ is $\mathcal{H}(\theta)$. The set of distribution functions of random vectors with support $\mathcal{R}$ is $\mathcal{P}^\mathcal{R}$. I assume throughout that $\mathcal{Y}\times\mathcal{S}$ is a locally compact, second countable Hausdorff space, more precisely $\mathbb{R}^{1+d}$ endowed with its natural topology, while any of its subspaces inherit their relative topologies.

In the proofs for simpler notation, I will use the following random variable:
\begin{align}\label{eq:tildez}
    \Tilde{Z} = \mathbbm{1}[G=E] Z+\mathbbm{1}[G=O](sup\mathcal{Z}+1).
\end{align}

I use LIE to refer to the “law of iterated expectations".

\subsection{Main Results}\label{sect:main_proofs}
\printProofs

\begin{lemma}\label{lem:reverse_aumann}
    Let $\Bar{I}$ be the set of random elements $(E_1,E_2)$ such that $(E_1,E_2)\in \mathcal{S}\times\Tilde{\mathcal{Z}}$ and $E_1\independent E_2$. Suppose that $E_O[|Y(d)|]<\infty$ for any $d\in\{0,1\}$. For any $(m,\gamma)$ such that $\forall d\in\{0,1\}:$ \text{ $\exists (\varsigma_d,\Tilde{Z})\in Sel((\mathbf{S}_d,\Tilde{Z}))\cap \Bar{I}$ with $\gamma_d\eqd \varsigma_d$,} and $m_d$ that is $\gamma_d$ integrable satisfying:
    \begin{align}\label{eq:m_conditions}
         \text{ $\forall u\in\{-1,1\}:\enskip u m_d(s)\leq u\mu_d(s)\pi_{\gamma_d}(s)+ h_{co(\mathcal{Y})}(u)(1-\pi_{\gamma_d}(s)),\text{  $\gamma_d-$a.e.}$} 
    \end{align}
    there exist $\upsilon_d\in Sel^1(\mathbf{Y}_d)$ such that $m_d(\varsigma_d) = E_O[\upsilon_d|\varsigma_d]$ a.s. for $d\in\{0,1\}$.
\end{lemma}
\begin{proof}
Fix any $(m,\gamma)$ satisfying the conditions. For $d\in\{0,1\}$, let $(\varsigma_d,\Tilde{Z})\in Sel((\mathbf{S}_d,\Tilde{Z}))\cap \Bar{I}$ be such that
$\varsigma_d\eqd \gamma_d$. Note that $P_O(\varsigma_d \in\cdot ) =\gamma_d(\cdot)$ since $\varsigma_d\eqd \gamma_d$ and $(\varsigma_d,\Tilde{Z})\in\Bar{I}$. Recall that $\mu_d(s) = E_O[Y|S=s,D=d] = E_O[Y|\varsigma_d=s,D=d]$ $\gamma_d-$a.e., where the second equality follows by observing that $P_O(\varsigma_d = S|D=d)=1$ since $\varsigma_d \in Sel(\mathbf{S}_d)$ and $\mathbf{S}_d = \{S\}$ when $D=d$. $E_O[|Y\mathbbm{1}[D=d]|]\in\mathbb{R}$ since $E_O[|Y\mathbbm{1}[D=d]|]\leq E_O[|Y(d)|]<\infty$ where the final inequality is by assumption. Also, $\gamma_d-$a.e:
\begin{align}\label{eq:pi_mu_identity_rev}
\begin{split}
E_O\big(Y\mathbbm{1}[D=d]| \varsigma_d=s\big)
&=E_O(Y| \varsigma_d=s,D=d)P_O(D=d| \varsigma_d=s)=\mu_d(s)\pi_{\gamma_d}(s),
\end{split}
\end{align}
and thus $\mu_d(s)\pi_{\gamma_d}(s)\in\mathbb{R}$ $\gamma_d-$a.e. Fix an arbitrary $y_0\in\mathcal Y$. Define:
\begin{align}\label{eq:c_def_rev}
c_d(s):=
\begin{cases}
\dfrac{m_d(s)-\pi_{\gamma_d}(s)\mu_d(s)}{1-\pi_{\gamma_d}(s)}, & \text{if }\pi_{\gamma_d}(s)<1,\\[6pt]
y_0, & \text{if }\pi_{\gamma_d}(s)=1.
\end{cases}
\end{align}

Note that $c_d(s)\in\mathbb R$ $\gamma_d-$a.e., since $c_d(s) = y_0\in\mathbb{R}$ on $\{\pi_{\gamma_d}=1\}$, and on $\{\pi_{\gamma_d}<1\}$ $m_d$ is finite $\gamma_d-$a.e. and $\mu_d(s)\pi_{\gamma_d}(s)\in\mathbb R$ $\gamma_d-$a.e. By \eqref{eq:m_conditions}, $\mu_d(s) = m_d(s)$ whenever $\pi_{\gamma_d}(s)=1$, so then:
\begin{align}\label{eq:key_id_rev}
(1-\pi_{\gamma_d})c_d=m_d-\pi_{\gamma_d}\mu_d \qquad \gamma_d-\text{a.e.}
\end{align}

For any $s$ such that $\pi_{\gamma_d}(s)<1$ note that $\forall u \in\{-1,1\}$:
\begin{align}
    \begin{split}
u m_d(s)\leq u\mu_d(s)\pi_{\gamma_d}(s)+h_{co(\mathcal{Y})}(u)(1-\pi_{\gamma_d}(s)) \Rightarrow u c_d(s) \leq h_{co(\mathcal{Y})}(u).
    \end{split}
\end{align}
Hence, by \citet[Theorem 13.1]{rockafellar1970convex}, $c_d(s)\in co(\mathcal{Y})$ whenever $\pi_{\gamma_d}(s)<1$. Thus also $c_d(s)\in co(\mathcal{Y})$, since $c_d(s) = y_0\in\mathcal{Y}$ when $\pi_{\gamma_d}(s)=1$. Define for $t\in\mathbb{R}$:
\begin{align}\label{eq:y_pm_rev}
y_-(t):=\sup(\mathcal{Y}\cap(-\infty,t]),\qquad
y_+(t):=\inf(\mathcal{Y}\cap[t,\infty)).
\end{align}
For $t\in co(\mathcal{Y})\cap\mathbb{R}$, $y_-(t),y_+(t)\in\mathcal{Y}$ and $y_-(t)\leq t\leq y_+(t)$.
Moreover, $y_-$ and $y_+$ are monotone, hence Borel measurable. $\mathcal{Y}$ is closed, by definition. Fix values:
\begin{align}\label{eq:anchors_rev}
a:=
\begin{cases}
\max(\mathcal{Y}\cap(-\infty,0]), & \text{if }\mathcal{Y}\cap(-\infty,0]\neq\emptyset,\\
\min\mathcal{Y}, & \text{otherwise},
\end{cases}
\enskip
b:=
\begin{cases}
\min(\mathcal{Y}\cap[0,\infty)), & \text{if }\mathcal{Y}\cap[0,\infty)\neq\emptyset,\\
\max\mathcal{Y}, & \text{otherwise}.
\end{cases}
\end{align}
Let $C_\mathcal{Y}:=|a|+|b|$. Observe that $a,b\in\mathbb{R}\cap\mathcal{Y}$ and hence $C_\mathcal{Y}<\infty$. Define on the event $\{\pi_{\gamma_d}<1\}$:
\begin{align}\label{eq:y12_rev}
(y_1(s),y_2(s)):=
\begin{cases}
(y_-(c_d(s)),a), & \text{if }c_d(s)\leq a,\\
(a,b), & \text{if }a<c_d(s)<b,\\
(b,y_+(c_d(s))), & \text{if }c_d(s)\geq b,
\end{cases}
\end{align}

On the event $\{\pi_{\gamma_d}=1\}$, set $y_1(s)=y_2(s)=y_0$. Since $y_-(t),y_+(t)\in\mathcal{Y}$ for any $t\in co(\mathcal{Y})$, $c_d(s)\in co(\mathcal{Y})$, and $a,b\in\mathbb{R}\cap\mathcal{Y}$, then $P_O(y_i(\varsigma_d)\in\mathcal{Y}) = 1$ for $d\in\{0,1\}$ and $i\in\{1,2\}$. Finally, define:
\begin{align}\label{eq:p_rev}
p(s):=
\begin{cases}
\dfrac{y_2(s)-c_d(s)}{y_2(s)-y_1(s)}, & \text{if }y_2(s)>y_1(s),\\[6pt]
1, & \text{if }y_2(s)=y_1(s).
\end{cases}
\end{align}

Recall that for any $t\in co(\mathcal{Y})\cap\mathbb{R}$, $y_-(t)\leq t\leq y_+(t)$. Since $c_d(s)\in co(\mathcal Y)$ and $c_d(s)\in \mathbb{R}$, then it must be that $y_-(c_d(s))\leq c_d(s)\leq y_+(c_d(s))$ and therefore by \eqref{eq:y12_rev} $y_1(s)\leq c_d(s)\leq y_2(s)$ $\gamma_d-$a.e. on the event $\{\pi_{\gamma_d}<1\}$. On $\{\pi_{\gamma_d}=1\}$, $y_1(s)=y_2(s) = c_d(s) = y_0$. Then $p(s)\in[0,1]$ and $\gamma_d-$a.e:
\begin{align}\label{eq:mix_mean_rev}
p(s)y_1(s)+(1-p(s))y_2(s)=c_d(s).
\end{align}

After possibly enlarging the probability space, there exists a random variable $U:\Omega\to[0,1]$ which is uniformly distributed on $[0,1]$ with $U\independent(Y,D,\varsigma_d)|G=O$. Define:
\begin{align}\label{eq:tildeY_def_rev}
\Tilde{Y}_d:=\Big(y_1(\varsigma_d)\mathbbm{1}[U\leq p(\varsigma_d)]+y_2(\varsigma_d)\mathbbm{1}[U>p(\varsigma_d)]\Big)\mathbbm{1}[G=O]+y_0\mathbbm{1}[G=E].
\end{align}
Since $P_O(y_i(\varsigma_d)\in\mathcal{Y}) = 1$ for $d\in\{0,1\}$ and $i\in\{1,2\}$, it is immediate that $\Tilde{Y}_d\in\mathcal{Y}$ a.s. It is also direct that $\Tilde{Y}_d\independent D|\varsigma_d, G=O$ since $U\independent(Y,D,\varsigma_d)|G=O$. Then:
\begin{align}\label{eq:tildeY_condmean_rev}
E_O[\Tilde{Y}_d| \varsigma_d]= p(\varsigma_d)y_1(\varsigma_d)+(1-p(\varsigma_d))y_2(\varsigma_d)=c_d(\varsigma_d)\qquad \text{a.s.}
\end{align}
where the first equality is by $U\independent(Y,D,\varsigma_d)|G=O$ and the definition of $\Tilde{Y}_d$, and the second is by \eqref{eq:mix_mean_rev}. Finally, define:
\begin{align}\label{eq:upsilon_def_rev}
\upsilon_d:=Y\mathbbm{1}[D=d,G=O]+\Tilde{Y}_d\mathbbm{1}[D\neq d\vee G\neq O].
\end{align}
By construction, $\upsilon_d=Y$ on $\{D=d,G=O\}$ and $\upsilon_d\in\mathcal{Y}$ on $\{D\neq d\vee G\neq O\}$, hence $\upsilon_d\in Sel(\mathbf{Y}_d)$. 
Then $\gamma_d-$a.e.:
\begin{align}\label{eq:condmean_final_rev}
\begin{split}
E_O[\upsilon_d| \varsigma_d=s]
&=E_O\big[Y\mathbbm{1}[D=d]| \varsigma_d=s\big]+E_O\big[\Tilde{Y}_d\mathbbm{1}[D\neq d]| \varsigma_d=s\big]\\
&=\pi_{\gamma_d}(s)\mu_d(s)+(1-\pi_{\gamma_d}(s))c_d(s)
=m_d(s),
\end{split}
\end{align}
where the first equality is by observation, the second is by \eqref{eq:pi_mu_identity_rev}, $\Tilde{Y}_d\independent D|\varsigma_d, G=O$ and \eqref{eq:tildeY_condmean_rev}, and the last equality is by \eqref{eq:key_id_rev}. Therefore, $m_d(\varsigma_d)=E_O(\upsilon_d| \varsigma_d)$ a.s.

It remains to show that $E[|\upsilon_d|]<\infty$ so that $\upsilon_d\in Sel^1(\mathbf{Y}_d)$. To that end, observe that whenever $c_d(s)\leq a$ or $c_d(s)\geq b$, $y_1(s)$ and $y_2(s)$ have the same sign. If $\mathcal Y\cap(-\infty,0]=\emptyset$, then $\min(\mathcal Y)>0$ and by definition $a=\min\mathcal Y$ and $b=\min(\mathcal Y\cap[0,\infty))=\min\mathcal Y$. Hence $a=b>0$ and $\mathcal Y\subseteq [a,\infty)\subset(0,\infty)$.
Therefore, since $y_1(s),y_2(s)\in\mathcal Y$,  $y_1(s)\geq 0$ and $y_2(s)\geq 0$. If $\mathcal Y\cap[0,\infty)=\emptyset$, then $\max(\mathcal Y)<0$ and similarly $\mathcal Y\subset(-\infty,0)$, so $y_1(s)\leq 0$ and $y_2(s)\leq 0$. If both $\mathcal Y\cap(-\infty,0]\neq\emptyset$ and $\mathcal Y\cap[0,\infty)\neq\emptyset$, then by definition
$a=\max(\mathcal Y\cap(-\infty,0])\leq 0$ and $b=\min(\mathcal Y\cap[0,\infty))\geq 0$.
When $c_d(s)\leq a$, $(y_1(s),y_2(s))=(y_-(c_d(s)),a)$, and since $y_-(c_d(s))\leq c_d(s)\leq a\leq 0$ then
$y_1(s)\leq 0$ and $y_2(s)\leq 0$. When $c_d(s)\geq b$, $(y_1(s),y_2(s))=(b,y_+(c_d(s)))$, and since $y_+(c_d(s))\geq c_d(s)\geq b\geq 0$ then
$y_1(s)\geq 0$ and $y_2(s)\geq 0$.

Hence, for $s$ with $\pi_{\gamma_d}(s)<1$:
\begin{align}\label{eq:L1_bound_tilde_rev}
E_O[|\Tilde{Y}_d|| \varsigma_d=s]\leq |c_d(s)|+C_{\mathcal{Y}},
\end{align}
because if $c_d(s)\leq a$ or $c_d(s)\geq b$ then $y_1(s),y_2(s)$ have the same sign and
$E_O[|\Tilde{Y}_d|| \varsigma_d=s]=|E_O[\Tilde{Y}_d| \varsigma_d=s]|=|c_d(s)|$; if $a<c_d(s)<b$ then
$\Tilde{Y}_d\in\{a,b\}$ so $E_O[|\Tilde{Y}_d|| \varsigma_d=s]\leq |a|+|b|=C_{\mathcal{Y}}$. Then:
\begin{align}\label{eq:L1_tilde_part_rev}
\begin{split}
E_O\big[|\Tilde{Y}_d|\mathbbm{1}[D\neq d]\big]
&=E_O\Big[(1-\pi_{\gamma_d}(\varsigma_d))E_O[|\Tilde{Y}_d|| \varsigma_d]\Big]\\
&\leq \int (1-\pi_{\gamma_d})|c_d|d\gamma_d + C_{\mathcal{Y}}P_O(D\neq d)\\
&=\int |m_d-\pi_{\gamma_d}\mu_d|d\gamma_d+ C_{\mathcal{Y}}P_O(D\neq d)\\
&\leq \int |m_d|d\gamma_d+\int|\pi_{\gamma_d}\mu_d|d\gamma_d+ C_{\mathcal{Y}}P_O(D\neq d)\\
&= \int |m_d|d\gamma_d+\int\big|E_O(Y\mathbbm{1}[D=d]| \varsigma_d)\big|d\gamma_d+ C_{\mathcal{Y}}P_O(D\neq d)\\
&\leq \int |m_d|d\gamma_d+E_O\big(|Y|\mathbbm{1}[D=d]\big)+ C_{\mathcal{Y}}P_O(D\neq d)<\infty
\end{split}
\end{align}
where the first line is by LIE, the second uses $(1-\pi_{\gamma_d}(\varsigma_d))=0$ on $\{\pi_{\gamma_d}(\varsigma_d)=1\}$ and \eqref{eq:L1_bound_tilde_rev} on $\{\pi_{\gamma_d}(\varsigma_d)<1\}$, the third by \eqref{eq:key_id_rev} and $\pi_{\gamma_d}\in [0,1]$, the fourth by the triangle inequality, the fifth by \eqref{eq:pi_mu_identity_rev}, the sixth by Jensen's inequality and LIE, and the final since $m_d$ is $\gamma_d$ integrable, $E_O[|Y\mathbbm{1}[D=d]|]\leq E_O[|Y(d)|]<\infty$, and $C_{\mathcal{Y}}<\infty$. 
Then by \eqref{eq:upsilon_def_rev}, \eqref{eq:L1_tilde_part_rev} and $E_O[|Y\mathbbm{1}[D=d]|]<\infty$, $E_O[|\upsilon_d|]<\infty$. Moreover, $E_E[|\upsilon_d|] = |y_0|<\infty$ and hence $E[|\upsilon_d|]<\infty$ so $\upsilon_d\in Sel^1(\mathbf{Y}_d)$.
\end{proof}

\end{appendices}

\newpage

\setcounter{section}{0}
\renewcommand{\thesection}{S.\arabic{section}}
\renewcommand{\thesubsection}{S.\arabic{section}.\arabic{subsection}}

\renewcommand{\thelemma}{S.\arabic{lemma}}
\renewcommand{\theproposition}{S.\arabic{proposition}}
\renewcommand{\thetheorem}{S.\arabic{theorem}}

\begin{center}
\vspace*{-1.5cm}

{\LARGE
Identification of Long-Term Treatment Effects via\\
Temporal Links, Observational, and Experimental Data\par}

\vspace{0.6em}
{\large Supplemental Appendix\par}

\vspace{0.8em}
{\large Filip Obradovi\'c\textsuperscript{*}\par}
\end{center}

\begingroup
\makeatletter
\renewcommand{\@makefntext}[1]{%
  \noindent \makebox[1.8em][r]{\textsuperscript{*}}#1
}
\makeatother

\footnotetext[1]{UCLA, Department of Economics. Email: \href{mailto:obradovicfilip@ucla.edu}{obradovicfilip@ucla.edu}}
\endgroup

\vspace{-0.5in}
\renewcommand{\abstractname}{}
\begin{abstract}
This supplement: i) develops a consistent estimation procedure; ii) discusses computational simplifications; and iii) collects auxiliary technical results used in the main text.
\end{abstract}

\setcounter{page}{1}

\section{Estimation}\label{sect:estimation}

This section develops a consistent estimator building on \Cref{sect:identification}. Suppose that the researcher observes experimental and observational samples $\{(S_j,D_j,Z_j)\}_{j=1}^{n_E}$ and $\{(Y_i,S_i,D_i)\}_{i=1}^{n_O}$, respectively. Define $n:=\min\{n_O,n_E\}$. Let $\theta\in\Theta\subseteq\mathbb{R}^{d_\theta}$ be the subvector of $(m,\gamma)$ that may be partially identified under maintained assumptions, and $\mathcal{H}(\theta)$ its identified set. Let $\vartheta$ be a (possibly empty) subvector collecting components of $(m,\gamma)$ point identified by assumption.

The identified set $\mathcal{H}(\theta)$ is characterized by a collection of finitely many inequality and equality constraints, under appropriate structure on $\mathcal{M}^A$. Recall that $k=|\mathcal{S}|$ and let $\mu_d$ be a $k$-dimensional vector with components $\mu_d(s) = E_O[Y|S=s,D=d]$. Let $\eta_d$ be a $k\times(|\mathcal{Z}|+1)$ matrix with the element $(s,z)$ being $\eta_d(s,z)=P_E(S=s,D=d|Z=z)$ for $z\leq |\mathcal{Z}|$ and $\eta_d(s,z)=P_O(S=s,D=d)$ for $z=|\mathcal{Z}|+1$. 
Collect $\beta = (\mu_0,\mu_1,\eta_0,\eta_1,\vartheta,\Tilde{\beta})\in\mathfrak{B}$, where $\Tilde{\beta}\in\Tilde{\mathfrak{B}}$ is a (possibly empty) vector of other identified features of the population distribution defining $\mathcal{H} (\theta)$. For some known functions $\Tilde{h}$ and $\Tilde g$, one can then write:
\begin{align}
\begin{split}
         \mathcal{H}(\theta)=&\left\{
       \theta\in\Theta: \Tilde{h}(\theta,\beta)\geq 0, \enskip \Tilde g(\theta,\beta)=0
    \right\}.
\end{split}
\end{align}

\begin{example}
    Suppose that the researcher imposes Assumption \ref{ass:LMIV}, which allows partial identification of $m$, and that $\gamma$ may be partially identified due to imperfect compliance. Then $\theta=(m,\gamma)$, $\vartheta$ is empty and:
    \begin{align}\label{eq:ex_inequalities}
\begin{split}
         \mathcal{H}(\theta)=\mathcal{H}(m,\gamma)        =&\left\{\begin{array}{l}
        (m,\gamma)\in\mathcal{Y}^{2k}\times (\Delta(k))^2: \text{$\forall d\in\{0,1\}$, $ \forall s\in\mathcal{S}$,}\\
        m_d(s')- m_d(s)\geq0 \text{ for  $s'\geq s$ in the product order,}\\
         \text{$\gamma_d(s)- ess\sup_z\eta_d(s,z) \geq 0,$}\\
    \text{$\left(m_d(s)-\inf\mathcal{Y}\right)\gamma_d(s)-  \left(\mu_d(s)-\inf\mathcal{Y}\right)\eta_d(s,|\mathcal{Z}|+1)\geq 0$ },\\
    \text{$\left(\sup\mathcal{Y}-m_d(s)\right)\gamma_d(s)-  \left(\sup\mathcal{Y}-\mu_d(s)\right)\eta_d(s,|\mathcal{Z}|+1)\geq 0$ }
    \end{array}\right\}.
\end{split}
\end{align}
If one additionally assumes that datasets jointly point-identify $\gamma$, such as under perfect compliance, then $\theta=m$ and $\vartheta=(\gamma_0,\gamma_1)$ where $\gamma_d(s)= ess\sup_z\eta_d(s,z)$ for any $d\in\{0,1\}$ and $s\in\mathcal{S}$. Additionally, $\mathcal{H} (m,\gamma) = \mathcal{H}(\theta)\times\{\gamma\} $ with $\mathcal{H}(\theta)$ defined by restrictions as in \eqref{eq:ex_inequalities} with $\gamma_d(s)-ess\sup_z\eta_d(s,z)\geq 0$ omitted. If one imposes \ref{ass:TI} instead of \ref{ass:LMIV}, constraints $m_0(s)-m_1(s)=0$ replace $ m_d(s')- m_d(s)\geq0$ in the definitions of $\mathcal{H}(\theta)$.

\end{example}

\begin{example}
    Suppose that the researcher imposes Assumption LUC from \citet{athey2025combining}, which point identifies $m$, and uses the framework to account for imperfect compliance, so that $\gamma$ may be partially identified. Then $\theta=\gamma$, and $\vartheta=(m_0,m_1)$ with $m_d(s) = E_O[Y|S=s,D=d]$ for any $d\in\{0,1\}$ and $s\in\mathcal{S}$. Additionally:
    \begin{align}
\begin{split}
         \mathcal{H}(\theta)=\mathcal{H}(\gamma)        =&\left\{\begin{array}{l}
        \gamma\in(\Delta(k))^2: \text{$\forall d\in\{0,1\}$, $ \forall s\in\mathcal{S}$,}\\
         \text{$\gamma_d(s)- ess\sup_z\eta_d(s,z) \geq 0,$}
    \end{array}\right\}.
\end{split}
\end{align}
\end{example}

\noindent The identified set $\mathcal{H}(\theta)$ can be equivalently represented via a criterion function. Define:
\begin{align}
    Q(\theta,\beta)
    := \|\tilde{g}(\theta,\beta)\|_1
     + \sum_{t=1}^{q}\max\left(-\tilde{h}_t(\theta,\beta), 0\right),
\end{align}
where $q$ denotes the number of inequality constraints in $\tilde{h}(\theta,\beta)$. If the assumptions hold, $\mathcal{H}(\theta)\neq\emptyset$, so $\bar{Q}:=\min_{\theta\in\Theta}Q(\theta,\beta)=0$ and $    \mathcal{H}(\theta)
    = \argmin_{\theta\in\Theta} Q(\theta,\beta)
    = \left\{\theta\in\Theta: Q(\theta,\beta) = 0\right\}.$ Writing $T(\theta,\vartheta)$ for $T(m,\gamma)$ evaluated at the $(m,\gamma)$ implied by $(\theta,\vartheta)$:
\begin{align}
\begin{split}
    \min_{(m,\gamma)\in\mathcal{H}(m,\gamma)} T(m,\gamma) = \min_{\theta\in\mathcal{H}(\theta)}  T(\theta,\vartheta)= \min_{\theta\in\Theta}  T(\theta,\vartheta)
    &\quad\text{s.t.}\quad Q(\theta,\beta) \leq \bar{Q},\\
    \max_{(m,\gamma)\in\mathcal{H}(m,\gamma)} T(m,\gamma)= \max_{\theta\in\mathcal{H}(\theta)}  T(\theta,\vartheta)
    = \max_{\theta\in\Theta}  T(\theta,\vartheta)
    &\quad\text{s.t.}\quad Q(\theta,\beta) \leq \bar{Q}.
\end{split}
\end{align}

Let $\beta_n$ be a consistent estimator of $\beta$ obtained from the empirical analogs, and let $\vartheta_n$ denote the subvector of $\beta_n$ corresponding to $\vartheta$. Denote by $\bar{Q}_n:=\min_{\theta\in\Theta}Q(\theta,\beta_n)$. The
criterion estimator of $\mathcal{H}(\tau)$ is:
\begin{align}
\begin{split}
    &\mathcal{H}_n(\tau) := \left[\tau_n^{LB}, \tau_n^{UB}\right],\\
    &\tau_n^{LB}
    = \min_{\theta\in\Theta}  T(\theta,\vartheta_n)
    \quad\text{s.t.}\quad Q(\theta,\beta_n)\leq\bar{Q}_n,\\
    &\tau_n^{UB}
    = \max_{\theta\in\Theta}  T(\theta,\vartheta_n)
    \quad\text{s.t.}\quad Q(\theta,\beta_n)\leq\bar{Q}_n.
\end{split}
\end{align}

To establish Hausdorff consistency without requiring a tuning parameter, I introduce
additional notation. Let
$\mathcal{H}^{ie}(\theta):=\{\theta\in\Theta:
\tilde{h}(\theta,\beta)\geq 0\}$ denote the inequality-constrained parameter space. Modeling assumptions involving only equalities, such as Assumptions \ref{ass:TI} and LUC
do not affect $\mathcal{H}^{ie}(\theta)$ since they are collected by $\tilde{g}(\theta,\beta)=0$. Maintain the following assumption.

\begin{myassump}{E}{(Estimation)}\label{ass:estimation}
\begin{enumerate}[label=\roman*)]
    \item $\{(S_j,D_j,Z_j)\}_{j=1}^{n_E}$ and
    $\{(Y_i,S_i,D_i)\}_{i=1}^{n_O}$ are i.i.d.\ samples;
    \item $\mathcal{Y}$ is bounded and
    $|\mathcal{S}|,|\mathcal{Z}|<\infty$;
    \item $\mathcal{M}^A$ is defined through finitely many linear
    equality and weak inequality constraints which may depend on a
    consistently estimable vector of population parameters
    $\beta\in\mathfrak{B}$ where $\mathfrak{B}$ is compact. The Jacobian of all linear equality constraints, including any simplex equalities adjoined by reformulation, has full row rank.
    \item $\operatorname{cl}\left(
    \operatorname{int}(\mathcal{H}^{ie}(\theta))\cap
    \mathcal{H}(\theta)\right)=\mathcal{H}(\theta)$ or
    $\mathcal{H}(\theta)$ is a singleton.
\end{enumerate}
\end{myassump}

Assumption \ref{ass:estimation} \textit{i)} is standard under random sampling. Condition \textit{ii)} maintains that long-term outcomes are bounded and that short-term potential outcomes and the instrument $Z$ are finitely supported. Assumption \ref{ass:estimation} \textit{iii)} defines the class of modeling assumptions compatible with the estimation
procedure. It is sufficiently general to encompass all previously stated modeling assumptions, but may be further weakened to allow for continuously differentiable restrictions on $m$ if necessary. Condition \textit{iv)} is a mild condition introduced by \citet{shi2015simple} that enables consistent estimation without requiring a tuning parameter. For example, it holds when $\operatorname{int}(\mathcal{H}(\theta))\neq\emptyset$, i.e. when components of $\theta$ are partially identified, or when $\mathcal{H}(\theta)$ is in the interior of $\mathcal{H}^{ie}(\theta)$. It may be further relaxed, as explained by \citet[Section 2]{shi2015simple}. Applying the arguments of \citet[Theorem 2.1]{shi2015simple} yields the following.

\begin{theorem}\label{th:consistency}
Let Assumptions \ref{ass:rand_assign}, \ref{ass:ex_validity}, \ref{ass:modeling_generic}, and \ref{ass:estimation} hold. Then as $n\conv\infty$:
\begin{align*}
    d_H(\mathcal{H}_n(\tau),\mathcal{H}(\tau)):=\max\left\{\sup_{\tau_0\in \mathcal{H}(\tau)}\inf_{\hat{\tau}\in \mathcal{H}_n(\tau)}||\tau_0-\hat{\tau}||,\sup_{\hat{\tau}\in\mathcal{H}_n(\tau)}\inf_{\tau_0\in \mathcal{H}(\tau)}||\tau_0-\hat{\tau}||\right\}\convp 0.
\end{align*}
\end{theorem}

\begin{remark}\label{rem:inference}
Since $\tau$ is a subvector of $(m,\gamma)$, inference on $\tau$ may be recast as a subvector inference problem based on an appropriate criterion function. This would enable the use of general methods for subvector inference, such as those in \citet{bugni2017inference}, to construct confidence sets for $\tau$ and test hypotheses about it.
\end{remark}

\subsection{Reducing Computational Complexity of Bilinear Programming}\label{sect:computational}

Exploiting the structure of the identified set $\mathcal{H}(m,\gamma)$ and the objective $T$ may further reduce the computational burden of non-convex optimization.

First, the optimization problems may be separable into lower-dimensional subproblems. Solving the individual subproblems may be less resource-intensive than solving the original problem jointly \citep{nocedal1999numerical}. This is feasible when the modeling assumption yields a rectangular set $\mathcal{M}^A = \mathcal{M}^A_0\times\mathcal{M}^A_1$ for some $\mathcal{M}^A_0$ and $\mathcal{M}^A_1$. Since the remaining constraints on $(m,\gamma)$ are also separable in $d$, it follows immediately that the identified set $\mathcal{H}(m,\gamma)$ is rectangular as well. Letting $\mathcal{T}(m_d,\gamma_d) := \int_\mathcal{S}m_d(s)d\gamma_d(s)$, we have:
\begin{align}\label{eq:separable_problems}
\begin{split}
    \min_{(\Tilde{m},\Tilde{\gamma})\in\mathcal{H}(m,\gamma)} T(\Tilde{m},\Tilde{\gamma}) = \min_{(\Tilde{m}_1,\Tilde{\gamma}_1)\in\mathcal{H}(m_1,\gamma_1)} \mathcal{T}(\Tilde{m}_1,\Tilde{\gamma}_1)-\max_{(\Tilde{m}_0,\Tilde{\gamma}_0)\in\mathcal{H}(m_0,\gamma_0)} \mathcal{T}(\Tilde{m}_0,\Tilde{\gamma}_0)\\
    \max_{(\Tilde{m},\Tilde{\gamma})\in\mathcal{H}(m,\gamma)} T(\Tilde{m},\Tilde{\gamma}) = \max_{(\Tilde{m}_1,\Tilde{\gamma}_1)\in\mathcal{H}(m_1,\gamma_1)} \mathcal{T}(\Tilde{m}_1,\Tilde{\gamma}_1)-\min_{(\Tilde{m}_0,\Tilde{\gamma}_0)\in\mathcal{H}(m_0,\gamma_0)} \mathcal{T}(\Tilde{m}_0,\Tilde{\gamma}_0)
\end{split}
\end{align}
where $\mathcal{H}(m_d,\gamma_d)$ collects all constraints on $(m_d,\gamma_d)$ with $m_d\in\mathcal{M}^A_d$. For example, $\mathcal{M}^A$ is rectangular whenever the modeling assumption does not relate values of $m_1$ and $m_0$, such as with Assumptions \ref{ass:LMIV} and LUC.

Second, the optimization problems become linear in certain cases. They are therefore convex, which substantially simplifies computation. As suggested by examples in the previous section, this occurs in two cases. First, if $\mathcal{H}(m,\gamma)=\{m\}\times\mathcal{H}(\gamma)$, the problems reduce to linear programs over $\gamma$. For example, assumptions that point-identify $m$ independently of $\gamma$, such as LUC, yield $\mathcal{H}(m,\gamma)$ of this form. Second, if $\mathcal{H}(m,\gamma)=\mathcal{H}(m)\times\{\gamma\}$ and $\mathcal{M}^A$ is representable by linear constraints, the problems reduce to linear programs over $m$. This arises under Assumptions \ref{ass:LMIV} and \ref{ass:TI} when the right-hand sides of the constraints for each $\gamma_d$ sum to one. \Cref{sect:app} exploits both separability and linearization to simplify computation in the application.

Third, for each feasible $\gamma$, the inner optimization over $m$ may admit a closed-form solution. In such cases, fixing $m$ appropriately can reduce the dimension of the parameter space explored by branch-and-bound algorithms. To formalize this idea, rewrite the optimization problems as bilevel programs. Decompose $\mathcal{H}(m,\gamma)$ into its projection $\mathcal{H}(\gamma) := \{\gamma': \exists m'\text{ s.t. } (m',\gamma')\in\mathcal{H}(m,\gamma)\}$ and corresponding fibers $\mathcal{H}(m|\gamma') := \{m': (m',\gamma')\in \mathcal{H}(m,\gamma)\}$ at each $\gamma'\in\mathcal{H}(\gamma)$. These fibers form a correspondence $\mathcal{H}(m|\cdot):\mathcal{H}(\gamma)\rightrightarrows \mathcal{M}^A$. The identified set can then be written as:
\begin{align}
        \mathcal{H}(\tau) = \left[\min_{\Tilde{\gamma}\in\mathcal{H}(\gamma)}\min_{\Tilde{m}\in\mathcal{H}(m|\Tilde{\gamma})}T(\Tilde{m},\Tilde{\gamma}),\max_{\Tilde{\gamma}\in\mathcal{H}(\gamma)}\max_{\Tilde{m}\in\mathcal{H}(m|\Tilde{\gamma})}T(\Tilde{m},\Tilde{\gamma})\right].
\end{align}

The inner optimization problems may have closed-form solutions given by selectors of the correspondence $\mathcal{H}(m|\cdot)$. This is formalized by the following definition.

\begin{definition}[Minimal and maximal selectors]
Let $\mathcal{H}(m|\cdot):\mathcal{H}(\gamma)\rightrightarrows \mathcal{M}^A$ be the correspondence defined by the fibers of $\mathcal{H}(m,\gamma)$ over its projection $\mathcal{H}(\gamma)$. A selector $L:\mathcal{H}(\gamma)\to\mathcal{M}^A$ is \textit{minimal with respect to $T$} if, for every $\gamma\in\mathcal{H}(\gamma)$, $T(L(\gamma),\gamma)\leq T(m,\gamma)$ for all $m\in\mathcal{H}(m|\gamma)$. A selector $U:\mathcal{H}(\gamma)\to\mathcal{M}^A$ is \textit{maximal with respect to $T$} if, for every $\gamma\in\mathcal{H}(\gamma)$, $T(U(\gamma),\gamma)\geq T(m,\gamma)$ for all $m\in\mathcal{H}(m|\gamma)$.
\end{definition}

If minimal and maximal selectors exist, using them will yield the identified set $\mathcal{H}(\tau)$ since:
    \begin{align*}
       \left[\min_{(\Tilde{m},\Tilde{\gamma})\in\mathcal{H}(m,\gamma)}T(\Tilde{m},\Tilde{\gamma}),\max_{(\Tilde{m},\Tilde{\gamma})\in\mathcal{H}(m,\gamma)}T(\Tilde{m},\Tilde{\gamma})\right] =  \left[\min_{\Tilde{\gamma}\in\mathcal{H}(\gamma)}T(L_{\Tilde{\gamma}},\Tilde{\gamma}),\max_{\Tilde{\gamma}\in\mathcal{H}(\gamma)}T(U_{\Tilde{\gamma}},\Tilde{\gamma})\right].
    \end{align*}

To operationalize the result, consider the lower bound $\min_{(\Tilde{m},\Tilde{\gamma})\in\mathcal{H}(m,\gamma)} T(\Tilde{m},\Tilde{\gamma}) $. In the minimization problem, one can replace the constraints $m\in\mathcal{M}^A$ in \Cref{cor:interval_set} with $m(s)=L_\gamma(s)$ for each $s\in\mathcal{S}$, thereby obtaining the lower bound without separately optimizing over $m$ for each $\gamma$. Similarly, in the maximization problem, one can replace $m\in\mathcal{M}^A$ with $m(s)=U_\gamma(s)$ to obtain the upper bound. Unlike the previous two cases, this simplification may be useful for estimation only when $\Bar{Q}_n=0$, or equivalently, when the plug-in estimator is non-empty.\footnote{The plug-in and criterion estimators are numerically equivalent when $\Bar{Q}_n=0$. The plug-in is also Hausdorff consistent for $\mathcal{H}(\theta)$ under Assumption \ref{ass:estimation} since $\Bar{Q}_n=0$ w.p.a. 1.} 
\Cref{lem:min_max_selectors} derives minimal and maximal selectors under Assumptions \ref{ass:LMIV} and \ref{ass:TI}.

\section{Proofs and Auxiliary Lemmas}\label{sect:appendix_lemmas}

\begin{proof}[Proof of \Cref{th:consistency}]

The proof proceeds in three steps. Step 1 reduces the problem to Hausdorff consistency of $\mathcal{H}_n(\theta)$. Step 2 reformulates the criterion using slack variables and establishes Hausdorff consistency of the resulting argmin set. Step 3 shows that this is sufficient via Step 1.

\medskip

\underline{\textbf{Step 1.} \textit{Sufficiency of $d_H(\mathcal{H}_n(\theta),\mathcal{H}(\theta))\convp 0$.}}

By Assumption \ref{ass:estimation} \textit{ii)}, $|\mathcal{S}|<\infty$. Thus, $(m,\gamma)$ is a finite-dimensional vector and $T$ is jointly continuous in $(m,\gamma)$. Since $\mathcal{H}(m,\gamma)$ is defined by finitely many weak inequalities and equalities, it is closed. Moreover, $\mathcal{H}(m,\gamma)\subseteq \mathcal{Y}^{2k}\times(\Delta(k))^2$, and the latter set is bounded because $\mathcal{Y}$ is bounded by Assumption \ref{ass:estimation} \textit{ii)}. Therefore, $\mathcal{H}(m,\gamma)$ is compact. Hence $\mathcal{H}(\tau)
=\left[
\min_{(\Tilde m,\Tilde \gamma)\in\mathcal{H}(m,\gamma)}T(\Tilde m,\Tilde \gamma),\max_{(\Tilde m,\Tilde \gamma)\in\mathcal{H}(m,\gamma)}T(\Tilde m,\Tilde \gamma)\right]$ 
is a closed interval. The same holds for $\mathcal{H}_n(\tau)$ by construction. Since the Hausdorff distance between closed intervals $[a,b]$ and $[c,d]$ equals $\max\{|a-c|,|b-d|\}$, it suffices to show that the endpoints converge in probability. I treat the upper bound; the lower bound is symmetric.

By boundedness of $\mathcal{Y}$, each component of $\mu_d$ lies in a compact interval and each component of $\eta_d$ lies in $[0,1]$, by definition. By the definition of $\Theta$ and boundedness of $\mathcal{Y}$, $\Theta$ is a product of finitely many compact sets, and hence compact. By Assumption \ref{ass:estimation} \textit{iii)}, $\mathfrak{B}$ is compact. Therefore $\Theta\times\mathfrak{B}$, and hence $\Theta\times\operatorname{proj}_{\vartheta}(\mathfrak{B})$, are compact. Since $T(\theta,\vartheta)$ is jointly continuous on $\Theta\times\operatorname{proj}_{\vartheta}(\mathfrak{B})$, the Heine-Cantor theorem implies that $T$ is uniformly continuous on this domain.

Fix any $\varepsilon>0$. $\tau_n^{UB}
= \max_{\theta\in\mathcal{H}_n(\theta)}T(\theta,\vartheta_n)$ and
$\tau^{UB}=\max_{\theta\in\mathcal{H}(\theta)}T(\theta,\vartheta)$. By the triangle inequality:
\begin{align}\label{eq:difference_tau}
|\tau_n^{UB}-\tau^{UB}|
\leq
\underbrace{\max_{\theta\in\Theta}|T(\theta,\vartheta_n)-T(\theta,\vartheta)|}_{=: \rho_n}
+
\underbrace{\left|
\max_{\theta\in\mathcal{H}_n(\theta)}T(\theta,\vartheta)
-
\max_{\theta\in\mathcal{H}(\theta)}T(\theta,\vartheta)
\right|}_{=: \Delta_n}.
\end{align}

Since $T$ is uniformly continuous on $\Theta\times\operatorname{proj}_{\vartheta}(\mathfrak{B})$, for any $\varepsilon>0$ there exists $\delta>0$ such that for any $\vartheta'\in\operatorname{proj}_{\vartheta}(\mathfrak{B})$: $\|\vartheta'-\vartheta\|<\delta
\Rightarrow
\max_{\theta\in\Theta}|T(\theta,\vartheta')-T(\theta,\vartheta)|<\varepsilon.$
On the event $\{\|\vartheta_n-\vartheta\|<\delta\}$ this implies $\rho_n<\varepsilon$. Therefore:
$
P(\rho_n>\varepsilon)\leq P(\|\vartheta_n-\vartheta\|\geq\delta)\to 0,$ 
since $\vartheta_n\convp\vartheta$. Hence $\rho_n\convp 0$.

For $\Delta_n$, fix any $\varepsilon'>0$ and choose $\delta>0$ such that $\|\theta-\theta'\|<\delta
\Rightarrow
|T(\theta,\vartheta)-T(\theta',\vartheta)|<\varepsilon'$ for all $\theta,\theta'\in\Theta$. Since $\mathcal{H}_n(\theta)$ and $\mathcal{H}(\theta)$ are compact and $T(\cdot,\vartheta)$ is continuous, the maxima are attained. If $d_H(\mathcal{H}_n(\theta),\mathcal{H}(\theta))<\delta$, then for any $\theta\in\mathcal{H}_n(\theta)$ there exists $\theta'\in\mathcal{H}(\theta)$ with $\|\theta-\theta'\|<\delta$, hence $T(\theta,\vartheta)
\leq
T(\theta',\vartheta)+\varepsilon'
\leq
\max_{\theta^\ast\in\mathcal{H}(\theta)}T(\theta^\ast,\vartheta)+\varepsilon'.$ Taking the maximum over $\theta\in\mathcal{H}_n(\theta)$:
\begin{align*}
\max_{\theta\in\mathcal{H}_n(\theta)}T(\theta,\vartheta)
\leq
\max_{\theta\in\mathcal{H}(\theta)}T(\theta,\vartheta)+\varepsilon'.
\end{align*}
The reverse inequality follows by the same argument interchanging $\mathcal{H}_n(\theta)$ and $\mathcal{H}(\theta)$. Therefore:
\begin{align*}
\Delta_n =\left|
\max_{\theta\in\mathcal{H}_n(\theta)}T(\theta,\vartheta)
- \max_{\theta\in\mathcal{H}(\theta)}T(\theta,\vartheta)
\right|\leq \varepsilon'
\end{align*}
whenever $d_H(\mathcal{H}_n(\theta),\mathcal{H}(\theta))<\delta$. Taking $\varepsilon'=\varepsilon/2$ implies:
\begin{align}\label{eq:delta_hausdorff}
P(\Delta_n>\tfrac{\varepsilon}{2})
\leq
P(d_H(\mathcal{H}_n(\theta),\mathcal{H}(\theta))\geq\delta).
\end{align}

Combining \eqref{eq:difference_tau} and \eqref{eq:delta_hausdorff}:
\begin{align*}
P(|\tau_n^{UB}-\tau^{UB}|>\varepsilon)
&\leq P(\rho_n>\tfrac{\varepsilon}{2})
+ P(d_H(\mathcal{H}_n(\theta),\mathcal{H}(\theta))\geq\delta).
\end{align*}
Thus, for $d_H(\mathcal{H}_n(\tau),\mathcal{H}(\tau))\convp 0$, it is sufficient to show $d_H(\mathcal{H}_n(\theta),\mathcal{H}(\theta))\convp 0$.

\medskip

\underline{\textbf{Step 2.} \textit{Reformulation and Hausdorff consistency.}}

To apply Lemma A.1 of \citet{shi2015simple}, I reformulate the problem using slack variables. Convert all weak inequality constraints $\Tilde{h}(\theta,\beta)\geq 0$ to equalities by introducing $\lambda\in\mathbb{R}^q_+$ so that $\Tilde{h}(\theta,\beta)-\lambda=0$. Denote by $\Tilde{\theta}:=(\theta,\lambda)\in\mathfrak{T}$ the augmented parameter vector and define $g(\Tilde{\theta},\beta) :=\begin{pmatrix}
\Tilde{h}(\theta,\beta)-\lambda \\
\Tilde{g}(\theta,\beta)
\end{pmatrix}.$ When $\Theta$ is defined using simplices, as when $\theta$ includes $\gamma$, replace each $\Delta(k)$ with $[0,1]^k$ in $\Theta$ and adjoin the simplex equalities $\sum_s\gamma_d(s)-1=0$ for $d\in\{0,1\}$ to $\Tilde g$ as additional rows. This does not alter the feasible set in $\theta$-space, since $\gamma\in(\Delta(k))^2$ if and only if $\gamma\in[0,1]^{2k}$ and the simplex equalities hold. When $\Theta$ is already a product of compact intervals, no reformulation is needed.

Next, absorb the constraints defining $\Theta$ into $\Tilde h$ by appending $\theta_i-a_i\geq 0$ and $b_i-\theta_i\geq 0$ for each coordinate $i$ of $\theta$, where $[a_i,b_i]$ denotes the corresponding factor of $\Theta$. The slack variables for these additional constraints are handled identically. Then enlarge each factor of $\Theta$ to $[a_i-1,b_i+1]$. Since the original box constraints are now enforced through $\Tilde{h}$, this reformulation does not alter the feasible set in $\theta$-space and ensures $\mathcal{H}(\theta)\subset\operatorname{int}(\Theta)$.

By Assumption \ref{ass:estimation} \textit{iii)}, any restrictions implied by $\mathcal{M}^A$ can be written as finitely many linear equalities and weak inequalities in $\theta$ whose coefficients may depend on $\beta$. Together with $|\mathcal{S}|<\infty$ and $|\mathcal{Z}|<\infty$ from Assumption \ref{ass:estimation} \textit{ii)}, this implies that each component of $\Tilde h(\theta,\beta)$ is obtained from finitely many linear or bilinear expressions in $\theta$ and finitely many coordinates of $\beta$. Therefore $(\theta,\beta)\mapsto \Tilde h(\theta,\beta)$ is continuous on $\Theta\times\mathfrak{B}$. Since $\Theta\times\mathfrak{B}$ is compact, each component of $\Tilde h$ is bounded above on $\Theta\times\mathfrak{B}$.  Henceforth, let $q$ denote the total number of inequality 
constraints, including the appended box constraints. Define $\bar h\in\mathbb{R}^q$ with components $\bar h_t:=\sup_{(\theta,\beta)\in\Theta\times\mathfrak{B}}\Tilde h_t(\theta,\beta)<\infty$ and set $\Lambda:=\prod_{t=1}^q[0,\bar h_t+1].$ Then any feasible $(\theta,\lambda)$ satisfying $\tilde{h}(\theta,\beta)-\lambda=0$ for some $\beta\in\mathfrak{B}$ necessarily has 
$\lambda_t=\tilde{h}_t(\theta,\beta)\geq 0$ for each $t$, since feasibility 
requires $\tilde{h}_t(\theta,\beta)\geq 0$. Moreover, $\lambda_t=\tilde{h}_t(\theta,\beta)\leq \bar{h}_t<\bar{h}_t+1$ for each $t$. 
Therefore $\lambda\in\Lambda$, and imposing $\lambda\in\Lambda$ is 
without loss of generality.

The augmented parameter space $\mathfrak{T}:=\Theta\times\Lambda$ is a product of compact intervals, and thus compact, with $\operatorname{cl}(\operatorname{int}(\mathfrak{T}))=\mathfrak{T}$. For each $\theta\in\Theta$ and $\beta'\in\mathfrak{B}$, minimizing over $\lambda\in\Lambda$ yields:
\begin{align*}
\min_{\lambda\in\Lambda}\|g((\theta,\lambda),\beta')\|_1
=
\|\Tilde g(\theta,\beta')\|_1+\sum_{t=1}^q\max(-\Tilde h_t(\theta,\beta'),0)
=
Q(\theta,\beta'),
\end{align*}
because $\min_{\lambda_t\in[0,\bar h_t+1]}|\Tilde h_t(\theta,\beta')-\lambda_t|
=
\max(-\Tilde h_t(\theta,\beta'),0)$ for each $t$. Let $\Bar Q_n:=
\min_{\Tilde{\theta}\in\mathfrak{T}}\|g(\Tilde{\theta},\beta_n)\|_1$. The minimum $\Bar Q_n$ exists by continuity of $\|g(\cdot,\beta_n)\|_1$ on the compact set $\mathfrak{T}$, so $\Tilde{\Theta}_n$ is nonempty and compact. Define:
\begin{align*}
\Tilde{\Theta}
:=\{\Tilde{\theta}\in\mathfrak{T}: g(\Tilde{\theta},\beta)=0\}, \quad
\Tilde{\Theta}_n:=\argmin_{\Tilde{\theta}\in\mathfrak{T}}\|g(\Tilde{\theta},\beta_n)\|_1=\{\Tilde{\theta}\in\mathfrak{T}: \|g(\Tilde{\theta},\beta_n)\|_1=\Bar Q_n\}.
\end{align*}

It follows that $\mathcal{H}_n(\theta)$ and $\mathcal{H}(\theta)$ are the projections of the augmented sets onto $\theta$-coordinates $\mathcal{H}(\theta)
=\left\{\theta\in\Theta:
\exists\lambda\in\Lambda \text{ s.t. } (\theta,\lambda)\in\Tilde{\Theta} \right\}$, and $\mathcal{H}_n(\theta)
=\left\{\theta\in\Theta:\exists\lambda\in\Lambda \text{ s.t. } (\theta,\lambda)\in\Tilde{\Theta}_n\right\}$. I then establish $d_H(\Tilde{\Theta}_n,\Tilde{\Theta})\convp 0$ in two parts.

\medskip

\underline{\textbf{Step 2a.} \textit{$\sup_{\Tilde{\theta}\in\Tilde{\Theta}_n}\inf_{\Tilde{\theta}_0\in\Tilde{\Theta}}\|\Tilde{\theta}-\Tilde{\theta}_0\|\convp 0$.}}

By the reverse triangle inequality:
\begin{align}\label{eq:unif_conv_v4_compact}
\sup_{\Tilde{\theta}\in\mathfrak{T}}
\left|
\|g(\Tilde{\theta},\beta_n)\|_1-\|g(\Tilde{\theta},\beta)\|_1
\right|
\leq
\sup_{\Tilde{\theta}\in\mathfrak{T}}
\|g(\Tilde{\theta},\beta_n)-g(\Tilde{\theta},\beta)\|_1
=: r_n.
\end{align}
The function $g$ is polynomial in $(\Tilde{\theta},\beta)$ and hence jointly continuous. Since it is jointly continuous on the compact set $\mathfrak{T}\times\mathfrak{B}$, it is uniformly continuous. By Assumption \ref{ass:estimation} \textit{i)} to \textit{iii)}, $\beta_n\convp\beta$. Hence $r_n\convp 0$. Fix any $\Tilde{\theta}_0\in\Tilde{\Theta}$. Then:
\begin{align*}
0\leq\Bar Q_n\leq\|g(\Tilde{\theta}_0,\beta_n)\|_1
\leq\|g(\Tilde{\theta}_0,\beta)\|_1+r_n=r_n\convp 0,
\end{align*}
where the first inequality is by non-negativity of $Q$, the second is by definition of $\Bar Q_n$, and the last equality uses \eqref{eq:unif_conv_v4_compact} and $g(\Tilde{\theta}_0,\beta)=0$ since $\Tilde{\theta}_0\in\Tilde{\Theta}$. Thus $\Bar Q_n\convp 0$.

Next, fix any $\varepsilon>0$ and define:
\begin{align*}
A_\varepsilon
:=\left\{\Tilde{\theta}\in\mathfrak{T}:
\inf_{\Tilde{\theta}_0\in\Tilde{\Theta}}\|\Tilde{\theta}-\Tilde{\theta}_0\|\geq \varepsilon \right\}.
\end{align*}
Note that $\delta_\varepsilon:=\inf_{\Tilde{\theta}\in A_\varepsilon} \|g(\Tilde{\theta},\beta)\|_1>0$. Since $A_\varepsilon$ is by definition a closed
subset of the compact set $\mathfrak{T}$, it is compact, and the
continuous function $\|g(\cdot,\beta)\|_1$ attains its infimum on
$A_\varepsilon$. If $\delta_\varepsilon=0$, the minimizer
$\Bar{\Tilde{\theta}}\in A_\varepsilon$ satisfies
$g(\Bar{\Tilde{\theta}},\beta)=0$, so
$\Bar{\Tilde{\theta}}\in\Tilde{\Theta}$, contradicting
$\Bar{\Tilde{\theta}}\in A_\varepsilon$. Hence $\delta_\varepsilon>0$.

Now consider the event $\{\Tilde{\Theta}_n\cap A_\varepsilon\neq\emptyset\}$. On this event, pick any $\Tilde{\theta}\in\Tilde{\Theta}_n\cap A_\varepsilon$. By definition of $\delta_\varepsilon$, $\|g(\Tilde{\theta},\beta)\|_1\geq \delta_\varepsilon$. Since $\Tilde{\theta}\in\Tilde{\Theta}_n$, $\|g(\Tilde{\theta},\beta_n)\|_1=\Bar Q_n.$ By \eqref{eq:unif_conv_v4_compact}, $\|g(\Tilde{\theta},\beta)\|_1\leq \Bar Q_n+r_n.$ Therefore, $\delta_\varepsilon\leq \Bar Q_n+r_n$ on the event, and hence $\{\Tilde{\Theta}_n\cap A_\varepsilon\neq\emptyset\}\subseteq \{\Bar Q_n+r_n\geq \delta_\varepsilon\}.$
Since $\Bar Q_n\convp 0$ and $r_n\convp 0$, it follows that $\Bar Q_n+r_n\convp 0$. Since $\delta_\varepsilon>0$ is fixed $P(\Tilde{\Theta}_n\cap A_\varepsilon\neq\emptyset)
\leq P(\Bar Q_n+r_n\geq \delta_\varepsilon)\to 0.$ Therefore:
\begin{align*}
P\left(\sup_{\Tilde{\theta}\in\Tilde{\Theta}_n}
\inf_{\Tilde{\theta}_0\in\Tilde{\Theta}}
\|\Tilde{\theta}-\Tilde{\theta}_0\|
\geq \varepsilon
\right)&=P(\Tilde{\Theta}_n\cap A_\varepsilon\neq\emptyset) \\
&\leq P(\Bar Q_n+r_n\geq \delta_\varepsilon)
\to 0.
\end{align*}

\medskip
\underline{\textbf{Step 2b.} \textit{$\sup_{\Tilde{\theta}_0\in\Tilde{\Theta}}\inf_{\Tilde{\theta}\in\Tilde{\Theta}_n}\|\Tilde{\theta}-\Tilde{\theta}_0\|\convp 0$.}}

If $\Tilde{\Theta}$ is a singleton, the claim follows immediately from Step 2a. Suppose that $\Tilde{\Theta}$ is non-singleton. Define the zero-set correspondence $\Tilde{\Theta}^0(\beta')
:=
\{\Tilde{\theta}\in\mathfrak{T}: g(\Tilde{\theta},\beta')=0\}.$ I verify the conditions of Lemma A.1 of \citet{shi2015simple} for this correspondence at $\beta'=\beta$. By construction, $\mathfrak{T}$ is a product of compact intervals, so $\mathfrak{T}$ is compact and $\operatorname{cl}(\operatorname{int}(\mathfrak{T}))=\mathfrak{T}$. Next, the function $g$ is polynomial in $(\Tilde{\theta},\beta)$ and hence continuously differentiable in $\Tilde{\theta}$ for each fixed $\beta'$ in an open set containing $\mathfrak{B}$. Now note that the Jacobian of $g$ with respect to $\Tilde{\theta}=(\theta,\lambda)$ has the block structure:
\begin{align*}
\frac{\partial g(\Tilde{\theta},\beta)}{\partial \Tilde{\theta}'}
=
\begin{pmatrix}
\partial \Tilde{h}(\theta,\beta)/\partial \theta' & -I_q \\
\partial \Tilde{g}(\theta,\beta)/\partial \theta' & 0
\end{pmatrix},
\end{align*}
where $I_q$ is the $q\times q$ identity matrix. The $-I_q$ block arises because $\partial(\Tilde{h}_t-\lambda_t)/\partial\lambda_t=-1$ and $\partial(\Tilde{h}_t-\lambda_t)/\partial\lambda_s=0$ for $s\neq t$, while the remaining equalities do not involve $\lambda$. Since the slackness rows have a $-I_q$ block in the $\lambda$-columns while all other rows have zeros there, the slackness rows are linearly independent of each other and of all other rows. Therefore the full Jacobian has full row rank if and only if $\partial \Tilde{g}(\theta,\beta)/\partial\theta'$ has full row rank, which holds by Assumption \ref{ass:estimation} \textit{iii)}. Finally, take any $(\theta,\lambda)\in\Tilde{\Theta}$. By Assumption \ref{ass:estimation} \textit{iv)}, there exists a sequence $\{\theta^{(j)}\}\subseteq \operatorname{int}(\mathcal H^{ie}(\theta))\cap \mathcal H(\theta)$ with $\theta^{(j)}\to\theta$. Since $\theta^{(j)}\in \operatorname{int}(\mathcal H^{ie}(\theta))$, all inequalities hold strictly, so $\Tilde h(\theta^{(j)},\beta)>0$ componentwise. Setting $\lambda^{(j)}=\Tilde h(\theta^{(j)},\beta)$, we have $\lambda^{(j)}\in\operatorname{int}(\Lambda)$ because $0<\lambda_t^{(j)}=\Tilde h_t(\theta^{(j)},\beta)\leq \bar h_t<\bar h_t+1$ for each $t$. By the enlargement of $\Theta$, $\mathcal H(\theta)\subset\operatorname{int}(\Theta)$, so $\theta^{(j)}\in\operatorname{int}(\Theta)$. Therefore $(\theta^{(j)},\lambda^{(j)})\in \Tilde{\Theta}\cap\operatorname{int}(\mathfrak{T}),$ and $(\theta^{(j)},\lambda^{(j)})\to (\theta,\lambda)$ by continuity of $\Tilde h$. Hence, $\operatorname{cl}(\Tilde{\Theta}\cap\operatorname{int}(\mathfrak{T}))=\Tilde{\Theta}$.

Lemma A.1 of \citet{shi2015simple} therefore implies that
$\Tilde{\Theta}^0(\cdot)$ is continuous at $\beta$. Fix any
$\varepsilon>0$. By continuity, there exists $\delta>0$ such that
$\|\beta'-\beta\|<\delta$ implies
$d_H(\Tilde{\Theta}^0(\beta'),\Tilde{\Theta})<\varepsilon$ and
$\Tilde{\Theta}^0(\beta')\neq\emptyset$. Then: 
\begin{align}\label{eq:conv_step}
P\left(d_H(\Tilde{\Theta}^0(\beta_n),\Tilde{\Theta})\geq\varepsilon\right)
+
P\left(\Tilde{\Theta}^0(\beta_n)=\emptyset\right)
\leq
2\,P\left(\|\beta_n-\beta\|\geq\delta\right)
\to 0,
\end{align}
where the inequality holds because both events on the left are subsets of
$\{\|\beta_n-\beta\|\geq\delta\}$ by continuity, and the convergence follows from
$\beta_n\convp\beta$. On the event
$\{\Tilde{\Theta}^0(\beta_n)\neq\emptyset\}$,
$\Bar{Q}_n\leq\|g(\Tilde{\theta},\beta_n)\|_1=0$ for any
$\Tilde{\theta}\in\Tilde{\Theta}^0(\beta_n)$, so $\Bar{Q}_n=0$ and
$\Tilde{\Theta}_n=\Tilde{\Theta}^0(\beta_n)$. Therefore,
\begin{align*}
P\left(d_H(\Tilde{\Theta}_n,\Tilde{\Theta})\geq\varepsilon\right)
\leq
P\left(d_H(\Tilde{\Theta}^0(\beta_n),\Tilde{\Theta})\geq\varepsilon\right)
+
P\left(\Tilde{\Theta}^0(\beta_n)=\emptyset\right)
\to 0,
\end{align*}
where the inequality follows because on the event
$\{\Tilde{\Theta}^0(\beta_n)\neq\emptyset\}$,
$\Tilde{\Theta}_n=\Tilde{\Theta}^0(\beta_n)$ and hence
$d_H(\Tilde{\Theta}_n,\Tilde{\Theta})
=d_H(\Tilde{\Theta}^0(\beta_n),\Tilde{\Theta})$, so the event
$\{d_H(\Tilde{\Theta}_n,\Tilde{\Theta})\geq\varepsilon\}$ is contained
in $\{d_H(\Tilde{\Theta}^0(\beta_n),\Tilde{\Theta})\geq\varepsilon\}
\cup\{\Tilde{\Theta}^0(\beta_n)=\emptyset\}$, and the convergence
from \eqref{eq:conv_step}. Hence
$d_H(\Tilde{\Theta}_n,\Tilde{\Theta})\convp 0$ when $\Tilde{\Theta}$
is non-singleton. Steps 2a and 2b together yield
$d_H(\Tilde{\Theta}_n,\Tilde{\Theta})\convp 0$.

\medskip

\underline{\textbf{Step 3.} \textit{Hausdorff consistency of $\Tilde{\Theta}_n$ implies Hausdorff consistency of $\mathcal{H}_n(\theta)$.}}

Let $\pi:\mathfrak{T}\to\Theta$ denote the coordinate projection $\pi(\Tilde{\theta})=\theta$ for $\Tilde{\theta}=(\theta,\lambda)$. For any $\Tilde{\theta}_1=(\theta_1,\lambda_1)$ and $\Tilde{\theta}_2=(\theta_2,\lambda_2)$ in $\mathfrak{T}$, $\|\pi(\Tilde{\theta}_1)-\pi(\Tilde{\theta}_2)\|
=
\|\theta_1-\theta_2\|
\leq
\|(\theta_1,\lambda_1)-(\theta_2,\lambda_2)\|
=
\|\Tilde{\theta}_1-\Tilde{\theta}_2\|.$ Thus Hausdorff distance cannot increase under $\pi$. Therefore, by Step 2:
\begin{align*}
d_H(\mathcal{H}_n(\theta),\mathcal{H}(\theta))
=
d_H(\pi(\Tilde{\Theta}_n),\pi(\Tilde{\Theta}))
\leq
d_H(\Tilde{\Theta}_n,\Tilde{\Theta})
\convp 0.
\end{align*}
where the equality follows since, by arguments in Step 2, $\pi(\Tilde{\Theta})=\mathcal{H}(\theta)$ and $
\pi(\Tilde{\Theta}_n)=\mathcal{H}_n(\theta).$
Step 1 then yields $d_H(\mathcal{H}_n(\tau),\mathcal{H}(\tau))\convp 0$.

\end{proof}

\begin{lemma}\label{lem:rectangular_rset}
Let $\mathbf{R}_0$ and $\mathbf{R}_1$ be two random closed sets. Then $Sel(\mathbf{R}_0 \times \mathbf{R}_1) = Sel(\mathbf{R}_0) \times Sel(\mathbf{R}_1)$.
\end{lemma}

\begin{proof}
Fix an arbitrary selection $(\rho_0, \rho_1) \in Sel(\mathbf{R}_0 \times \mathbf{R}_1)$. Then:
\begin{align}
    \begin{split}
        1 &= P\Big((\rho_0, \rho_1) \in \mathbf{R}_0 \times \mathbf{R}_1\Big)= P\Big(\rho_0 \in \mathbf{R}_0, \rho_1 \in \mathbf{R}_1\Big)\leq P\Big(\rho_0 \in \mathbf{R}_0\Big)
    \end{split}
\end{align}
where the first equality follows by $(\rho_0, \rho_1) \in Sel(\mathbf{R}_0 \times \mathbf{R}_1)$, the second equality and the inequality are by observation. Hence $P(\rho_0 \in \mathbf{R}_0) = 1$. By a similar argument, $P(\rho_1 \in \mathbf{R}_1) = 1$. Therefore $(\rho_0, \rho_1) \in Sel(\mathbf{R}_0) \times Sel(\mathbf{R}_1)$.

Next, fix an arbitrary $(\rho_0, \rho_1) \in Sel(\mathbf{R}_0) \times Sel(\mathbf{R}_1)$. Then:
\begin{align}
    \begin{split}
        1 &= P\Big(\rho_0 \in \mathbf{R}_0\Big)= P\Big(\rho_0 \in \mathbf{R}_0, \rho_1 \in \mathbf{R}_1\Big) + P\Big(\rho_0 \in \mathbf{R}_0, \rho_1 \notin \mathbf{R}_1\Big)\\
        &= P\Big(\rho_0 \in \mathbf{R}_0, \rho_1 \in \mathbf{R}_1\Big)= P\Big((\rho_0, \rho_1) \in \mathbf{R}_0 \times \mathbf{R}_1\Big)
    \end{split}
\end{align}
where the first equality is by $\rho_0 \in Sel(\mathbf{R}_0)$, the second is by observation, the third is since $P\Big(\rho_0 \in \mathbf{R}_0, \rho_1 \notin \mathbf{R}_1\Big) \leq P\Big(\rho_1 \notin \mathbf{R}_1\Big) = 0$ given that $\rho_1 \in Sel(\mathbf{R}_1)$, and the fourth is by observation. Thus $(\rho_0, \rho_1) \in Sel(\mathbf{R}_0 \times \mathbf{R}_1)$.
\end{proof}

\begin{lemma}\label{lem:prob_space}
    Suppose the probability space $(\Omega, \mathcal{F},P)$ is non-atomic and that $\mathcal{F}_0\subseteq \mathcal{F}$ is a sub-$\sigma$-algebra. After possibly enlarging the probability space, $P$ is atomless over $(\Omega,\mathcal{F}_0)$. That is, for all $A\in\mathcal{F}$ with $P(A)>0$ there exists $B\in\mathcal{F}$ with $B\subseteq A$ such that $0<P(B|\mathcal{F}_0)<P(A|\mathcal{F}_0)$ with positive probability. 
\end{lemma}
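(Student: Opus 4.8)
The plan is to prove the displayed splitting condition directly: fix an arbitrary $A\in\mathcal{F}$ with $P(A)>0$ and exhibit a single $B\in\mathcal{F}$, $B\subseteq A$, whose conditional probability strictly splits that of $A$ on a set of positive probability. The engine is the unconditional non-atomicity of $P$, which I would ``localize'' to the level of the conditioning $\sigma$-algebra $\mathcal{F}_0$. First I would record the reduction: writing $P(\cdot\mid\mathcal{F}_0)$ for a fixed regular conditional probability and using the tower property $P(B)=E[P(B\mid\mathcal{F}_0)]$, the claim is equivalent to showing that no $A$ with $P(A)>0$ is a \emph{conditional atom}, i.e. that it is impossible to have $P(B\mid\mathcal{F}_0)\in\{0,\,P(A\mid\mathcal{F}_0)\}$ almost surely for every measurable $B\subseteq A$. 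The argument then proceeds by contradiction, assuming some $A$ were such a conditional atom.

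The cleanest case, and the one actually needed for the applications in \Cref{thm:tractable_set} (where $\mathcal{F}_0=\sigma(\varsigma_d\mid G=O)$ with $\varsigma_d$ a selection valued in the short-term outcome space), is when $\mathcal{F}_0$ is generated by a random element whose level sets carry positive probability. Here I would pick a value $s$ with $P(A\cap\{\varsigma_d=s\})>0$; the restriction of $P$ to the positive-measure event $\{\varsigma_d=s\}$ is itself non-atomic, so unconditional non-atomicity yields $B\subseteq A\cap\{\varsigma_d=s\}$ with $0<P(B)<P(A\cap\{\varsigma_d=s\})$. On $\{\varsigma_d=s\}$ the conditional probabilities reduce to the ratios $P(B\cap\{\varsigma_d=s\})/P(\varsigma_d=s)$ and $P(A\cap\{\varsigma_d=s\})/P(\varsigma_d=s)$, so $0<P(B\mid\mathcal{F}_0)<P(A\mid\mathcal{F}_0)$ holds on the positive-probability set $\{\varsigma_d=s\}$, which is exactly the desired conclusion. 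This fiberwise reduction is what I would carry out as the main proof, since it is precisely the finite-$\mathcal{S}$ regime maintained throughout \Cref{sect:implementation}.

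For a general sub-$\sigma$-algebra I would instead build the splitting set from a conditionally uniform randomization. Using that $P|_A$ is non-atomic, take $U$ with $U\mid A$ uniform on $[0,1]$ and set $B_t=A\cap\{U\le t\}$; then $t\mapsto P(B_t\mid\mathcal{F}_0)$ is almost surely nondecreasing and continuous with endpoints $0$ and $P(A\mid\mathcal{F}_0)$, while $E[P(B_t\mid\mathcal{F}_0)]=tP(A)$ sweeps the entire interval $[0,P(A)]$. If no $B_t$ split $P(A\mid\mathcal{F}_0)$ on a positive-probability set, this path would be forced to be two-valued in $\{0,\,P(A\mid\mathcal{F}_0)\}$, hence to jump at a single $\mathcal{F}_0$-measurable threshold, which would make $U$ essentially $\mathcal{F}_0$-measurable on $A$.

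The hard part is exactly this last step: passing from a conditional atom to a genuine obstruction is what the argument must rule out, and it is clean only when the conditioning retains conditional non-atomicity — guaranteed whenever $\mathcal{F}_0$ is generated by a selection with positive-probability fibers, as in the finite-$\mathcal{S}$ case, but delicate in full generality because a sub-$\sigma$-algebra could in principle resolve the atomless randomness of $A$. I would therefore anchor the proof on the fiberwise construction above and, for the general statement, invoke the standard conditional measurable-splitting (Lyapunov-type convexity of conditional expectations) result to close the remaining gap.
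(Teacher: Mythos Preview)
Your route is considerably more elaborate than the paper's, which is a short direct argument: pick \emph{any} $B\subseteq A$ with $0<P(B)<P(A)$ (exists by non-atomicity), set $C=A\setminus B$, and note that neither $P(B\mid\mathcal{F}_0)=0$ a.s.\ (else $P(B)=0$) nor $P(B\mid\mathcal{F}_0)=P(A\mid\mathcal{F}_0)$ a.s.\ (else $P(C)=0$) can hold, from which the paper concludes the strict double inequality holds on a positive-probability set. Your fiberwise construction for $\mathcal{F}_0$ generated by an element with positive-probability level sets is correct and in fact tighter: by placing $B$ inside a single fiber $\{\varsigma_d=s\}$ you force the two strict inequalities to hold \emph{simultaneously} on that fiber, whereas the paper's contradiction only rules out the disjunction ``$=0$ a.s.\ or $=P(A\mid\mathcal{F}_0)$ a.s.'' rather than the true negation $P(B\mid\mathcal{F}_0)\in\{0,P(A\mid\mathcal{F}_0)\}$ a.s.\ (which an arbitrary $B$ can satisfy: take $A=\Omega$ and $B\in\mathcal{F}_0$ with $0<P(B)<1$). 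Your suspicion that the fully general statement is delicate is well founded, and your Lyapunov-convexity sketch cannot close it unrestrictedly: with $\mathcal{F}_0=\mathcal{F}$ one has $P(B\mid\mathcal{F}_0)=\mathbbm{1}_B$ for every $B$, so the conclusion fails outright and no argument can work. For the paper's application only the positive-fiber case matters, and that you handle cleanly; the heavier machinery for general $\mathcal{F}_0$ is both unnecessary and unavailable.
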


\begin{proof}
Fix $A\in\mathcal F$ with $P(A)>0$. There exists a 
random variable $U:\Omega\to[0,1]$ which is uniformly distributed on $[0,1]$ and independent of the 
$\sigma$–algebra $\sigma(\mathcal F_0,\mathbbm{1}_A)$.\footnote{If necessary, replace
$(\Omega,\mathcal F,P)$ by $(\Omega\times[0,1],\mathcal F\otimes\mathcal B([0,1]),P\otimes\lambda)$
and let $U(\omega,u)=u$.}

Define:
\begin{align}
    B := A\cap\{U\leq\tfrac12\}.
\end{align}

It is immediate that $B\in\mathcal F$ and $B\subset A$. By independence of $U$ from 
$\sigma(\mathcal F_0,\mathbbm{1}_A)$:
\begin{align*}
    P(B|\mathcal F_0)
= E\left[\mathbbm{1}_A \mathbbm{1}_{\{U\leq 1/2\}}|\mathcal F_0\right]
= E\left[\mathbbm{1}_A|\mathcal F_0\right]
    E\left[\mathbbm{1}_{\{U\leq 1/2\}}\right]
= \frac{1}{2} P(A|\mathcal F_0)\quad\text{$P-$a.s.}
\end{align*}
Because $E[P(A|\mathcal F_0)] = P(A)>0$:
\begin{align*}
    P\big(P(A|\mathcal F_0)>0\big) > 0,
\end{align*}
On the event $\{P(A|\mathcal F_0)>0\}$ it follows $0 < P(B|\mathcal F_0) = \tfrac12 P(A|\mathcal F_0) < P(A|\mathcal F_0),$ and thus:
\begin{align}
    \big\{0 < P(B|\mathcal F_0) < P(A|\mathcal F_0)\big\}
\supseteq \big\{P(A|\mathcal F_0)>0\big\}.
\end{align}

Therefore $\big\{0 < P(B|\mathcal F_0) < P(A|\mathcal F_0)\big\}$ has strictly positive probability.
\end{proof}

\begin{lemma}\label{lem:selections_combined}
Suppose that Assumptions \ref{ass:rand_assign}, \ref{ass:ex_validity} hold. Then the identified set $\mathcal{H}(P_{Y(0),S(0)},P_{Y(1),S(1)})$ for marginal distribution functions $P_{Y(d),S(d)}$ is:
\begin{equation}\label{eq:identified_cartesian}
    \mathcal{H}(P_{Y(0),S(0)},P_{Y(1),S(1)}) = \mathcal{H}(P_{Y(0),S(0)})\times\mathcal{H}(P_{Y(1),S(1)})\neq \emptyset.
\end{equation}
Let $C(B)=\{s:\mathcal{Y}\times\{s\}\subseteq B\}$. $\mathcal{H}(P_{Y(d),S(d)})$ when combined data are used is:
\begin{align}\label{eq:marginals_joint_definition}
\left\{\delta\in\mathcal{P}^{\mathcal{Y}\times \mathcal{S}}:\begin{array}{ll}
     & \forall B\in\mathcal{C}(\mathcal{Y}\times \mathcal{S}), \\
     &\delta(B)\geq  \max\left\{
       P_O\big((Y,S)\in B,D=d\big),
       ess\sup_{Z} P_E\big(S\in C(B),D=d|Z\big)
   \right\}
\end{array}\right\}.
\end{align}

If the experimental data are not observed, then $\mathcal{H}(P_{Y(d),S(d)})$ is equivalent to:
\begin{align}\label{eq:marginals_observational_definition}
 \{\delta\in\mathcal{P}^{\mathcal{Y}\times\mathcal{S}}: \forall B\in\mathcal{C}(\mathcal{Y}\times\mathcal{S}), \enskip \delta(B) \geq P_O((Y,S)\in B,D=d) \}
\end{align}
\end{lemma}

\begin{proof}
The proof proceeds by extending arguments of \citet[Proposition 2.3]{beresteanu2012partial}. Recall that $ \Tilde{Z} = \mathbbm{1}[G=E] Z+\mathbbm{1}[G=O](sup\mathcal{Z}+1)\in \Tilde{\mathcal{Z}}$. Note that $\Tilde{Z} =Z$ when $G=E$ and $\Tilde{Z}$ equals a distinct constant when $G=O$. Therefore, Assumptions \ref{ass:rand_assign} and \ref{ass:ex_validity} hold if and only if $\Tilde{Z}\independent (Y(d),S(d))$ for all $d\in\{0,1\}$. Let $\Tilde{I}$ be the set of random elements $(E_1,E_2,E_3)$ such that $(E_1,E_2,E_3)\in \mathcal{Y}\times\mathcal{S}\times\Tilde{\mathcal{Z}}$ and $(E_1,E_2)\independent E_3$, i.e. that satisfy the two assumptions. Define the random set:
    \begin{align}
        (\mathbf{Y}_0,\mathbf{S}_0,\mathbf{Y}_1,\mathbf{S}_1) :=\begin{cases}
            \{(Y,S)\}\times\mathcal{Y}\times\mathcal{S}, \text{ if $(D,G)=(0,O)$}\\
            \mathcal{Y}\times\mathcal{S}\times\{(Y,S)\}, \text{ if $(D,G)=(1,O)$}\\
            \mathcal{Y}\times\{S\}\times\mathcal{Y}\times\mathcal{S}, \text{ if $(D,G)=(0,E)$}\\
            \mathcal{Y}\times\mathcal{S}\times\mathcal{Y}\times\{S\}, \text{ if $(D,G)=(1,E)$}\\
        \mathcal{Y}\times\mathcal{S}\times\mathcal{Y}\times\mathcal{S}, \text{ otherwise}
        \end{cases}.
    \end{align}

By definition of $(\mathbf{Y}_0,\mathbf{S}_0,\mathbf{Y}_1,\mathbf{S}_1)$, all information on $(Y(0),S(0),Y(1),S(1))$ in the combined data can be summarized by $(Y(0),S(0),Y(1),S(1))\in Sel((\mathbf{Y}_0,\mathbf{S}_0,\mathbf{Y}_1,\mathbf{S}_1))$. Also, by definition of the random set, $(\mathbf{Y}_0,\mathbf{S}_0,\mathbf{Y}_1,\mathbf{S}_1) =(\mathbf{Y}_0,\mathbf{S}_0)\times(\mathbf{Y}_1,\mathbf{S}_1)$.

By \Cref{lem:rectangular_rset}, $Sel((\mathbf{Y}_0,\mathbf{S}_0,\mathbf{Y}_1,\mathbf{S}_1)) =Sel((\mathbf{Y}_0,\mathbf{S}_0))\times Sel((\mathbf{Y}_1,\mathbf{S}_1))$. All information in the data and assumptions about $(Y(0),S(0),Y(1),S(1))$ can thus equivalently be expressed as $(Y(d),S(d),\Tilde{Z})\in Sel((\mathbf{Y}_d,\mathbf{S}_d),\Tilde{Z})\cap \Tilde{I}$ for $d\in\{0,1\}$. If Assumptions \ref{ass:rand_assign} and \ref{ass:ex_validity} hold, $(Y(d),S(d),\Tilde{Z})\in Sel((\mathbf{Y}_d,\mathbf{S}_d),\Tilde{Z})\cap \Tilde{I}\neq \emptyset$ for $d\in\{0,1\}$. Then:
\begin{align}\label{eq:marginals_set}
\begin{split}
     &\mathcal{H}(P_{Y(0),S(0)},P_{Y(1),S(1)}) \\
     &= \left\{(\delta_0,\delta_1)\in\left(\mathcal{P}^{\mathcal{Y}\times \mathcal{S}}\right)^2:\forall d\in\{0,1\}, \enskip \exists  (\upsilon_d,\varsigma_d,\Tilde{Z})\in Sel((\mathbf{Y}_d,\mathbf{S}_d),\Tilde{Z})\cap \Tilde{I} \text{ s.t. $\delta_{d}\eqd(\upsilon_d,\varsigma_d)$} \right\}\\
    &=\bigtimes_{d\in\{0,1\}} \left\{\delta_{d} \in\mathcal{P}^{\mathcal{Y}\times \mathcal{S}}:\exists  (\upsilon_d,\varsigma_d,\Tilde{Z})\in Sel((\mathbf{Y}_d,\mathbf{S}_d),\Tilde{Z})\cap \Tilde{I} \text{ s.t. $\delta_{d}\eqd(\upsilon_d,\varsigma_d)$}\right\}\\
    &=\mathcal{H}(P_{Y(0),S(0)})\times \mathcal{H}(P_{Y(1),S(1)})\neq\emptyset
\end{split}
\end{align}
where the first equality follows by definition, the second by observation, the third is by definition of $\mathcal{H}(P_{Y(d),S(d)})$, and the final since $(Y(d),S(d),\Tilde{Z})\in Sel((\mathbf{Y}_d,\mathbf{S}_d),\Tilde{Z})\cap \Tilde{I}\neq \emptyset$ for $d\in\{0,1\}$, showing \eqref{eq:identified_cartesian}.

By \citet[Theorem 2.1]{artstein1983distributions}, the distribution function $P((Y(d),S(d),\Tilde{Z}))\in\mathcal{P}^{\mathcal{Y\times\mathcal{S}\times\Tilde{\mathcal{Z}}}}$ characterizes a selection in $Sel((\mathbf{Y}_d,\mathbf{S}_d),\Tilde{Z})$ if and only if:
\begin{align}\label{eq:artstein_worst_case_lemma_combined}
   \forall B\in\mathcal{C}(\mathcal{Y}\times \mathcal{S}\times \Tilde{\mathcal{Z}}):\quad P((Y(d),S(d),\Tilde{Z})\in B) \geq P(((\mathbf{Y}_d,\mathbf{S}_d),\Tilde{Z})\subseteq B) 
\end{align}

By \citet[Theorem 2.33]{molchanov2018random}, \eqref{eq:artstein_worst_case_lemma_combined} is equivalent to:
\begin{align}\label{eq:artstein_worst_case_as}
       \forall B\in\mathcal{C}(\mathcal{Y}\times \mathcal{S}):\quad P((Y(d),S(d))\in B|\Tilde{Z}) \geq P((\mathbf{Y}_d,\mathbf{S}_d)\subseteq B|\Tilde{Z}) \text{ $P-$a.s.} 
\end{align}

Possible forms that $B$ can take are: 1) $B = \mathcal{Y}\times\mathcal{S}$;\footnote{The support of a random vector $X$ is the smallest closed set $\mathcal{X}$ such that $P(X\in\mathcal{X}) = 1$. Hence $\mathcal{Y}\times\mathcal{S}\in \mathcal{C}(\mathcal{Y}\times\mathcal{S})$.} 2) $B \subsetneq \mathcal{Y}\times\mathcal{S}$. For $B = \mathcal{Y}\times\mathcal{S}$, $P((\mathbf{Y}_d,\mathbf{S}_d)\subseteq B|\Tilde{Z}) = 1$ $P-$a.s. Note that the constraint imposed by the containment functional in this case is redundant when $P((Y(d),S(d),\Tilde{Z}))\in\mathcal{P}^{\mathcal{Y\times\mathcal{S}\times\Tilde{\mathcal{Z}}}}$, given that $P((Y(d),S(d))\in\mathcal{Y}\times\mathcal{S}|\Tilde{Z}) = 1$ $P-$a.s. by definition.

Consider the containment functional for $B\in\mathcal C(\mathcal{Y}\times\mathcal{S})$ when $B \subsetneq \mathcal{Y}\times\mathcal{S}$. If $D\neq d$, $(\mathbf{Y}_d,\mathbf{S}_d) = \mathcal Y\times\mathcal S$, so the random set can be a subset of $B$ only if $D=d$. Events $\{G=O,D=d\}$ and $\{G=E,D=d\}$ are mutually exclusive. Hence, for any closed $B\in\mathcal C(\mathcal{Y}\times\mathcal{S})$ with $B \subsetneq \mathcal{Y}\times\mathcal{S}$, $P-$a.s:
\begin{align*}
P\big((\mathbf{Y}_d,\mathbf{S}_d)\subseteq B|\Tilde{Z}\big) &=  P\big(G=O,D=d,(\mathbf{Y}_d,\mathbf{S}_d)\subseteq B|\Tilde{Z}\big)+P\big(G=E, D=d,(\mathbf{Y}_d,\mathbf{S}_d)\subseteq B|\Tilde{Z}\big)  \\
&=P\big(G=O,D=d,(Y,S)\in B|\Tilde{Z}\big)+P\big(G=E,D=d, \mathcal{Y}\times\{S\}\subseteq B|\Tilde{Z}\big)\\
&=P\big(G=O,D=d,(Y,S)\in B|\Tilde{Z}\big)+P\big(G=E,D=d,S\in C(B)|\Tilde{Z}\big).
\end{align*}
where the first line follows by the definition of the random set and the fact that events $\{G=O,D=d\}$ and $\{G=E,D=d\}$ are mutually exclusive, the second follows by definition of the random set, and the third by definition of $C(B)$.

Now, use the non-redundant case to characterize the constraints imposed by \eqref{eq:artstein_worst_case_as}. The distribution function $P((Y(d),S(d),\Tilde{Z}))\in\mathcal{P}^{\mathcal{Y\times\mathcal{S}\times\Tilde{\mathcal{Z}}}}$ characterizes a selection in  $Sel((\mathbf{Y}_d,\mathbf{S}_d),\Tilde{Z})$ if and only if $\forall B\in\mathcal{C}(\mathcal{Y}\times \mathcal{S})$ such that $B \subsetneq \mathcal{Y}\times\mathcal{S}$ $P-$a.s.:
\begin{align}
\begin{split}
   P\big((Y(d),S(d))\in B\big|\Tilde{Z})
   &\geq 
   P\big(G=O,D=d,(Y,S)\in B|\Tilde{Z}\big)
   +
   P\big(G=E,D=d,S\in C(B)|\Tilde{Z}\big).
\end{split}
\end{align}
To incorporate the fact that $\Tilde{Z}\independent (Y(d),S(d))$, intersect $Sel((\mathbf{Y}_d,\mathbf{S}_d),\Tilde{Z})\cap \Tilde{I}$ which yields $\forall B\in\mathcal{C}(\mathcal{Y}\times \mathcal{S})$ such that $B \subsetneq \mathcal{Y}\times\mathcal{S}$:

\begin{align}\label{eq:intersect_artstein}
\begin{split}
   P\big((Y(d),S(d))\in B\big)
   &\geq   ess\sup_{\Tilde{Z}} \left[\begin{array}{ll}
       &P\big(G=O,D=d,(Y,S)\in B|\Tilde{Z}\big)
   +\\
   &P\big(G=E,D=d,S\in C(B)|\Tilde{Z}\big)
   \end{array}\right]\\
   &= 
   \max\left\{
       P_O\big((Y,S)\in B,D=d\big),
       ess\sup_{Z} P_E\big(S\in C(B),D=d|Z\big)
   \right\},
\end{split}
\end{align}
where the equality follows by the definition of $\Tilde Z$. If $\Tilde Z\notin\mathcal{Z}$ (i.e. $G=O$), $P\big(G=E,D=d,S\in C(B)|\Tilde{Z}\big)=0$ and the expression reduces to $P_O\big((Y,S)\in B,D=d\big)$. If $\Tilde Z\in\mathcal{Z}$ (i.e. $G=E$), $P\big(G=O,D=d,(Y,S)\in B|\Tilde{Z}\big)=0$ and the expression reduces to $ess\sup_{Z} P_E\big(S\in C(B),D=d|Z\big)$.

Finally, note that when experimental data are not observed, $P(G=O)=1$ and thus $P\big(G=E,D=d,S\in C(B)|\Tilde{Z}\big)=0$. Thus, \eqref{eq:intersect_artstein} simplifies to:
\begin{align}\label{eq:intersect_artstein_observational}
\begin{split}
   P((Y(d),S(d))\in B)&\geq P_O\big((Y,S)\in B,D=d\big)
\end{split}
\end{align}
Thus, \eqref{eq:marginals_observational_definition} follows by \eqref{eq:artstein_worst_case_as} and \eqref{eq:intersect_artstein_observational}.

Sharpness follows by construction. For any $(P_{Y(0),S(0)},P_{Y(1),S(1)})\in\mathcal{H}(P_{Y(0),S(0)},P_{Y(1),S(1)})$ there exist $(Y(0),S(0),Y(1),S(1))$that are consistent with the data and assumptions such that $(Y(d),S(d))\eqd P_{Y(d),S(d)}$ for $d\in\{0,1\}$.
\end{proof}

\begin{lemma}\label{lem:marginals_y}
Let $\mathcal{H}(P_{Y(0),S(0)},P_{Y(1),S(1)})$, and $\mathcal{H}^O(P_{Y(0),S(0)},P_{Y(1),S(1)})$ be the identified sets for pairs of distributions $P_{Y(d),S(d)}$ for $d\in\{0,1\}$ when the experimental data are observed and unobserved, respectively. Let $\mathcal{H}(P_{Y(d)})$, $\mathcal{H}(P_{S(d)})$, $\mathcal{H}^O(P_{Y(d)})$, and $\mathcal{H}^O(P_{S(d)})$ be the identified sets for the corresponding marginals. Suppose only that Assumptions \ref{ass:rand_assign}, \ref{ass:ex_validity} hold. Then:
\begin{enumerate}[label = \roman*)] 
    \item  $\mathcal{H}^O(P_{Y(0)},P_{Y(1)})=\mathcal{H}(P_{Y(0)},P_{Y(1)})$.
    \item $\mathcal{H}(P_{Y(0)},P_{S(0)},P_{Y(1)},P_{S(1)})=\bigtimes_{d\in\{0,1\}}\Big(\mathcal{H}(P_{Y(d)})\times\mathcal H(P_{S(d)})\Big)$;
    \item $\mathcal{H}^O(P_{Y(0)},P_{S(0)},P_{Y(1)},P_{S(1)})=\bigtimes_{d\in\{0,1\}}\Big(\mathcal{H}^O(P_{Y(d)})\times\mathcal H^O(P_{S(d)})\Big)$;
\end{enumerate}
\end{lemma}
\begin{proof}

\textit{i)}

Let $\pi_{k}\delta$ denote the projections of a distribution function $\delta$ onto the $k$-th marginal. Observe that:
\begin{align}\label{eq:joint_set_rectangular}
\begin{split}
    \mathcal{H}(P_{Y(0)},P_{Y(1)}) = \left\{(\pi_1\delta_0,\pi_1\delta_1) \in(\mathcal P^\mathcal Y)^2:  (\delta_0,\delta_1)\in\mathcal{H}(P_{Y(0),S(0)},P_{Y(1),S(1)})\right\}\\
= \bigtimes_{d\in\{0,1\}}\left\{\pi_1\delta_d \in\mathcal P^\mathcal Y:  \delta_d\in\mathcal{H}(P_{Y(d),S(d)})\right\} = \bigtimes_{d\in\{0,1\}} \mathcal{H}(P_{Y(d)})
\end{split}
\end{align}
where the first equality is by definition of $\mathcal{H}(P_{Y(0)},P_{Y(1)})$, the second follows since by \Cref{lem:selections_combined} $\mathcal{H}(P_{Y(0),S(0)},P_{Y(1),S(1)}) = \mathcal{H}(P_{Y(0),S(0)})\times \mathcal{H}(P_{Y(1),S(1)})$, and the third is by definition of $\mathcal{H}(P_{Y(d)})$. By an analogous argument:
\begin{align}\label{eq:joint_set_rectangular_observational}
\begin{split}
    \mathcal{H}^O(P_{Y(0)},P_{Y(1)}) = \bigtimes_{d\in\{0,1\}} \mathcal{H}^O(P_{Y(d)})
\end{split}
\end{align}

Hence, it suffices to show that $\mathcal{H}(P_{Y(d)}) = \mathcal{H}^O(P_{Y(d)})$ for $d\in\{0,1\}$. 

To that end, fix any $d\in\{0,1\}$. I show that $\mathcal{H}(P_{Y(d)})\subseteq\mathcal{H}^O(P_{Y(d)})$ and $\mathcal{H}^O(P_{Y(d)})\subseteq\Tilde{\mathcal{H}} \subseteq \mathcal{H}(P_{Y(d)})$, where $\Tilde{\mathcal{H}} := \left\{\delta\in\mathcal P^\mathcal Y: \forall B\in\mathcal C(\mathcal Y), \enskip \delta(B)\geq P_O(Y\in B,D=d) \right\}$.

\underline{$\mathcal{H}(P_{Y(d)})\subseteq\mathcal{H}^O(P_{Y(d)})$}

Observe that:
\begin{align*}
       &\max\left\{
       P_O\big((Y,S)\in B,D=d\big),
       ess\sup_{Z} P_E\big(S\in C(B),D=d|Z\big)
   \right\}\geq P_O((Y,S)\in B,D=d).
\end{align*}
By \Cref{lem:selections_combined} then $\mathcal{H}(P_{Y(d),S(d)})\subseteq \mathcal{H}^O(P_{Y(d),S(d)})$, and therefore $\mathcal{H}(P_{Y(d)})\subseteq\mathcal{H}^O(P_{Y(d)})$.   

\underline{$\mathcal{H}^O(P_{Y(d)})\subseteq\Tilde{\mathcal{H}}$}

Fix an arbitrary $\delta\in\mathcal{H}^O(P_{Y(d)})$. Let $\bar \delta\in\mathcal{H}^O(P_{Y(d),S(d)})$ be such that $\delta = \pi_1\bar\delta$. By \Cref{lem:selections_combined}, for any $B\in\mathcal C(\mathcal Y)$:
\begin{align*}
\delta(B)
= \bar\delta(B\times\mathcal S) \geq P_O((Y,S)\in B\times\mathcal S,D=d) = P_O(Y\in B,D=d).
\end{align*}

Hence, $\mathcal{H}^O(P_{Y(d)})\subseteq\Tilde{\mathcal{H}}$.

\underline{$\Tilde{\mathcal{H}} \subseteq \mathcal{H}(P_{Y(d)})$.}

Fix an arbitrary $\delta_{Y(d)}\in \Tilde{\mathcal{H}}$. I construct a distribution $\delta_{Y(d),S(d)} \in \mathcal{H}(P_{Y(d),S(d)})$ whose $Y(d)$–marginal is $\delta_{Y(d)}$. For $d \in \{0,1\}$ and $B_S \in \mathcal{C}(\mathcal{S})$, define:
\begin{align}
\begin{split}
       & L_d(B_S):= \max\left\{P_O(S \in B_S, D=d),ess\sup_{Z} P_E(S \in B_S, D=d | Z) \right\}.
\end{split}
\end{align}

If Assumptions \ref{ass:rand_assign} and \ref{ass:ex_validity} hold, by \Cref{lem:selections_combined} there exists at least one joint distribution $\bar\delta \in \mathcal H\big(P_{Y(d),S(d)}\big) \subseteq \mathcal P^{\mathcal Y\times\mathcal S}$. By the lemma, for any $B\in\mathcal C(\mathcal Y\times\mathcal S)$:
\begin{equation}
\label{eq:H-ineq-bar-delta}
\bar\delta(B)
\geq
\max\left\{
       P_O\big((Y,S)\in B,D=d\big),
       ess\sup_{Z} P_E\big(S\in C(B),D=d|Z\big)
   \right\}
\end{equation}

Observe that $ \pi_{2}\bar\delta$ denotes the $S(d)$–marginal of $\bar\delta$. Fix any $B_S\in\mathcal C(\mathcal S)$. By \eqref{eq:H-ineq-bar-delta}:
\begin{align*}
 \pi_{2}\bar\delta(B_S)=\bar\delta(\mathcal{Y}\times B_S) \geq \max\left\{P_O(S \in B_S, D=d),ess\sup_{Z} P_E(S \in B_S, D=d | Z) \right\}= L_d(B_S).
\end{align*}

Therefore, there always exists a distribution $\delta_{S(d)}$ such that $\delta_{S(d)}(B_S)\geq L_d(B_S)$ for any $B_S\in\mathcal C(\mathcal S)$. Fix any such $\delta_{S(d)} = \pi_{2}\bar\delta$. For $B_Y\subseteq\mathcal Y$, and $B_S\subseteq\mathcal S$, define finite signed measures:
\begin{align*}
    \nu_{d,Y}(B_Y):=\delta_{Y(d)}(B_Y)-P_O(Y\in B_Y,D=d),\qquad
\nu_{d,S}(B_S):=\delta_{S(d)}(B_S)-P_O(S\in B_S,D=d).
\end{align*}

Since $\delta_{Y(d)}\in \Tilde{\mathcal H}$ and $\delta_{S(d)}(B_S) \geq L_d(B_S)\geq P_O(S\in B_S,D=d)$, $\nu_{d,Y}$ and $\nu_{d,S}$ are nonnegative finite measures. Moreover, $\nu_{d,Y}(\mathcal Y) = \nu_{d,S}(\mathcal S) =  1 - P_O(D=d)$, so they have the same total mass. Since $\nu_{d,Y}$ and $\nu_{d,S}$ are non-negative finite measures on Polish spaces with the same total mass, there exists a coupling $\nu_d$ on $\mathcal Y\times\mathcal S$ with marginals $\nu_{d,Y}$ and $\nu_{d,S}$. Define for any $B\in\mathcal C(\mathcal Y\times\mathcal S)$:
\begin{align}
\delta_{Y(d),S(d)}(B):=P_O((Y,S)\in B,D=d) + \nu_d(B).
\end{align}

By construction, $\delta_{Y(d),S(d)}$ has marginals $\delta_{Y(d)}$ and $\delta_{S(d)}$:
\begin{align*}
\pi_1\delta_{Y(d),S(d)} &= P_O(Y\in\cdot,D=d) + \nu_{d,Y} = \delta_{Y(d)},\\
\pi_2\delta_{Y(d),S(d)} &= P_O(S\in\cdot,D=d) + \nu_{d,S} = \delta_{S(d)}.
\end{align*}

Again, by construction, for any $B\in\mathcal{C(\mathcal Y\times\mathcal S)}$:
\begin{align}\label{eq:couping_ineq_1}
    \delta_{Y(d),S(d)}(B) \geq P_O((Y,S) \in B, D=d)
\end{align}
Moreover, by definition $\mathcal{Y}\times C(B)\subseteq B$ for any $B\in\mathcal{C(\mathcal Y\times\mathcal S)}$. Therefore:
\begin{align}\label{eq:couping_ineq_2}
    \begin{split}
        \delta_{Y(d),S(d)}(B)&\geq\delta_{Y(d),S(d)}\big(\mathcal Y\times C(B)\big) = 
\delta_{S(d)}\big(C(B)\big)\\
&\geq \max\left\{P_O(S \in C(B), D=d),ess\sup_{Z} P_E(S \in C(B), D=d | Z) \right\}\\
&\geq ess\sup_{Z} P_E(S \in C(B), D=d | Z)
    \end{split}
\end{align}
where the first inequality is by $\mathcal{Y}\times C(B)\subseteq B$, the equality is by $\pi_2\delta_{Y(d),S(d)}  = \delta_{S(d)}$, the second inequality by $\delta_{S(d)}(B_S)\geq L_d(B_S)$ for any $B_S\in\mathcal C(\mathcal S)$, and the final inequality is by observation. Combining \eqref{eq:couping_ineq_1} and \eqref{eq:couping_ineq_2}, for any $B\in\mathcal{C(\mathcal Y\times\mathcal S)}$:
\begin{align}\label{eq:coupling_inequality}
    \delta_{Y(d),S(d)}(B) \geq
\max\left\{
       P_O\big((Y,S)\in B,D=d\big),
       ess\sup_{Z} P_E\big(S\in C(B),D=d|Z\big)
   \right\}
\end{align}

By \Cref{lem:selections_combined}, $\delta_{Y(d),S(d)}\in\mathcal{H}(P_{Y(d),S(d)})$. Therefore, $\delta_{Y(d)}\in\mathcal{H}(P_{Y(d)})$ so $\Tilde{\mathcal{H}} \subseteq \mathcal{H}(P_{Y(d)})$.

Combining all three statements, $\mathcal{H}(P_{Y(d)})\subseteq\mathcal{H}^O(P_{Y(d)})$ and $\mathcal{H}^O(P_{Y(d)})\subseteq\Tilde{\mathcal{H}} \subseteq \mathcal{H}(P_{Y(d)})$. This implies $\mathcal{H}(P_{Y(d)})=\Tilde{\mathcal{H}}=\mathcal{H}^O(P_{Y(d)})$. By \eqref{eq:joint_set_rectangular} and \eqref{eq:joint_set_rectangular_observational} then:

\begin{align}
\begin{split}
    \mathcal{H}(P_{Y(0)},P_{Y(1)}) = \bigtimes_{d\in\{0,1\}} \mathcal{H}(P_{Y(d)})= \bigtimes_{d\in\{0,1\}} \mathcal{H}^O(P_{Y(d)})=\mathcal{H}^O(P_{Y(0)},P_{Y(1)}).
\end{split}
\end{align}

\textit{ii)}

By definition:
\begin{align}\label{eq:cartesian_proof}
    \begin{split}
&\mathcal{H}(P_{Y(0)},P_{S(0)},P_{Y(1)},P_{S(1)})\\
&=\left\{(\pi_1\delta_0, \pi_2\delta_0,\pi_1\delta_1,\pi_2\delta_1) \in(\mathcal P^\mathcal Y)^2\times (\mathcal P^\mathcal S)^2:  (\delta_0,\delta_1)\in\mathcal{H}(P_{Y(0),S(0)},P_{Y(1),S(1)})\right\}\\
&= \bigtimes_{d\in\{0,1\}}\left\{(\pi_1\delta_d,\pi_2\delta_d) \in\mathcal P^\mathcal Y\times \mathcal P^\mathcal S:  \delta_d\in\mathcal{H}(P_{Y(d),S(d)})\right\}  = \bigtimes_{d\in\{0,1\}} \mathcal{H}(P_{Y(d)},P_{S(d)})
    \end{split}
\end{align}
where the second equality follows since by \Cref{lem:selections_combined} $\mathcal{H}(P_{Y(0),S(0)},P_{Y(1),S(1)}) = \mathcal{H}(P_{Y(0),S(0)})\times \mathcal{H}(P_{Y(1),S(1)})$ and the third is by definition.

It is thus sufficient to show that $\mathcal{H}(P_{Y(d)},P_{S(d)})=\mathcal{H}(P_{Y(d)})\times \mathcal{H}(P_{S(d)})$ for $d\in\{0,1\}$. Fix an arbitrary $d\in\{0,1\}$. Since a Cartesian product of projections of a set is always a superset of the set itself, it is immediate that $\mathcal{H}(P_{Y(d)},P_{S(d)})\subseteq \mathcal{H}(P_{Y(d)})\times \mathcal{H}(P_{S(d)})$. It remains to show that  $\mathcal{H}(P_{Y(d)})\times \mathcal{H}(P_{S(d)})\subseteq\mathcal{H}(P_{Y(d)},P_{S(d)})$. Let $\dtilde{H}:=\left\{\delta\in\mathcal P^\mathcal S: \forall B\in\mathcal C(\mathcal S), \enskip  
\delta(B)\geq L_d(B) \right\}$. Recall from the proof of \textit{i)} that $\Tilde{\mathcal{H}} = \left\{\delta\in\mathcal P^\mathcal Y: \forall B\in\mathcal C(\mathcal Y), \enskip  
\delta(B)\geq P_O(Y\in B,D=d) \right\}$. The proof also shows: 1) $\dtilde{\mathcal{H}}\neq\emptyset$ when Assumptions \ref{ass:rand_assign} and \ref{ass:ex_validity} hold; and 2) for any $\delta_{Y(d)}\in\tilde{\mathcal{H}}$ and any $\delta_{S(d)}\in{\dtilde{\mathcal{H}}}$ there exists a joint distribution $\delta_{Y(d),S(d)}\in\mathcal{H}
(P_{Y(d),S(d)})$ such that $\pi_1\delta_{Y(d),S(d)}=\delta_{Y(d)}$ 
and $\pi_2\delta_{Y(d),S(d)}=\delta_{S(d)}$ on $\mathcal C(\mathcal Y\times \mathcal S)$. Hence:
\begin{align}\label{eq:htilde_cartesian}
\Tilde{\mathcal{H}}\times\dtilde{\mathcal{H}}\subseteq \left\{(\pi_1\delta,\pi_2\delta)\in\mathcal P^{\mathcal Y}\times\mathcal P^{\mathcal S}:\delta \in \mathcal H(P_{Y(d),S(d)})\right\} = \mathcal{H}(P_{Y(d)},P_{S(d)})
    \end{align}
where the equality is by definition. Recall also that $\mathcal{H}(P_{Y(d)})=\Tilde{\mathcal{H}}$, by the proof of \textit{i)}. It thus remains to show that $\dtilde{\mathcal{H}} = \mathcal{H}(P_{S(d)})$. Since for any $\delta_{S(d)}\in \dtilde{\mathcal{H}}$ there exists  $\delta_{Y(d),S(d)}\in\mathcal H(P_{Y(d),S(d)})$ such that $\pi_2 \delta_{Y(d),S(d)}=\delta_{S(d)}$, then $\dtilde{\mathcal{H}} \subseteq \mathcal{H}(P_{S(d)})$. For the converse, note that for any $\delta_{S(d)}\in  \mathcal{H}(P_{S(d)})$ there exists $\delta_{Y(d),S(d)}\in \mathcal H(P_{Y(d),S(d)})$ such that $\pi_2\delta_{Y(d),S(d)} = \delta_{S(d)}$. By \Cref{lem:selections_combined}, for that $\delta_{Y(d),S(d)}$ then $\forall B\in\mathcal C(\mathcal Y\times \mathcal S)$:
\begin{align}\label{}
    \delta_{Y(d),S(d)}(B) \geq
\max\left\{
       P_O\big((Y,S)\in B,D=d\big),
       ess\sup_{Z} P_E\big(S\in C(B),D=d|Z\big)
   \right\}.
\end{align}

Fix any $B_S\in\mathcal C(\mathcal S)$. Then by definition of $C(B_S)$:
\begin{align}
\begin{split}
       \delta_{Y(d),S(d)}(\mathcal Y\times B_S) &= \delta_{S(d)}(B_S)\\
       &\geq  \max\left\{ess\sup_{Z}P_E(S\in B_S,D=d|Z),P_O(S\in B_S,D=d) \right\}=L_d(B_S).
\end{split}
\end{align}
Therefore, $\delta_{S(d)}\in\dtilde{\mathcal H}$ and thus $\mathcal{H}(P_{S(d)}) \subseteq\dtilde{\mathcal{H}} $.

It then follows that $\mathcal{H}(P_{S(d)}) =\dtilde{\mathcal{H}} $. Since $\mathcal{H}(P_{Y(d)})=\Tilde{\mathcal{H}}$, by \eqref{eq:htilde_cartesian} then $\mathcal{H}(P_{Y(d)})\times\mathcal{H}(P_{S(d)}) \subseteq \mathcal{H}(P_{Y(d)},P_{S(d)})$ and therefore $\mathcal{H}(P_{Y(d)})\times\mathcal{H}(P_{S(d)}) = \mathcal{H}(P_{Y(d)},P_{S(d)})$. The result is then by \eqref{eq:cartesian_proof}.

\textit{iii)}

The result follows from the steps in the proof of \textit{ii)} by redefining:
$L_d(B_S):=  P_O(S \in B_S, D=d).$
\end{proof}

\begin{lemma}\label{lem:removing_joint_independence}
    Let $\Tilde{Z} = \mathbbm{1}[G=E] Z+\mathbbm{1}[G=O](sup\mathcal{Z}+1)$ and $\Bar{I}$ be the set of random elements $(E_1,E_2)$ such that $(E_1,E_2)\in \mathcal{S}\times\Tilde{\mathcal{Z}}$ and $E_1\independent E_2$. Define random sets $\mathbf{Y}_d$ and $\mathbf{S}_d$ for $d\in\{0,1\}$:
\begin{align}
           \mathbf{Y}_d := \begin{cases}
            \{Y\}, \text{ if $(D,G)=(d,O)$}\\
            \mathcal{Y}, \text{ otherwise}
        \end{cases}, \enskip
        \mathbf{S}_d := \begin{cases}
            \{S\}, \text{ if $(D,G)\in\{(d,E),(d,O)\}$}\\
            \mathcal{S}, \text{ otherwise}
        \end{cases}.
\end{align}
Then $\mathcal{H}^{EV/RA}(m,\gamma) =  \Tilde{\mathcal{H}}^{EV/RA}(m,\gamma)$ for:
\begin{align*}
         \mathcal{H}^{EV/RA}(m,\gamma)&:= \left\{\begin{array}{ll} (m,\gamma)\in\mathcal{M}\times \left(\mathcal{P}^{\mathcal{S}}\right)^2: \forall d\in\{0,1\}, \enskip \exists(\varsigma_d,\Tilde{Z})\in Sel((\mathbf{S}_d,              \Tilde{Z}))\cap \Bar{I},\\
    \exists\upsilon_d\in Sel(\mathbf{Y}_d),\enskip (\upsilon_d,\varsigma_d)\independent \Tilde{Z}, \enskip \gamma_d \eqd \varsigma_d, \enskip m_d(\varsigma_d) = E_O[\upsilon_d|\varsigma_d] \text{ a.s.}
\end{array}\right\}\\
         \Tilde{\mathcal{H}}^{EV/RA}(m,\gamma)&:= \left\{\begin{array}{ll} (m,\gamma)\in\mathcal{M}\times \left(\mathcal{P}^{\mathcal{S}}\right)^2: \forall d\in\{0,1\}, \enskip \exists(\varsigma_d,\Tilde{Z})\in Sel((\mathbf{S}_d,              \Tilde{Z}))\cap \Bar{I},\\
    \exists\upsilon_d\in Sel(\mathbf{Y}_d), \enskip \gamma_d \eqd \varsigma_d, \enskip m_d(\varsigma_d) = E_O[\upsilon_d|\varsigma_d] \text{ a.s.}
\end{array}\right\}
\end{align*}
\end{lemma}

\begin{proof}
$\mathcal{H}^{EV/RA}(m,\gamma)\subseteq \Tilde{\mathcal{H}}^{EV/RA}(m,\gamma)$ since the former imposes a strict superset of conditions on $(m,\gamma)$. For the converse, fix an arbitrary $(m,\gamma)\in\Tilde{\mathcal{H}}^{EV/RA}(m,\gamma)$ and $d\in\{0,1\}$, and let $(\upsilon_d,\varsigma_d)$ be the corresponding selections in $Sel(\mathbf{Y}_d)\times \left[Sel((\mathbf{S}_d,\Tilde{Z}))\cap \Bar{I}\right]$ that generate $m_d$ and $\gamma_d$. To prove that $(m,\gamma)\in\mathcal{H}^{EV/RA}(m,\gamma)$, I show that there exist $(\upsilon'_d,\varsigma'_d)$ such that: 1) $(\varsigma'_d,\Tilde{Z})\in Sel((\mathbf{S}_d,\Tilde{Z}))\cap \Bar{I}$ and $\upsilon'_d\in Sel(\mathbf{Y}_d)$; 2) $ m_d(\varsigma_d') = E_O[\upsilon_d'|\varsigma_d']$ and  $\varsigma_d'\eqd \gamma_d$; and 3) $(\upsilon_d',\varsigma_d')\independent \Tilde{Z}$.

Let $P_{\upsilon_d',\varsigma_d'}$ be a distribution such that $\forall z\in\Tilde{\mathcal{Z}}$, $P_{\upsilon_d',\varsigma_d'}(\cdot|\varsigma_d'=s, \Tilde{Z}=z) = P_{\upsilon_d,\varsigma_d}(\cdot|\varsigma_d = s, G=O)$ $\forall s\in\mathcal{S}$, and $P_{\varsigma_d'}(\cdot|\Tilde{Z}=z) = P_{\varsigma_d}(\cdot)$. Note that these conditions fully specify $P_{\upsilon_d',\varsigma_d'}$. I first show that there exist $(\varsigma_d',\Tilde{Z})\in Sel((\mathbf{S}_d,\Tilde{Z}))\cap \Bar{I}$ and $\upsilon_d'\in Sel(\mathbf{Y}_d)$ such that $(\upsilon_d',\varsigma_d')\eqd P_{\upsilon_d',\varsigma_d'}$. I then show that $(\upsilon_d',\varsigma_d')$ fulfill conditions  2) $ m_d(\varsigma_d') = E_O[\upsilon_d'|\varsigma_d']$ and  $\varsigma_d'\eqd \gamma_d$; and 3) $(\upsilon_d',\varsigma_d')\independent \Tilde{Z}$. Recall that, as in the proof of \Cref{lem:selections_combined}, by \citet[Theorem 2.1]{artstein1983distributions} and \citet[Theorem 2.33]{molchanov2018random}, $(\upsilon_d,\varsigma_d,\Tilde{Z})\in Sel(\mathbf{Y}_d)\times Sel(\mathbf{S}_d)\times\{\Tilde{Z}\}$ if and only if $\forall B\in\mathcal{C}(\mathcal{Y}\times\mathcal{S})$ $P-$a.s.:
\begin{align}\label{eq:joint_artstein}
  P_{\upsilon_d,\varsigma_d}(B|\Tilde{Z}) \geq P\big(G=O,D=d,(Y,S)\in B|\Tilde{Z}\big) +  P\big(G=E,D=d,S\in C(B)|\Tilde{Z}\big)
\end{align}

where $C(B)=\{s:\mathcal{Y}\times\{s\}\subseteq B\}$. Since $(\upsilon_d,\varsigma_d,\Tilde{Z})\in  Sel(\mathbf{Y}_d)\times \left[Sel(\mathbf{S}_d)\times\{\Tilde{Z}\}\cap \Bar{I}\right]$, it must be that $P_{\varsigma_d}(\cdot|\Tilde{Z}) = P_{\varsigma_d}(\cdot)$ $P-$a.s. Then for any $ B\in\mathcal{C}(\mathcal{Y}\times\mathcal{S})$ $P-$a.s.:
\begin{align}\label{eq:generated_distribution_1}
    \begin{split}
        P_{\upsilon_d',\varsigma_d'}(B)=P_{\upsilon_d',\varsigma_d'}(B| \Tilde{Z})  = P_{\upsilon_d,\varsigma_d}(B|G=O)\geq  P_O((Y,S)\in B,D=d)
    \end{split}
\end{align}

where the first equality is by $P_{\upsilon_d',\varsigma_d'}(\cdot| \Tilde{Z}) = P_{\upsilon_d',\varsigma_d'}(\cdot) $ which follows by $P_{\varsigma_d'}(\cdot|\Tilde{Z}=z) = P_{\varsigma_d}(\cdot)$ and $P_{\upsilon_d',\varsigma_d'}(\cdot|\varsigma_d'=s, \Tilde{Z}=z) = P_{\upsilon_d,\varsigma_d}(\cdot|\varsigma_d = s, G=O)$ $\forall s\in\mathcal{S}$ and $\forall z\in\Tilde{\mathcal{Z}}$. The second is by $P_{\upsilon_d',\varsigma_d'}(\cdot|\varsigma_d'=s, \Tilde{Z}=z) = P_{\upsilon_d,\varsigma_d}(\cdot|\varsigma_d = s, G=O)$, $P_{\varsigma_d'}(\cdot|\Tilde{Z}=z) = P_{\varsigma_d}(\cdot)$ and $P_{\varsigma_d}(\cdot)=P_{\varsigma_d}(\cdot|\Tilde{Z})$ $P-$a.s. The inequality is by \eqref{eq:joint_artstein} fixing $\Tilde{Z}$ to the value corresponding to $G=O$. For any $B\in\mathcal{C}(\mathcal{Y} \times\mathcal{S})$ also:

\begin{align}\label{eq:generated_distribution_intermediate}
\begin{split}
    P_{\upsilon_d',\varsigma_d'}(B) &\geq  P_{\upsilon_d',\varsigma_d'}(\mathcal Y\times C(B))= P_{\varsigma_d'}(C(B))\\
    &= P_{\varsigma_d}(C(B))= P_{\varsigma_d}(C(B)|\Tilde{Z}) =   P_{\upsilon_d,\varsigma_d}(\mathcal Y\times C(B)|\Tilde{Z})\\
    &\geq  P\big(G=O,D=d,(Y,S)\in \mathcal Y\times C(B)|\Tilde{Z}\big) +  P\big(G=E,D=d,S\in C(B)|\Tilde{Z}\big)
\end{split}
\end{align}

where the first inequality is because $\mathcal Y\times C(B)\subseteq B$, the first equality is by definition of a marginal distribution, the third is by $P_{\varsigma_d'}(\cdot|\Tilde{Z})= P_{\varsigma_d}(\cdot)$, the fourth is by $P_{\varsigma_d}(\cdot|\Tilde{Z}) = P_{\varsigma_d}(\cdot)$ $P-$a.s, the fifth is by definition of a marginal distribution, and the inequality is by \eqref{eq:joint_artstein}. Since \eqref{eq:generated_distribution_intermediate} holds for almost any $\Tilde{Z}=\Tilde{z}$ and hence for almost any $Z=z$ when $G=E$:
\begin{align}\label{eq:generated_distribution_2}
\begin{split}
    P_{\upsilon_d',\varsigma_d'}(B)
    &\geq  ess\sup_Z  P_E(S\in C(B),D=d|Z).
\end{split}
\end{align}

By \eqref{eq:generated_distribution_1} and \eqref{eq:generated_distribution_2} then $\forall B\in\mathcal{C}(\mathcal{Y}\times\mathcal{S})$:
\begin{align}\label{eq:restrictions_varsigma_upsilon_prime}
P_{\upsilon_d',\varsigma_d'}(B)&\geq  \max\left\{
       P_O\big((Y,S)\in B,D=d\big),
       ess\sup_{Z} P_E\big(S\in C(B),D=d|Z\big)
   \right\}
\end{align}

Then recall that $\Tilde{I}$ is the set of random elements $(E_1,E_2,E_3)$ such that $(E_1,E_2,E_3)\in \mathcal{Y}\times\mathcal{S}\times\Tilde{\mathcal{Z}}$ and $(E_1,E_2)\independent E_3$. By \Cref{lem:selections_combined}, $\exists(\upsilon_d',\varsigma_d',\Tilde{Z})\in Sel(\mathbf{Y}_d)\times Sel(\mathbf{S}_d)\times\{\Tilde{Z}\}\cap\Tilde{I}$ such that $(\upsilon_d',\varsigma_d')\eqd P_{\upsilon_d',\varsigma_d'}$ if and only if $\forall B\in\mathcal{C}(\mathcal{Y}\times\mathcal{S})$ \eqref{eq:restrictions_varsigma_upsilon_prime} holds. Since $Sel(\mathbf{Y}_d)\times Sel(\mathbf{S}_d)\times\{\Tilde{Z}\}\cap\Tilde{I}\subseteq Sel(\mathbf{Y}_d)\times \left[Sel(\mathbf{S}_d)\times\{\Tilde{Z}\}\cap \Bar{I}\right]$, there also exist $(\varsigma_d',\Tilde{Z})\in Sel((\mathbf{S}_d,\Tilde{Z}))\cap \Bar{I}$ and $\upsilon_d'\in Sel(\mathbf{Y}_d)$ such that $(\upsilon_d',\varsigma_d')\eqd P_{\upsilon_d',\varsigma_d'}$.

Since $P_{\upsilon_d',\varsigma_d'}(\cdot|\varsigma_d'=s) = P_{\upsilon_d,\varsigma_d}(\cdot|\varsigma_d = s, G=O)$, then $ m_d(\varsigma_d') = E_O[\upsilon_d'|\varsigma_d']$ a.s. Because $P_{\varsigma_d'}(\cdot|\Tilde{Z}) = P_{\varsigma_d}(\cdot) = P_{\varsigma_d'}(\cdot) $, $\varsigma_d'\eqd \varsigma_d\eqd \gamma_d$. Finally, because $(\upsilon_d',\varsigma_d',\Tilde{Z})\in Sel(\mathbf{Y}_d)\times Sel(\mathbf{S}_d)\times\{\Tilde{Z}\}\cap\Tilde{I}$, $(\upsilon_d',\varsigma'_d)\independent \Tilde{Z}$. Therefore, if $(m,\gamma)\in \Tilde{\mathcal{H}}^{EV/RA}(m,\gamma)$, then  $(m,\gamma)\in \mathcal{H}^{EV/RA}(m,\gamma)$.
\end{proof}

\begin{lemma}\label{lem:prop_score}
 Let $\Tilde{Z}$, $\Bar{I}$, and $\mathbf{S}_d$ be defined as in \Cref{lem:removing_joint_independence}. For any $\gamma_d$ that is a distribution of a selection in $Sel((\mathbf{S}_d,\Tilde{Z}))\cap \Bar{I}$, there exists a $\gamma_d-$integrable function $\pi_{\gamma_d}$ such that for any measurable set $B\in\mathcal{B}(\mathcal{S})$:
\begin{align}
    P_O(S\in B,D=d) = \int_B \pi_{\gamma_d} d\gamma_d.
\end{align}

Then for the propensity score functional $\pi_{\gamma_d} := \frac{d P_O(S,D=d)}{d\gamma_d}$ and any $\varsigma_d'\in Sel((\mathbf{S}_d,\Tilde{Z}))\cap \Bar{I}$ with $\varsigma_d'\eqd \gamma_d'$: 
\begin{align}
P_O(D=d|\varsigma_d') = \pi_{\gamma_d'}(\varsigma_d')\text{ a.s.}
\end{align}
\end{lemma}
\begin{proof}
Fix any $\gamma_d$ such that $\exists \varsigma_d \in Sel((\mathbf{S}_d,\Tilde{Z}))\cap \Bar{I}$ and $\gamma_d\eqd \varsigma_d$. Then for any $B\in\mathcal{B}(\mathcal{S})$:

$$P_O(\varsigma_d\in B,D=d)\leq P_O(\varsigma_d\in B)= P(\varsigma_d\in B)= \gamma_d(B)$$

where the inequality is by observation. For the first equality, recall that $I$ is a set of random elements $(E_1,E_2)\in\mathcal{S}\times \Tilde{\mathcal{Z}}$, and observe that  $(\varsigma_d,\Tilde{Z})\in I$. Therefore, $\varsigma_d\independent G$, by definition of $\Tilde{Z}$. For the second equality note that $\varsigma_d\eqd \gamma_d$.

Next, note that $P_O(\varsigma_d\in B,D=d) = P_O(S\in B,D=d)$ for any measurable set $B\in\mathcal{B}(\mathcal{S})$ because $\varsigma_d\in Sel(\mathbf{S}_d)$ and $P_O(\mathbf{S}_d = \{S\},D=d)=1$.  
Therefore, $P_O(S\in B,D=d)\leq \gamma_d(B)$ for any $B\in\mathcal{B}(\mathcal{S})$. Hence, $P_O(S,D=d)$ is absolutely continuous with respect to $\gamma_d$. Then, by the Radon-Nikodym theorem there exists a measurable function $ \pi_{\gamma_d}$ such that for any measurable set $B\in\mathcal{B}(\mathcal{S})$:
\begin{align*}
    P_O(S\in B,D=d) = \int_{B} \pi_{\gamma_d}d\gamma_d
\end{align*}

and $\pi_{\gamma_d} = d P_O(S,D=d)/d\gamma_d$. 

Therefore, for any $\gamma_d'$ that is a distribution of a selection in  $Sel((\mathbf{S}_d,\Tilde{Z}))\cap \Bar{I}$, there exists $\pi_{\gamma_d'} = d P_O(S,D=d)/d\gamma_d'$  $ P_O(S\in B,D=d)  = \int_{B} \pi_{\gamma_d'}d\gamma_d'$ for any measurable set $B\in\mathcal{B}(\mathcal{S})$. Hence, also $\pi_{\gamma_d'}(s) = P_O(D=d|\varsigma'_d=s)$, $\gamma_d'-$a.e. $s\in\mathcal{S}$, which concludes the proof.
\end{proof}

\begin{lemma}\label{lem:almost_sure_implication}
Let $\mathcal{Y}$ be a compact set. If there exists $d\in\{0,1\}$ such that $V_O[Y|S,D=d]>0$ $P-$a.s., then $E_O[Y|S,D=d]\in (\inf\mathcal{Y},\sup\mathcal{Y})$ $P-$a.s.
\end{lemma}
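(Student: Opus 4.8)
The plan is to reduce the statement to an elementary fact about bounded random variables: a variable supported in $[a,b]$ whose mean equals an endpoint must be degenerate at that endpoint, and hence has zero variance. Throughout I would work under the conditional law $P_O(\cdot\mid D=d)$ (implicitly $P_O(D=d)>0$, otherwise the conditional objects are undefined), abbreviate $y_L:=\inf\mathcal{Y}$ and $y_U:=\sup\mathcal{Y}$ (both finite since $\mathcal{Y}$ is compact), and write $m(S):=E_O[Y\mid S,D=d]$. Because the target $m(S)\in(y_L,y_U)$ is an intersection of two one-sided conditions, I would prove $m(S)<y_U$ a.s. and $m(S)>y_L$ a.s. separately, the two arguments being mirror images.

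For the upper bound, note that $Y\le y_U$ a.s. gives $m(S)\le y_U$ a.s. Consider the $\sigma(S)$-measurable event $A:=\{m(S)=y_U\}$. By the law of iterated expectations, $E_O[(y_U-Y)\mathbbm{1}_A\mid D=d]=E_O[(y_U-m(S))\mathbbm{1}_A\mid D=d]=0$, since $y_U-m(S)=0$ on $A$. As $(y_U-Y)\mathbbm{1}_A\ge 0$, the standard fact that a nonnegative integrand with zero integral vanishes a.s. forces $Y=y_U$ a.s. on $A$. Consequently $(Y-m(S))^2=0$ a.s. on $A$, so $V_O[Y\mid S,D=d]=0$ on $A$. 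The hypothesis $V_O[Y\mid S,D=d]>0$ a.s. then yields $P_O(A\mid D=d)=0$, i.e. $m(S)<y_U$ a.s. The lower bound follows identically, replacing $y_U-Y\ge 0$ with $Y-y_L\ge 0$ and $A$ with $\{m(S)=y_L\}$. Intersecting the two conclusions gives $m(S)\in(y_L,y_U)$ a.s.

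Alternatively, one can dispatch both directions at once through the conditional Popoviciu/Bhatia--Davis inequality $V_O[Y\mid S,D=d]\le (y_U-m(S))(m(S)-y_L)$ a.s., which holds because $(y_U-Y)(Y-y_L)\ge 0$ a.s. and expands to $-\,V_O[Y\mid S,D=d]+(y_U-m(S))(m(S)-y_L)\ge 0$ after taking $E_O[\cdot\mid S,D=d]$. Positivity of the left-hand side then forces both nonnegative factors on the right to be strictly positive, which is exactly $m(S)\in(y_L,y_U)$. There is essentially no obstacle here: the only point requiring care is the measure-theoretic vanishing step on the $\sigma(S)$-measurable set $A$, together with being explicit that every ``$P$-a.s.'' is understood with respect to the distribution of $S$ under $P_O(\cdot\mid D=d)$; the rest is routine.
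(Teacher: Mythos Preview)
Your primary argument is correct and essentially the same approach as the paper's: both proceed by contraposition, fixing the event where the conditional mean equals $\sup\mathcal{Y}$, and showing that a bounded variable whose mean hits the endpoint must be degenerate there, forcing zero conditional variance. Your execution via the $\sigma(S)$-measurable indicator $\mathbbm{1}_A$ and the single integral $E_O[(y_U-Y)\mathbbm{1}_A\mid D=d]=0$ is somewhat more streamlined than the paper's phrasing in terms of a Borel set $B$ and a case split on $\{Y=\sup\mathcal{Y}\}$ versus $\{Y<\sup\mathcal{Y}\}$, but the underlying idea is identical.

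Your alternative via the conditional Bhatia--Davis inequality $V_O[Y\mid S,D=d]\le (y_U-m(S))(m(S)-y_L)$ is a genuinely different and more compact route that the paper does not use: it handles both endpoints simultaneously in one inequality rather than two separate contradiction arguments, at the cost of invoking (or quickly deriving) the named inequality.
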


\begin{proof}

I prove that $E_O[Y|S,D=d]<\sup\mathcal{Y}$ $P-$a.s. and $E_O[Y|S,D=d]>\inf\mathcal{Y}$ $P-$a.s follows by a symmetric argument. Since $\mathcal{Y}$ is a compact set, both $\sup\mathcal{Y}$ and $\inf\mathcal{Y}$ are finite.

By contraposition suppose that $P_O(E_O[Y|S,D=d]\geq\sup\mathcal{Y})>0$. Then by definition of $\mathcal{Y}$, $P_O(E_O[Y|S,D=d]=\sup\mathcal{Y})>0$, so there exists a Borel set $B\in \mathcal{B}(\mathcal{S})$ with $P_O(S\in B|D=d)>0$ such that $E_O[Y|S\in B,D=d] = \sup\mathcal{Y}$. Now, I show that this implies $P_O(Y=\sup\mathcal{Y}|S\in B,D=d) = 1$. Suppose not, so that $P_O(Y=\sup\mathcal{Y}|S\in B,D=d) < 1$, then:
    \begin{align}\label{eq:aS(d)erivation}
    \begin{split}
         E_O[Y|S\in B,D=d] =&E_O[Y|Y = \sup\mathcal{Y},S\in B,D=d]P_O(Y=\sup\mathcal{Y}|S\in B,D=d) +  \\
        &E_O[Y|Y \neq \sup\mathcal{Y},S\in B,D=d]P_O( Y\neq\sup\mathcal{Y}|S\in B,D=d)\\
        =&E_O[Y|Y = \sup\mathcal{Y},S\in B,D=d]P_O(Y=\sup\mathcal{Y}|S\in B,D=d) +  \\
        &E_O[Y|Y < \sup\mathcal{Y},S\in B,D=d]P_O( Y<\sup\mathcal{Y}|S\in B,D=d)\\
        = &\sup\mathcal{Y} P_O(Y=\sup\mathcal{Y}|S\in B,D=d)+\\
        &E_O[Y|Y < \sup\mathcal{Y},S\in B,D=d]P_O( Y<\sup\mathcal{Y}|S\in B,D=d)\\
        < &\sup\mathcal{Y}
        \end{split}
    \end{align}
    where the first equality is by LIE, second is by definition of $\mathcal{Y}$, third by observation, and the fourth by $E_O[Y|Y < \sup\mathcal{Y},S\in B,D=d]<\sup\mathcal{Y}$ and $P_O(Y=\sup\mathcal{Y}|S\in B,D=d) < 1$. By assumption, $E_O[Y|S\in B,D=d] = \sup \mathcal{Y}$. Then \eqref{eq:aS(d)erivation} yields a contradiction, showing that $P_O(Y=\sup\mathcal{Y}|S\in B,D=d) = 1$. But then $V_O[Y|S\in B,D=d]=0$ and $P_O(S\in B|D=d)>0$, so $P_O(V_O[Y|S,D=d]=0)>0$ which contradicts $V_O[Y|S,D=d]>0$ $P-$a.s. Thus $V_O[Y|S,D=d]>0$ $P-$a.s. implies $ E_O[Y|S,D=d]<\sup \mathcal{Y}$ $P-$a.s.

\end{proof}

\begin{proposition}\label{prop:selection_assumptions_without_experiment}
       Let Assumptions \ref{ass:rand_assign} and \ref{ass:ex_validity} hold, and assume latent unconfoundedness (LUC): $Y(d)\independent D|S(d),G=O$ for $d\in\{0,1\}$. Suppose that $P_O(S(d)\in\cdot|D\neq d)\ll P_O(S(d)\in\cdot|D= d)$. Let $\mathcal{H}^{WC}(\tau)$ be the identified set for $\tau$ when latent unconfoundedness is not maintained. 
      \begin{enumerate}[label = \roman*)]
        \item Suppose $\mathcal{Y}$ is a bounded set and that the observed data distribution $P_O(Y,S,D)$ is such that $S$ is not a perfect predictor so $V_O[Y|S,D=d]>0$ $P-$a.s. for some $d\in\{0,1\}$. Then $\mathcal{H}^{O}(\tau)\subsetneq \mathcal{H}^{WC}(\tau).$
       \item If the observed data distribution $P_O(Y,S,D)$ is such that $E_O[Y|S,D=d]$ is a trivial measurable function for all $d\in\{0,1\}$, then $\tau$ is point-identified by the observational data, and $\mathcal{H}(\tau) = \mathcal{H}^{O}(\tau)$.
       \end{enumerate}
\end{proposition}
\begin{proof}

\textit{i)} $\mathcal{Y}$ is closed by definition. Since it is bounded, it is a compact set. Then $\sup\mathcal{Y}<\infty$ and $\inf\mathcal{Y}>-\infty$. Using arguments of \citet{manski1990nonparametric}, the sharp upper bound of $\mathcal{H}^{WC}(\tau)$ is:
    \begin{align}
        \tau\leq E_O[Y(2D-1)] + \sup \mathcal{Y}P_O(D=0) - \inf \mathcal{Y}P_O(D=1) = \sup\mathcal{H}^{WC}(\tau).
    \end{align}

   Suppose that $V_O[Y|S,D=1]>0$ $P-$a.s. Fix $d=1$ and the case for $d=0$ follows by symmetric arguments. By \Cref{lem:almost_sure_implication}, $V_O[Y|S,D=d]>0$ $P-$a.s. implies $ E_O[Y|S,D=d]<\sup \mathcal{Y}$ $P-$a.s. If there exists $d\in\{0,1\}$ s.t. $V_O[Y|S,D=d]>0$ $P-$a.s., then it must be that for every Borel set $B\in \mathcal{B}(\mathcal{S})$ with $P_O(S\in B|D=d)>0$ we have $ E_O[Y|S\in B,D=d]<\sup \mathcal{Y}$. Under LUC then:
        \begin{align}\label{eq:mean_Y(d)ecomposition}
    \begin{split}
        E_O[Y(d)|D\neq d]  &= E_O[E_O[Y(d)|S(d),D\neq d]|D\neq d]\\
        &= E_O[E_O[Y(d)|S(d),D= d]|D\neq d] \\
        &= E_O[E_O[Y|S,D= d]|D\neq d]\\
        &< \sup \mathcal{Y}
    \end{split}
    \end{align}

  where the first line is by LIE, the second by LUC, the third by definition, 
and the fourth because $E_O[Y\mid S=s,D=d]$ satisfies 
$E_O[Y\mid S=s,D=d]<\sup\mathcal Y$ $P_O(S\in\cdot\mid D=d)$-a.s. by 
\Cref{lem:almost_sure_implication}, and absolute continuity of 
$P_O(S(d)\in\cdot\mid D\neq d)$ with respect to $P_O(S\in\cdot\mid D=d)$ 
implies $E_O[Y\mid S(d),D=d]<\sup\mathcal Y$ $P_O(\cdot\mid D\neq d)$-a.s. Then under LUC:
    \begin{align}\label{eq:luc_inequality}
    \begin{split}
        E[Y(d)] = E_O[Y\mathbbm{1}[D=d]]+E[Y(d)|D\neq d]P_O(D\neq d)\\
        <E_O[Y\mathbbm{1}[D=d]]+\sup\mathcal{Y}P_O(D\neq d).
    \end{split}
    \end{align}

Therefore, under LUC:
   \begin{align*}
        \tau &= E[Y(1)-Y(0)]\\
        & = E_O[YD]+E_O[Y(1)|D= 0]P_O(D=0) - E_O[Y(1-D)] - E_O[Y(0)|D= 1]P_O(D=1)\\
        &< E_O[Y(2D-1)] +\sup \mathcal{Y}P_O(D=0) - \inf \mathcal{Y}P_O(D=1) = \sup \mathcal{H}^{WC}(\tau)
    \end{align*}
where the inequality follows by \eqref{eq:luc_inequality}. Thus $\sup\mathcal{H}^{O}(\tau)<\sup\mathcal{H}^{WC}(\tau)$. So there must exist a point in $\mathcal{H}^{WC}(\tau)$ which is not contained in $\mathcal{H}^{O}(\tau)$. Conclude that $  \mathcal{H}^{O}(\tau)\subsetneq \mathcal{H}^{WC}(\tau)$.

\textit{ii)} Suppose that for every $d\in\{0,1\}$ $E_O[Y|S,D=d]$ is a trivial measurable function. Hence there exists a $y_d\in\mathcal{Y}$ such that $E_O[Y|S,D=d] = y_d $ $P-$a.s. Then:
    \begin{align}
    \begin{split}
        E_O[Y(d)|D\neq d] &= E_O[E_O[Y(d)|S(d),D\neq d]|D\neq d] \\
                 &= E_O[E_O[Y(d)|S(d),D= d]|D\neq d]     \\
                 &=E_O[E_O[Y|S,D= d]|D\neq d] \\
        &=y_d 
    \end{split}
    \end{align}
    where the final line follows since $E_O[Y|S,D=d] = y_d $ $P-$a.s. Given that $y_d $ is identified by the data, then $E_O[Y(d)]$ is identified for every $d\in\{0,1\}$, so $\tau$ is too. It is also immediate that $\mathcal{H}(\tau) = \mathcal{H}^{O}(\tau)$ since for every $d\in\{0,1\}$ and any $\gamma_d\in\mathcal{P}^{\mathcal{S}}$, we have that $E[Y(d)] = \int_{\mathcal{S}} y_d d\gamma_d(s) = y_d $. Since experimental data only affect the feasible $\gamma_d$, the result follows.
    
\end{proof}

\begin{lemma}\label{lem:rel_to_surrogacy}
    Suppose Assumption \ref{ass:rand_assign} holds. Assume that there is perfect experimental compliance so $Z=D|G=E$ $P-$a.s. and define conditions:
\begin{enumerate}[label=C.\arabic*]
    \item (Surrogacy) $Y\independent D|S,G=E$ \label{cond:surrogacy};
    \item (Comparability) $Y\independent G|S$ \label{cond:comparability}.
\end{enumerate}

Then:
\begin{enumerate}[label=\roman*)]
    \item \ref{cond:surrogacy} implies $E_E[Y(1)|S(1)=s]=E_E[Y(0)|S(0)=s]$ for all $s\in\mathcal{S}$;
    \item \ref{cond:surrogacy} and \ref{ass:ex_validity} imply $E_g[Y(1)|S(1)=s]=E_{g'}[Y(0)|S(0)=s]$ for all $s\in\mathcal{S}$ and $g,g'\in\{O,E\}$;
    \item \ref{cond:comparability} implies $E_O[Y|S=s]=E_E[Y(1)|S(1)=s]P_E(D=1|S=s)+E_E[Y(0)|S(0)=s]P_E(D=0|S=s)$ for all $s\in\mathcal{S}$;
    \item \ref{cond:comparability} and \ref{ass:ex_validity} imply $E_O[Y|S=s]=E_g[Y(1)|S(1)=s]P_E(D=1|S=s)+E_{g'}[Y(0)|S(0)=s]P_E(D=0|S=s)$ for all $s\in\mathcal{S}$ and $g,g'\in\{O,E\}$;
    \item \ref{cond:surrogacy} and \ref{cond:comparability} imply $E_O[Y|S=s] = E_E[Y(d)|S(d)=s]$ for all $s\in\mathcal{S}$;
    \item \ref{cond:surrogacy}, \ref{cond:comparability} and \ref{ass:ex_validity} imply $E_O[Y|S=s] = E_g[Y(d)|S(d)=s]$ and hence $E_g[Y(d)|S(d)=s]=\sum_{d'\in\{0,1\}}E_O[Y(d')|S(d')=s,D=d']P_O(D=d'|S=s)$, for all $s\in\mathcal{S}$ and $g\in\{O,E\}$.
 \end{enumerate}
\end{lemma}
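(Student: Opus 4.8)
The plan is to first extract the single consequence of Assumption \ref{ass:rand_assign} together with perfect compliance that drives every part, and then assemble the six claims from it by elementary conditioning arguments (all understood conditional on $X$, kept implicit per the paper's convention). Under perfect compliance $Z=D$ almost surely given $G=E$, so Assumption \ref{ass:rand_assign}, namely $Z\independent (Y(1),Y(0),S(1),S(0))\mid G=E$, becomes $D\independent (Y(1),Y(0),S(1),S(0))\mid G=E$. Because $Y=Y(d)$ and $S=S(d)$ on the event $\{D=d\}$, this independence yields the key identity
\begin{align*}
E_E[Y\mid S=s,D=d]=E_E[Y(d)\mid S(d)=s,D=d]=E_E[Y(d)\mid S(d)=s],
\end{align*}
for $d\in\{0,1\}$ and all $s\in\mathcal{S}$; that is, the $D=d$ conditioning inside the experimental group can be dropped once the potential-outcome form is exposed. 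I would establish this first, since all subsequent steps reuse it.

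For part \textit{i)}, surrogacy \ref{cond:surrogacy} states $Y\independent D\mid S,G=E$, hence $E_E[Y\mid S=s,D=1]=E_E[Y\mid S=s,D=0]$; applying the key identity to each side gives $E_E[Y(1)\mid S(1)=s]=E_E[Y(0)\mid S(0)=s]$. Part \textit{ii)} then adds Assumption \ref{ass:ex_validity}, under which the conditional law of $(Y(d),S(d))$ given $G$ does not depend on $G$, so $E_g[Y(d)\mid S(d)=s]$ is common across $g\in\{O,E\}$; combining this group-invariance with part \textit{i)} delivers the cross-group, cross-treatment equality.

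For part \textit{iii)}, comparability \ref{cond:comparability}, i.e. $Y\independent G\mid S$, gives $E_O[Y\mid S=s]=E_E[Y\mid S=s]$, and a law-of-iterated-expectations split of $E_E[Y\mid S=s]$ over $D$, followed by the key identity on each summand, produces the stated weighted decomposition. Part \textit{iv)} again invokes Assumption \ref{ass:ex_validity} to replace each $E_E[Y(d)\mid S(d)=s]$ by $E_g[Y(d)\mid S(d)=s]$. For parts \textit{v)} and \textit{vi)}, I combine \textit{i)} and \textit{iii)}: by \textit{i)} the two conditional means $E_E[Y(1)\mid S(1)=s]$ and $E_E[Y(0)\mid S(0)=s]$ coincide, so factoring this common value out of the decomposition in \textit{iii)} and using $P_E(D=1\mid S=s)+P_E(D=0\mid S=s)=1$ collapses the right-hand side to $E_E[Y(d)\mid S(d)=s]$; Assumption \ref{ass:ex_validity} then upgrades $E_E$ to $E_g$ for \textit{vi)}.

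The only genuine subtlety, and the main obstacle, is the justification of the key identity: perfect compliance converts the instrument-based randomization of Assumption \ref{ass:rand_assign} into randomization of the realized treatment $D$ within the experiment, so that the $D=d$ conditioning may be removed. Everything else is bookkeeping with the law of iterated expectations and the group-invariance supplied by Assumption \ref{ass:ex_validity}.
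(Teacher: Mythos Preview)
Your proposal is correct and mirrors the paper's proof essentially step for step: the paper also uses perfect compliance plus Assumption \ref{ass:rand_assign} to obtain $E_E[Y\mid S,D=d]=E_E[Y(d)\mid S(d),D=d]=E_E[Y(d)\mid S(d)]$, then combines this with surrogacy for \textit{i)}, comparability plus LIE for \textit{iii)}, Assumption \ref{ass:ex_validity} for the group-invariant upgrades in \textit{ii)}, \textit{iv)}, \textit{vi)}, and derives \textit{v)} from \textit{i)} and \textit{iii)}. The only cosmetic difference is that you isolate the key identity once at the outset, whereas the paper invokes it inline in each part.
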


\begin{proof}
    \textit{i)} Write for any $d\in\{0,1\}$:
            \begin{align}
        \begin{split}
       E_E[Y|S] &= E_E[Y|S,D=d] = E_E[Y(d)|S(d),D=d] = E_E[Y(d)|S(d)]
        \end{split}
    \end{align}
    where the first equality is by surrogacy, second is by definition, and third is by random assignment and perfect compliance. 

    \textit{ii)} Under Assumption \ref{ass:ex_validity}, $E_E[Y(d)|S(d)]= E[Y(d)|S(d)]$. The result then follows from \textit{i)}.

    \textit{iii)} Write:
    \begin{align}
        \begin{split}
            E_O[Y|S=s] &= E_E[Y|S=s] \\
            &= E_E[Y(1)|S(1)=s,D=1]P_E(D=1|S=s)\\
            &+E_E[Y(0)|S(0)=s,D=0]P_E(D=0|S=s)\\
        &=E_E[Y(1)|S(1)=s]P_E(D=1|S=s)+E_E[Y(0)|S(0)=s]P_E(D=0|S=s)
        \end{split}
    \end{align}
    where the first equality is by comparability, the second is by LIE and definitions of $Y$ and $S$, and the third is by random assignment and perfect compliance.
    
    \textit{iv)} Under Assumption \ref{ass:ex_validity}, $E_E[Y(d)|S(d)]= E[Y(d)|S(d)]$. The result then follows from \textit{iii)}.

    \textit{v)} Immediate from \textit{i)} and \textit{iii)}.

    \textit{vi)} Immediate from \textit{ v)} under Assumption \ref{ass:ex_validity}.

\end{proof}

\begin{lemma}\label{lem:min_max_selectors}
Let Assumptions \ref{ass:rand_assign} and \ref{ass:ex_validity} hold. Suppose that $\mathcal{S}$ is a finite set. Fix $\gamma' \in \mathcal{H}(\gamma)$. Define pointwise data bounds:
\begin{align}\label{eq:pointwise_data_bounds}
\begin{split}
    L_d(s) &:= \mu_d(s)\pi_{\gamma'_d}(s) + \inf\mathcal{Y}\left(1 - \pi_{\gamma'_d}(s)\right),\\
    U_d(s) &:= \mu_d(s)\pi_{\gamma'_d}(s) + \sup\mathcal{Y}\left(1 - \pi_{\gamma'_d}(s)\right).
\end{split}
\end{align}
\begin{enumerate}[label=\roman*)]
    \item Suppose Assumption \ref{ass:LMIV} holds. Define monotone envelopes:
    \begin{align}\label{eq:monotone_envelopes}
    \begin{split}
        m_d^{L,\gamma'}(s) &:= \sup_{s' \leq s} L_d(s'), \qquad
        m_d^{U,\gamma'}(s) := \inf_{s' \geq s} U_d(s'),
    \end{split}
    \end{align}
    where $s' \leq s$ denotes the product order. Then:
    \begin{align}
        m^{LB}_{\gamma'} := (m_0^{U,\gamma'}, m_1^{L,\gamma'}), \qquad m^{UB}_{\gamma'} := (m_0^{L,\gamma'}, m_1^{U,\gamma'})
    \end{align}
    are minimal and maximal selectors of $\mathcal{H}(m|\gamma')$ with respect to $T$.

    \item Suppose Assumption \ref{ass:TI} holds. Define:
    \begin{align}
    \begin{split}
        m^{L,\gamma'}(s) &:= \max_{d \in \{0,1\}} L_d(s), \qquad
        m^{U,\gamma'}(s) := \min_{d \in \{0,1\}} U_d(s),
    \end{split}
    \end{align}
    and:
    \begin{align}
    \begin{split}
        m^{TI,L,\gamma'}(s) &:= m^{L,\gamma'}(s)\mathbbm{1}[\gamma'_1(s) \geq \gamma'_0(s)] + m^{U,\gamma'}(s)\mathbbm{1}[\gamma'_1(s) < \gamma'_0(s)],\\
        m^{TI,U,\gamma'}(s) &:= m^{L,\gamma'}(s)\mathbbm{1}[\gamma'_1(s) < \gamma'_0(s)] + m^{U,\gamma'}(s)\mathbbm{1}[\gamma'_1(s) \geq \gamma'_0(s)].
    \end{split}
    \end{align}
    Then:
    \begin{align}
        m^{LB}_{\gamma'} := (m^{TI,L,\gamma'}, m^{TI,L,\gamma'}), \qquad m^{UB}_{\gamma'} := (m^{TI,U,\gamma'}, m^{TI,U,\gamma'})
    \end{align}
    are minimal and maximal selectors of $\mathcal{H}(m|\gamma')$ with respect to $T$.
\end{enumerate}
\end{lemma}

\begin{proof}
\textit{i)} Fix $\gamma' \in \mathcal{H}(\gamma)$ such that $\mathcal{H}(m|\gamma') \neq \emptyset$. By \Cref{thm:tractable_set}, for each $d \in \{0,1\}$, any $m \in \mathcal{H}(m|\gamma')$ satisfies $m_d(s) \in [L_d(s), U_d(s)]$ for all $s \in \mathcal{S}$. Under Assumption \ref{ass:LMIV}, $m_d$ is nondecreasing in the product order. Hence for any $s' \leq s$, $m_d(s) \geq m_d(s') \geq L_d(s')$, so $m_d(s) \geq \sup_{s' \leq s} L_d(s') = m_d^{L,\gamma'}(s)$. Similarly, for any $s' \geq s$, monotonicity implies $m_d(s) \leq m_d(s') \leq U_d(s')$, so $m_d(s) \leq \inf_{s' \geq s} U_d(s') = m_d^{U,\gamma'}(s)$.

To show $m_d^{L,\gamma'}$ and $m_d^{U,\gamma'}$ are feasible, note that both are nondecreasing in the product order by construction: if $s \leq s''$, then $\{s': s' \leq s\} \subseteq \{s': s' \leq s''\}$, so $m_d^{L,\gamma'}(s) \leq m_d^{L,\gamma'}(s'')$, and similarly $\{s': s' \geq s\} \supseteq \{s': s' \geq s''\}$, so $m_d^{U,\gamma'}(s) \leq m_d^{U,\gamma'}(s'')$. Since $\mathcal{H}(m|\gamma') \neq \emptyset$, pick any feasible $\tilde{m}_d$. Then $m_d^{L,\gamma'}(s) \leq \tilde{m}_d(s) \leq m_d^{U,\gamma'}(s)$ for all $s \in \mathcal{S}$, which implies $m_d^{L,\gamma'}(s) \leq m_d^{U,\gamma'}(s)$. Moreover, $L_d(s) \leq m_d^{L,\gamma'}(s)$ and $m_d^{U,\gamma'}(s) \leq U_d(s)$ for all $s$, so both envelopes satisfy the pointwise data restrictions. Hence $m_d^{L,\gamma'}, m_d^{U,\gamma'} \in \mathcal{M}^A$ and satisfy the constraints, so $(m_0^{U,\gamma'}, m_1^{L,\gamma'}), (m_0^{L,\gamma'}, m_1^{U,\gamma'}) \in \mathcal{H}(m|\gamma')$.

Since $T(m,\gamma') = \sum_{s \in \mathcal{S}} m_1(s)\gamma'_1(s) - \sum_{s \in \mathcal{S}} m_0(s)\gamma'_0(s)$ is nondecreasing in $m_1(s)$ and nonincreasing in $m_0(s)$ for each $s$, and $\gamma'_d(s) \geq 0$, $m^{LB}_{\gamma'}$ minimizes $T$ over $\mathcal{H}(m|\gamma')$ and $m^{UB}_{\gamma'}$ maximizes it.

\textit{ii)} Fix $\gamma' \in \mathcal{H}(\gamma)$ such that $\mathcal{H}(m|\gamma') \neq \emptyset$. Under Assumption \ref{ass:TI}, $m_1 = m_0 =: m$. By \Cref{thm:tractable_set}, $m(s) \in [L_d(s), U_d(s)]$ for both $d \in \{0,1\}$ and all $s \in \mathcal{S}$. Taking the intersection, $m(s) \in [m^{L,\gamma'}(s), m^{U,\gamma'}(s)]$ for all $s \in \mathcal{S}$. Since $m^{TI,L,\gamma'}(s), m^{TI,U,\gamma'}(s) \in [m^{L,\gamma'}(s), m^{U,\gamma'}(s)]$ for all $s\in\mathcal{S}$ by construction, it follows that $m^{LB}_{\gamma'}, m^{UB}_{\gamma'} \in \mathcal{H}(m\mid \gamma')$. Under Assumption \ref{ass:TI}:
\begin{align}
    T(m,\gamma') = \sum_{s \in \mathcal{S}} m(s) (\gamma'_1(s) - \gamma'_0(s)).
\end{align}
For each $s$, if $\gamma'_1(s) \geq \gamma'_0(s)$, then $T$ is nondecreasing in $m(s)$, so the minimum is attained at $m^{L,\gamma'}(s)$ and maximum at $m^{U,\gamma'}(s)$. If $\gamma'_1(s) < \gamma'_0(s)$, then $T$ is nonincreasing in $m(s)$, so the minimum is attained at $m^{U,\gamma'}(s)$ and maximum at $m^{L,\gamma'}(s)$. This yields $m^{TI,L,\gamma'}$ as the minimizer and $m^{TI,U,\gamma'}$ as the maximizer.
\end{proof}

\end{document}